\newcommand{\lprint}[1]{{\text{\bf\tt >[#1]<\ }}\label{#1}}
\renewcommand{\lprint}[1]{\label{#1}}
\theoremstyle{plain}                
\newtheorem{thm}{Theorem}[section]
\newtheorem{lem}[thm]{Lemma}
\newtheorem{prop}[thm]{Proposition}
\newtheorem{cor}[thm]{Corollary}
\theoremstyle{definition}           
\newtheorem{defn}[thm]{Definition}
\newtheorem{exam}[thm]{Example}
\newtheorem{prob}[thm]{Problem}
\newtheorem{sas}[thm]{Standing Assumption}
\theoremstyle{remark}               
\newtheorem{rem}{Remark}
\numberwithin{equation}{section}
\newcommand{\R}{{\mathbb R}}
\newcommand{\N}{{\mathbb N}}
\newcommand{\PP}{{\mathbb P}}
\newcommand{\QQ}{{\mathbb Q}}
\newcommand{\EE}{{\mathbb E}}
\newcommand{\FF}{{\mathcal F}}
\newcommand{\GG}{{\mathcal G}}
\newcommand{\HH}{{\mathcal H}}
\newcommand{\DD}{{\mathcal D}}
\newcommand{\MM}{{\mathcal M}}
\newcommand{\TT}{{\mathcal T}}
\newcommand{\KK}{{\mathcal K}}
\newcommand{\KKB}{{\mathbb K}}
\newcommand{\FFF}{{\mathbb F}}
\renewcommand{\SS}{{\mathcal S}}
\renewcommand{\AA}{{\mathcal A}}
\newcommand{\scl}[2]{\langle #1,#2 \rangle} 
\newcommand{\sclb}[2]{\left\langle #1,#2 \right\rangle} 
\newcommand{\argmax}{\operatorname{argmax}}
\newcommand{\argmin}{\operatorname{argmin}}
\newcommand{\set}[1]{\left\{#1\right\}} 
\newcommand{\sets}[2]{\set{#1\,:\,#2}} 
\newcommand{\inds}[1]{ {\mathbf 1}_{\set{#1}}} 
\newcommand{\ind}[1]{ {\mathbf 1}_{{#1}}} 
\newcommand{\seq}[1]{\set{#1_n}_{n\in\N}}
\newcommand{\eps}{\varepsilon}
\newcommand{\ld}{\lambda}
\newcommand{\gm}{\gamma}
\newcommand{\esssup}{{\mathrm {esssup}}}
\newcommand{\essinf}{{\mathrm {essinf}}}
\newcommand{\dd}[1]{{  d}#1} 
\newcommand{\rn}[2]{\frac{\dd{#1}}{\dd{#2}}} 
\newcommand{\el}{{\mathbb L}} 
\newcommand{\lzer}{\el^0}
\newcommand{\lone}{\el^1}
\newcommand{\linf}{\el^{\infty}}
\newcommand{\linfd}{(\linf)^*} 
\newcommand{\pd}{\partial_2}
\newcommand{\ea}[1]{\begin{eqnarray*} #1 \end{eqnarray*}}
\newcommand{\sint}[1]{[\!\![#1]\!\!]} 
\newcommand{\sinto}[1]{[\!\![#1)\!\!)} 
\newcommand{\EN}{{\mathcal E}} 
\newcommand{\XX}{{\mathcal X}} 
\newcommand{\YY}{{\mathcal Y}} 
\newcommand{\yq}{ Y^{\QQ}} 
\newcommand{\prf}[1]{ ( #1 )_{t\in [0,T]}} 
\newcommand{\qrpt}{\rn{(\QQ|_{\FF_t})^r}{(\PP|_{\FF_t})}} 
\newcommand{\ym}{\YY^{\MM}} 
\newcommand{\yd}{\YY^{\DD}} 
\newcommand{\yh}{\hat{Y}^{\QQ}}
\newcommand{\ynn}{Y^{(n)}}
\newcommand{\qfs}{\QQ|_{\FF_s}}
\newcommand{\qft}{\QQ|_{\FF_t}}
\newcommand{\ei}[2]{\EE{\int_0^{#1} #2}}
\newcommand{\uu}{\underline{U}}
\newcommand{\ou}{\overline{U}}
\newcommand{\uv}{\underline{V}}
\newcommand{\ov}{\overline{V}}
\newcommand{\eit}[1]{\EE\left[\int_0^T {#1}\,\mu(\dd{t})\right]}
\newcommand{\hq}{\hat{\QQ}}
\newcommand{\hqy}{\hq^{y}}
\newcommand{\qn}{{{\QQ}^{(n)}}}
\newcommand{\qnn}{\{\qn\}_{n\in\N}}
\newcommand{\yqn}{Y^{\qn}}
\newcommand{\yqy}{Y^{\hqy}}
\newcommand{\qe}{\QQ^{\eps}}
\newcommand{\yqe}{Y^{\qe}}
\newcommand{\norm}[1]{{||#1||}}
\newcommand{\oi}{\overline{I}}
\newcommand{\hc}{\hat{c}}
\newcommand{\xpi}{X^{x,H}}
\newcommand{\hH}{\hat{H}}
\newcommand{\FFt}{\FF_t}
\newcommand{\RR}{{\mathcal R}}
\newcommand{\uc}[1]{\eit{U(t,#1(t))}}
\newcommand{\vq}[2]{\eit{V(t,#1 Y^{#2}_t)}}
\begin{document}


\title[Random Endowment]{Optimal consumption from investment and random endowment in
incomplete semimartingale markets}


\author{Ioannis Karatzas}
\address{Ioannis Karatzas\\ Departments of Mathematics and Statistics\\
Columbia University\\ New York\\ NY 10027}
\email{ik@math.columbia.edu}


\author{Gordan \v Zitkovi\' c}
\address{Gordan \v Zitkovi\' c\\ Department of Statistics\\
Columbia University\\ New York\\ NY 10027}
\email{gordanz@stat.columbia.edu}

\subjclass{Primary 91A09, 90A10; secondary 90C26. }
\date{\today}
\keywords{utility maximization, random endowment, incomplete
markets, convex duality, stochastic processes, finitely-additive
measures}


\begin{abstract} We consider the problem of maximizing expected utility
from consumption in a constrained incomplete semimartingale market
with a random endowment process, and establish a general existence
and uniqueness result using techniques from convex duality.  The
notion of asymptotic elasticity of Kramkov and Schachermayer is
extended to the time-dependent case. By imposing no smoothness
requirements on the utility function in the temporal argument, we
can treat both pure consumption and combined consumption/terminal
wealth problems, in a common framework. To make the duality
approach possible, we provide a detailed characterization of the
enlarged dual domain which is reminiscent of the enlargement of
$\lone$ to its topological bidual $\linfd$, a space of
finitely-additive measures. As an application, we treat the case
of a constrained It\^ o-process market-model.
\end{abstract}


\maketitle


\section{Introduction}\lprint{Intro}

Both modern and classical  theories of economic behavior use
utility functions to describe the amount of  ``satisfaction'' of
financial agents depending on their wealth or consumption rate.
 Starting with an initial endowment,
an agent is faced with the problem of distributing
 wealth among financial assets with different degrees of uncertainty.
 If the
market is arbitrage-free, the agent can never ``beat the market'',
but may still invest in such a way as to maximize expected
utility. A considerable body of literature
 has been devoted to this subject. First to
consider the utility maximization problem in  continuous-time
stochastic financial market models was  Merton in \cite{Mer69},
\cite{Mer71}. He used a strong assumption (usually not justified
in practice) that stock-prices are governed by Markovian dynamics
with constant co\" efficients. In this way he could use the
methods of stochastic programming and in particular, the
Bellman-Hamilton-Jacobi equation of dynamic programming. More
recently, a ``martingale'' approach to the problem in complete
It\^ o-process markets was introduced by Pliska \cite{Pli86},
Karatzas, Lehoczky and Shreve \cite{KarLehShr87} and Cox and Huang
\cite{CoxHua89}, \cite{CoxHua91}. They related the marginal
utility  from the terminal wealth of the optimal portfolio to the
density of the (unique) martingale measure, using powerful
convex-duality techniques. Difficulties with this approach arise
in incomplete markets. The main idea here is to use the convex
nature of the problem, to formulate and solve a dual variational
problem, and then proceed as in the complete case. In
discrete-time and on a finite probability space, the problem was
studied by He and Pearson \cite{HePea91}, and   in a
continuous-time model by  of G.-L. Xu in his doctoral dissertation
\cite{Xu90},  by He and Pearson \cite{HePea91a} and by Karatzas,
Lehoczky, Shreve and Xu \cite{KarLehShrXu91}. In the paper
\cite{KraSch99}, Kramkov and Schachermayer solve the problem in
the context of a general incomplete semimartingale financial
market. They show that a necessary and sufficient condition for
the existence of an optimal solution is {\em reasonable asymptotic
elasticity} of the utility function. This is an analytic condition
on the behavior of the utility function at infinity, which
excludes certain pathological situations. These authors also show
that the set of densities of local martingale measures is too
small to host the solutions of the dual problem. Thus, they
enlarge it to a suitably chosen set $\YY$ of supermartingales, in
a manner  reminiscent of enlarging $L^1$ to its topological bidual
$(L^{\infty})^*$. Although these supermartingales cannot be used
directly as pricing rules for derivative securities, Kramkov and
Schachermayer show this is possible under an appropriate change of
num\' eraire.

When, in addition to initial wealth, the agent faces an uncertain
random intertemporal endowment, the situation becomes technically
much more demanding and the gap between complete and incomplete
markets even more apparent. In the complete market setting, the
entire uncertain endowment can be ``hedged away'' in the market,
and the problem becomes equivalent to the one where the entire
endowment process is replaced by its present value, in the form of
an augmented initial wealth. A self-contained treatment of this
situation, in It\^ o-process models for financial markets can
be found in Section 4.4 of the monograph by Karatzas and Shreve
\cite{KarShr98}. An otherwise complete market with random
endowment, where the incompleteness is introduced through
prohibition of borrowing against future income, is dealt with in
\cite{ElkJea98}. In incomplete markets, several authors consider
this problem in various degrees of generality. We mention Cuoco
who deals with a cone-constrained It\^ o-process market with
random endowment in \cite{Cuo97} - he attacks directly the primal
problem circumventing the duality approach altogether, at the cost
of rather strict restrictions on the utility function. A
definitive solution to the problem of maximizing of utility from
terminal wealth in incomplete (though not constrained in a more
general way) semimartingale markets with random endowment is
offered in \cite{CviSchWan99}. The main contribution of that paper
is the introduction of finitely-additive measures into the realm
of optimal stochastic control problems encountered in mathematical
finance. The essential difference between
utility maximization with and without random endowment is probably
best described by the authors of \cite{CviSchWan99}:
\begin{quote}{\em `` it was not important in the analysis of
\cite{KraSch99} where the `singular mass of $\hat{\QQ}$ has
disappeared to'. In the present paper this becomes very important
\ldots [it] acts on the accumulated random endowment and can be
located in $(\linf)^*$''.}\end{quote} We finally mention
\cite{Sch00} as an extensive survey of the optimal investment
theory.

This paper strives to complement the existing results
 in several ways.
First, we incorporate inter-temporal consumption in the
optimization problem. We are dealing with an agent investing in an
incomplete market, where prices are modelled by an arbitrary
semimartingale with right-continuous and left-limited paths. From
the present moment to some finite time horizon $T$, our agent is
not only deciding how to manage a portfolio by dynamically
readjusting the positions in various financial assets, but also
choosing a portion of wealth to be consumed and not further
reinvested. The agent also has to take into account the
uncertainty in the random endowment stream. It is from this
consumption, or from consumption and terminal wealth, that
utility is derived. We allow the utility function to be random,
reflecting the changes in agent's risk-preferences from one time
to another. In a departure from existing theory, we do not impose
any smoothness on the  utility function in its temporal argument.
As a result, we have a common framework for problems that involve
consumption only {\em and}  for problems that involve both
consumption and terminal wealth. In addition to dealing with an
inherently incomplete semimartingale market-model, we impose
convex cone constraints on the investment choices the agent is
facing. In this way we can model incompleteness and prohibition of
short-sales, to name only two.

For utility functions we formulate the concept of asymptotic elasticity
and, under an appropriate condition of ``reasonable asymptotic
 elasticity'', we establish existence and
uniqueness of the optimal consumption-investment strategy. In
\cite{KraSch99} it
 was only the terminal value of a dual process that appeared in the analysis,
 the dual domain
$\set{ Y_T\,:\, Y\in\YY}\subseteq L^0_+$ being endowed with the
topology of convergence in probability. The more difficult
situation in \cite{CviSchWan99} required the dual domain to be
extended to the closure of the set of all equivalent martingale
measures in $(\linf)^*$ - a space whose elements are {\em
finitely-additive set-functions}. Abusing terminology slightly, we
shall call such set-functions ``finitelly-additive  measures''. In
our case, we have to mimic the natural correspondence between
measures and uniformly integrable martingales in the
finitely-additive world. It turns out that the right choice
consists of a dual domain, inhabited by finitely-additive
measures, and coupled with supermartingales corresponding to the
Radon-Nikodym derivatives of their regular parts. We prove
rigorously that these supermartingales essentially correspond to
the supermartingales in the set $\YY$ defined in \cite{KraSch99}.
The basic tool in this endeavor is the Filtered Bipolar Theorem of
\cite{Zit00}.

As applications of our results, we treat two special cases - a
constrained It\^ o-process market, where we prove that the optimal
dual process is always a local martingale, and the ``totally
incomplete'' case of Lakner and Slud (\cite{LakSlu91}), where the
agent is not allowed to invest in the stock-market at all.

We should stress that one   main motivation behind this work is
the r\^ ole it plays as a necessary step for an offensive on the
problem of existence and uniqueness for equilibrium in
continuous-time incomplete markets with random endowments, a task
we plan to attempt in future research.

The part of our analysis dealing with duality, and especially the structure of the proof of the main result, is closely based on and inspired by the expositions in \cite{KraSch99} and
\cite{CviSchWan99}. In Section \ref{FinMarCh} we set up the
market-model, and present a characterization of admissible
consumption strategies. Section \ref{OptProb} displays our main result and Appendix \ref{ProofA} its proof. In Section
\ref{Examples} we give an application of our results through two examples.


\section{The model}\lprint{FinMarCh}

\subsection{The financial market}\lprint{FinMar1}
We introduce a model for a financial market consisting of
\begin{itemize}
\item[(i)] a positive, adapted process $B=\prf{B_t}$ with paths that are RCLL
(Right-Continuous on $[0,T)$, with Left-Limits everywhere on
$(0,T]$) and  uniformly bounded from above and away from zero. We
interpret $B$ as the num\' eraire asset - a bond, for example.
\item[(ii)] a RCLL-semimartingale $S=\prf{S_t}$ taking values in $\R^d$; its
component processes represent the prices  of $d$ risky assets,
discounted in terms of the num\' eraire $B$.
\end{itemize}
All processes are defined on a stochastic base
$(\Omega,\FF,\prf{\FF_t},\PP)$ with a finite {\bf time horizon}
$T>0$, and the filtration $\FFF\triangleq\prf{\FF_t}$ satisfies
the usual conditions; $\FF_0$ is the completion of the trivial
$\sigma$-algebra.

We concentrate our attention on a financial agent endowed with
initial wealth $x>0$ and a random {\bf cumulative endowment process}
$\EN=\prf{\EN_t}$ - in that the total (cumulative) amount of
endowment received by time $t$ is $\EN_t$. We assume that
$\EN_0=0$ and $\EN$ is nondecreasing, $\FFF$-adapted, RCLL  and
uniformly bounded from above, i.e., $\EN_T\in\linf_+(\PP)$.
Similarly to the price-process $S$, we assume that $\EN$ is
already discounted (denominated in terms of $B$).

Faced with inherent uncertainty in  future endowment, the agent
dynamically adjusts positions in different financial assets and
designates a part of wealth for immediate consumption, in the
following manner:
\begin{itemize}
\item[(a)] the agent chooses an $S$-integrable and $\FFF$-predictable process $H$
taking values in $\R^d$. The process
$H$ has a natural interpretation as {\bf portfolio process}; in
other words, the $i^{\text{th}}$ component of $H_t$ is the number
of shares of stock $i$ held at time $t$.

To exclude pathologies such as doubling schemes, we choose to
impose the condition of {\bf admissibility} on the agent's choice
of portfolio process $H$, by requiring that the {\bf gains process}
$\int_0^{\cdot} H_u\,\dd{S_u}$ be uniformly bounded from below by
some constant (for the theory of stochastic integration
with respect to RCLL semimartingales, and the related notion of
integrability, the reader may consult \cite{Pro90}).
Moreover, we ask our agent to obey the investment
restrictions imposed on the structure of the market, by choosing
the portfolio process $H$ in a closed convex cone $\KK\subseteq
\R^d$. The set $\KK$ represents constraints on  portfolio choice,
and can be used to model, for example, short-sale constraints or
unavailability of some stocks for investment.
\item[(b)] apart from the choice of   portfolio process, the agent
chooses a nonnegative, nondecreasing $\FFF$-adapted RCLL  process
$C=\prf{C_t}$. The {\bf cumulative consumption process} $C$ represents the
total amount (just like $S$ and $\EN$, already discounted by
$B$) spent on consumption, up to and including time $t$.
\end{itemize}

A pair $(H,C)$ that satisfies (a) and (b) above, is called an {\bf
investment-con\-sum\-pti\-on strategy}. The wealth of an agent
that employs the investment-con\-sum\-pti\-on strategy $(H,C)$ is
given by \begin{equation} \lprint{defadm}  W_t^{H,C}\triangleq
x+\EN_t+\int_0^{t} H_u\,\dd{S_u}-C_t,\ \ 0\leq t\leq
T.\end{equation} If the strategy $(H,C)$ is such that the
corresponding wealth process $W^{H,C}$ satisfies $W^{H,C}_T\geq 0$
a.s., we say that $(H,C)$ is an {\bf admissible strategy}. If, for
a consumption process $C$, we can find a portfolio process $H$
such that $(H,C)$ is admissible, we call $C$  an {\bf admissible
consumption process}, and say that $C$ can be {\bf
financed} by $x+\EN$ and $H$. Let $\mu$ be an {\bf admissible measure},
i.e., a probability measure on $[0,T]$, diffuse on $[0,T)$, such
that $\mu([0,t])<1$ for all $t<T$. For such a measure we define
the {\bf support} $\mathrm{supp}\,\mu$ to be $[0,T]$ if $\mu$
charges $\set{T}$, and $[0,T)$ otherwise.

We shall be mostly interested in admissible consumption
processes $C$ that can be expressed as
\begin{equation*} C_t=\int_0^t c(u)\, \mu({\mathrm du}),\ \text{\
$0\leq t\leq T$.}\end{equation*} The set of all densities $c(\cdot)$ of
such processes will be denoted by $\AA^{\mu}(x+\EN)$. We allow for
bulk consumption at the terminal time in order to be able to deal
later on with utility from the terminal wealth and/or from
consumption, in the same framework.

\begin{rem}{\rm  Even though we allow debt to incur before time
$T$, the agent must invest in such a way as to be able to post a
non-negative wealth  by the end of the trading horizon, with
certainty. Furthermore, the boundedness of the process
$\EN=\prf{\EN_t}$ guarantees that the negative part of the wealth
will remain bounded by a constant (a weak form of ``constrained
borrowing'').

The following notation will be used repeatedly in the sequel:
\begin{eqnarray}\lprint{XX}\nonumber
\XX\triangleq&&\!\!\!\!\!\!\!\!\!{\Big\{}x+\int_0^{\cdot}H_u\,\dd{S_u}\,:\,
\text{$H$ is predictable and $S$-integrable, $H_t\in\KK$ a.s.
}\\\hskip2cm && \text{
 ~~~for every $t\in [0,T]$, $x\geq 0$, and}\
x+\int_0^{\cdot}H_u\,\dd{S_u}\ \text{is nonnegative}{\Big
\},}\end{eqnarray}}\end{rem}
\subsection{The optimization problem}

Let us introduce now a preliminary version of the optimization
problem, and lay out an outline of its solution. The goal is to
find a consumption-density process $\hc^x(\cdot)$, financed by the
initial wealth $x$ and the random endowment $\EN$, which maximizes
the expected utility from consumption - the average felicity of an
agent who follows the consumption strategy $\hc^x(\cdot)$. The expected
utility from a consumption density process $c(\cdot)$ is given by
\[ \eit{U(t,c(t))},\] where $U$ denotes a (random) utility function and $\mu$
a utility measure. We postpone discussion of the definition and
regularity properties of $U$ until Section \ref{OptProb}. In this
notation, \begin{equation}
\hc^x=\argmax_{c\in\AA^{\mu}(x+\EN)}\eit{U(t,c(t))}.\lprint{op1}\end{equation}
As it is customary in the duality approach to stochastic
optimization, we introduce a problem dual to (\ref{op1}) by
setting
\[ Y^{\hq^y}=\argmin_{\QQ\in\DD}\left[
\EE\int_0^T{V(t,Y^{\QQ}_t)}\,dt+y\scl{\QQ}{\EN_T}\right].\] Here
$\DD$ denotes the  domain for the dual problem; it is the closure
of the set of all supermartingale measures for the stock process
$S$. The process $Y^{\QQ}$ is a supermartingale version of the
density process of $\QQ$, and $V$ is the convex conjugate of $U$.

In the following subsections, we introduce and describe the dual
domain $\DD$ in detail, and establish some of its properties - the
prominent one being weak * compactness. It is precisely this
compactness property that will ensure the existence of a solution
to the dual problem and  - through standard tools of convex
duality - the existence of an optimal consumption process $\hc^x$
for any positive initial wealth $x$.

\subsection{Connections with Stochastic Control Theory}

The portfolio process $H$ serves as the analogue of the
control-process in Stochastic Control Theory. It is important, though, to stress that we are not dealing here with a {\em partially (incompletely) observed} problem (a terminology borrowed again from Control Theory). Incomplete
markets in Mathematical Finance correspond to a setting, in which the controller has full information about many aspects of the system (the market), but various exogenously imposed constraints (taxation, transaction costs, bad credit rating, legislature, etc.) prevent  him/her from choosing the control (portfolio) outside a given constraint set. In fact, even without government-imposed portfolio constraints, financial markets will typically not offer tradeable assets corresponding to a variety of sources of uncertainty (weather conditions, non-listed companies, etc.) The financial agent will still observe many of these sources, as their uncertainty evolves, but will typically not be able to ``trade in all of them'', as it were.

This fundamental nature of financial markets is reflected in our
modelling: in Sections 1, 2 and 3, we allow the filtration
$\mathbb{F}$ (with respect to which the controls are adapted) to be possibly larger that the filtration generated by the
stock-price process $S$. The only requirement we impose, in the next subsection, is the one of \textsl{absence of arbitrage}, the fulfilment of which depends heavily on the choice of filtration $\mathbb{F}$. To sum up, the {\em observables} in financial modelling constitute a much larger class than the mere stocks we are allowed to invest in. With such an understanding, our portfolios {\em are} adapted only to the observables of the system. Such a setting corresponds to the well-established control-theoretic notion of admitting ``open loop'' controls in our analysis.

\smallskip
In the more specialized setup of Section 4, the filtration
$\mathbb{F}$ is taken as the augmentation of the filtration
generated by the Brownian motions driving the stock-prices,
assuming as we do in the beginning of Subsection 4.1  that the
volatility matrix process $\sigma(t)$ is {\em non-singular} a.s., for each $t$. At the level of generality considered in the paper, the filtration corresponding to the stock prices will be smaller than the filtration generated by the Brownian motion. But the two filtrations \textsl{are} actually the same, when interest-rates, volatilities and appreciation-rates are functions of past-and-present stock prices; this includes the case of Markovian or deterministic coefficients. In this case, ``open loop'' and ``closed loop'' (i.e., $S$-adapted) controls, actually coincide.

\medskip

Finally, we would like to stress that market incompleteness is the main source of technical and conceptual problems we had to overcome in this work, whereas the case of complete markets has been well studied by many authors before; see, for instance, Chapters 3 and 4 in [KS98]. All of our results concerning the structure of the dual domain (as well as the introduction of the dual domain in the first place) are consequences of the incompleteness of the market. We are actually allowing for \textsl{two} separate sources of incompleteness - the general structure of the stock-prices, as well as the portfolio constraints in the form of the cone $\mathcal{K}$. By choosing $\,\mathcal{K} = \mathbb{R}^n \times \{0\} \times \cdots \times \{0\}$ for some $n=1, \cdots d-1$, we capture exactly the setting of an incomplete market with $n$ stocks, and with $d>n$ sources of
randomness that affect the coefficients in the model.

\subsection{Absence of arbitrage, finitely-additive set-fun\-cti\-ons, and
the dual domain}

In order to make possible a meaningful mathematical treatment of
the optimization problem, we outlaw arbitrage opportunities by
postulating the existence of an {\bf equivalent supermartingale
measure}, i.e., a probability measure on $(\Omega,\FF)$, equivalent
to $\PP$, under which the elements of the set $\XX$ in (2.2) become
supermartingales. The set of all equivalent supermartingale
probability measures will be denoted by $\MM$, and we shall assume
throughout that $\MM\neq\emptyset$. A detailed treatment of the
connections between various notions of arbitrage and the existence
of equivalent martingale (local martingale, supermartingale)
measures, culminating with the Fundamental Theorem of Asset
Pricing, can be found in \cite{DelSch93} and \cite{DelSch98}.

As  was pointed out in \cite{CviSchWan99}, the duality treatment
of   utility maximization requires a nontrivial enlargement of
$\MM$: this space turns out to be too small, in terms of closedness
and compactness properties. Accordingly, we define $\DD$ to be the
$\sigma(\linfd,\linf)$-closure of $\MM$ in $\linfd$ -- the
topological dual of $\linf$ -- where $\MM$ is canonically
identified with its embedding into $\linfd$. We shall denote by
$\linfd_+$ the set of non-negative elements in $\linfd$. In the following
proposition we collect some properties of $\linfd$, $\linfd_+$\,, and
$\DD$; more information about $\linfd$ can be found in \cite{RaoRao83}.

\begin{prop}\lprint{charge}
\begin{enumerate}
\item[(i)] The space $\linfd$ consists of finite, finitely-additive measures
 on $\FF$, which assign the value zero to $\PP$-null subsets of $\FF$.
\item[(ii)] Under the canonical pairing $ \scl{\ }{\ }\,:\, \linfd\times\linf
\to\R$,  the relation $\scl{\QQ}{1}=1$ holds for all $\QQ\in\DD$.
In other words, with the notation $\, \QQ(A) \triangleq
\scl{\QQ}{1_A}\,$ for $\,A \in \FF $ and $\QQ \in \linfd$, we
have $\,\QQ(\Omega )=1$ for all \ $\QQ\in\DD$.
\lprint{probcharge}
\item[(iii)] $\DD$ is weak * (i.e., $\sigma(\linfd,\linf)$)\,--\,compact.
\item[(iv)] Every element $\QQ$ of $\linfd_+$ admits a unique decomposition
\[
\QQ=\QQ^r+\QQ^s, \ \  \ \text{with}\ \ \ \QQ^r,\QQ^s\in\linfd_+\,,
\]
where the {\bf regular part} $\QQ^r$ is the maximal  {\rm
countably-additive} measure on $\FF$ dominated by $\QQ$, and the
{\bf singular part} $\QQ^s$ is {\rm  purely finitely-additive},
i.e., does not dominate any nontrivial countably-additive measure.
\item[(v)] \lprint{wang} $\QQ\in\linfd_+$  is singular, if
and only if for any $\eps>0$ there exists $A_{\eps}\in\FF$ such
that $\PP(A_{\eps})>1-\eps$ and $\QQ(A_{\eps})=0$.
\item[(vi)] \lprint{wang2} Suppose a bounded sequence $\seq{\QQ}$ in $\linfd_+$
is such that $\frac{\dd{\QQ_n^r}}{\dd{\PP}}\to f$ a.s., for some
$f\geq 0$. Then any weak * cluster point $\QQ$ \ of \ $\seq{\QQ}$
satisfies \ $\frac{\dd{\QQ^r}}{\dd{\PP}}= f$ a.s. where $\QQ^r$
denotes the regular part of $\QQ$.
\item[(vii)] The regular-part operator $\,\QQ\mapsto\QQ^r$ is additive on
$\linfd_+$.
\end{enumerate}
\end{prop}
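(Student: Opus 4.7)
Parts (i), (iv), and (v) are classical facts about the topological dual of $\linf$ due to Yosida and Hewitt, and my plan is simply to invoke the representation theorem in that reference (or \cite{RaoRao83}): every element of $(\linf)^*$ corresponds to a bounded finitely-additive measure on $\FF$ vanishing on $\PP$-null sets; every such $\QQ \in \linfd_+$ admits a unique decomposition into a countably-additive part $\QQ^r$ and a purely finitely-additive part $\QQ^s$; and a non-negative finitely-additive measure $\QQ$ is purely finitely-additive precisely when for every $\eps>0$ it vanishes on some set whose $\PP$-measure exceeds $1-\eps$.

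For (ii) and (iii), the idea is to use the elementary fact that $\MM$ sits inside the set $\{\QQ \in \linfd_+ : \scl{\QQ}{1}=1\}$, because every member of $\MM$ is a probability measure. Since evaluation $\QQ \mapsto \scl{\QQ}{1}$ and each evaluation $\QQ \mapsto \scl{\QQ}{1_A}$ are weak*-continuous, both $\{\QQ : \scl{\QQ}{1}=1\}$ and $\linfd_+$ are weak*-closed, which yields (ii). For (iii), I observe that $\MM$ lies in the closed unit ball of $\linfd$, so its weak*-closure $\DD$ does too; Banach--Alaoglu then delivers weak*-compactness.

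The main obstacle is (vi). My plan is to show that the finitely-additive measure $\QQ - f\,d\PP$ is non-negative and singular, so by uniqueness of the Yosida--Hewitt decomposition $\QQ^r = f\,d\PP$. Fix $\eps>0$. Using part (v) applied to each $\QQ_n^s$, pick $B_n^\eps \in \FF$ with $\PP(B_n^\eps) > 1-\eps 2^{-(n+1)}$ and $\QQ_n^s(B_n^\eps)=0$, set $C^\eps = \bigcap_n B_n^\eps$ so that $\PP(C^\eps) > 1 - \eps/2$, and then use Egorov's theorem on the a.s.\ convergence $f_n \to f$ to shrink $C^\eps$ to a set $D^\eps$ with $\PP(D^\eps)>1-\eps$ on which the convergence is uniform. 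For every measurable $A \subseteq D^\eps$ and every $n$, the singular part vanishes: $\QQ_n^s(A) \le \QQ_n^s(D^\eps) \le \QQ_n^s(B_n^\eps)=0$, hence $\QQ_n(A) = \int_A f_n\,d\PP$. Passing to the limit along a subnet realising the cluster point and using uniform convergence, I obtain $\QQ(A) = \int_A f\,d\PP$ for all $A \subseteq D^\eps$. This immediately shows $(\QQ - f\,d\PP)(D^\eps) = 0$, giving singularity via (v). Non-negativity follows because for any $A \in \FF$,
\[
\QQ(A) - \int_A f\,d\PP \;=\; \QQ(A \cap (D^\eps)^c) - \int_{A \cap (D^\eps)^c} f\,d\PP \;\ge\; -\int_{(D^\eps)^c} f\,d\PP,
\]
and the right-hand side tends to zero as $\eps \to 0$ by dominated convergence (note $f \in \lone$ by Fatou's lemma applied to the bounded sequence $\int f_n\,d\PP \le \sup_n \QQ_n(\Omega) < \infty$).

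Finally, (vii) is a short consequence of (iv) and (v). The measure $\QQ_1^r+\QQ_2^r$ is clearly countably additive, and $\QQ_1^s+\QQ_2^s$ is singular: given $\eps>0$, apply (v) to each $\QQ_i^s$ to obtain $A_i$ with $\PP(A_i)>1-\eps/2$ and $\QQ_i^s(A_i)=0$; the intersection $A_1 \cap A_2$ has $\PP$-measure exceeding $1-\eps$ and annihilates $\QQ_1^s+\QQ_2^s$. The uniqueness of the decomposition in (iv) then forces $(\QQ_1+\QQ_2)^r = \QQ_1^r + \QQ_2^r$. The delicate step throughout is the simultaneous control of the singular parts in (vi), which is why the combined Egorov-plus-(v) argument is needed rather than a direct weak* passage.
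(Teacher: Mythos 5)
Your proposal is correct, and for most parts it follows the paper's route: parts (i), (iv), (v) are cited from the Yosida--Hewitt theory (as in \cite{RaoRao83}), (iii) is Alaoglu, (ii) follows from the fact that $\MM$ consists of probability measures and the evaluation $\QQ\mapsto\scl{\QQ}{1}$ is weak*-continuous, and (vii) is the uniqueness-of-decomposition argument the paper gives (the paper first notes $(\QQ+\RR)^r\geq\QQ^r+\RR^r$ by maximality, then shows $\QQ^s+\RR^s$ is singular via (v); you go directly via uniqueness in (iv)---same idea, same ingredients).

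The genuine difference is (vi). The paper simply cites Proposition A.1 of \cite{CviSchWan99}, whereas you give a self-contained argument. Your argument is sound: intersecting the sets $B_n^\eps$ supplied by (v) for the singular parts, shrinking via Egorov to a set $D^\eps$ of uniform convergence, noting that on subsets of $D^\eps$ each $\QQ_n$ agrees with $f_n\,d\PP$, and passing to the weak* cluster point gives $\QQ = f\,d\PP$ on subsets of $D^\eps$; the non-negativity of $\QQ - f\,d\PP$ follows since the deficit is controlled by $\int_{(D^\eps)^c}f\,d\PP$, which vanishes as $\eps\to0$ by absolute continuity of the integral (you call it dominated convergence, but absolute continuity is the accurate reason since the sets $(D^\eps)^c$ need not be nested); and $f\in\lone$ by Fatou and boundedness of $\seq\QQ$. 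Thus $\QQ-f\,d\PP$ is a non-negative finitely-additive measure vanishing on sets of $\PP$-measure arbitrarily close to one, hence singular by (v), and uniqueness in (iv) forces $\QQ^r=f\,d\PP$. This buys you a self-contained exposition that the paper delegates; the cost is the extra Egorov machinery, but the step "shrink the intersection $\bigcap_n B_n^\eps$ to get simultaneous uniform convergence of the densities" is precisely the point where a naive direct weak* passage would fail, and you handle it correctly.
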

\begin{proof}\
\begin{enumerate}
\item[(i)] See \cite{RaoRao83}, Corollary 4.7.11.
\item[(ii)] Follows from density of $\MM$ in $\DD$.
\item[(iii)] This is the content of Alaoglu's theorem (see \cite{Woj96}, Theorem 2.A.9).
\item[(iv)] See Theorem 10.2.1 in \cite{RaoRao83}.
\item[(v)] See Lemma A.1. in \cite{CviSchWan99}.
\item[(vi)] See Proposition A.1. in \cite{CviSchWan99}.
\item[(vii)] Let $\QQ$ and $\RR$ be elements of $\linfd_+$.
It is clear that $\QQ^r+\RR^r$ is a countably additive measure
dominated by $\QQ+\RR$, so $(\QQ+\RR)^r\geq \QQ^r+\RR^r$. For the
equality, it is enough to show that
$(\QQ+\RR)-(\QQ^r+\RR^r)=\QQ^s+\RR^s$ is singular. For any
$\eps>0$, by (v), we can find sets $A_{\eps}$ and $B_{\eps}$ such
that $P(A_{\eps})>1-\frac{\eps}{2}$,
$P(B_{\eps})>1-\frac{\eps}{2}$ and
$\QQ^s(A_{\eps})=\RR^s(B_{\eps})=0$. With $C_{\eps}\triangleq
A_{\eps}\cap B_{\eps}$ we have $P(C_{\eps})>1-\eps$ and
$(\QQ^s+\RR^s)(C_{\eps})=0$\,; this completes the proof, by appeal to
(v). \end{enumerate}  \end{proof}

\begin{rem}{\rm
In the light of property (ii) we may interpret the elements of
$\DD$ as finitely-additive probability measures on $\FF$, weakly
absolutely continuous with respect to $\PP$. }\end{rem}

For our analysis, it will be necessary to associate a nonnegative RCLL
supermartingale $\yq=\prf{\yq_t}$ to every $\QQ\in\DD$. For $\QQ\in\MM$,
this process is just the RCLL-modification of the martingale
$\prf{\EE[\frac{d\QQ}{d\PP}|\FFt]}$. For general $\QQ\in\linfd_+$,
the construction of $\yq$ is rather delicate (cf. (\ref{yqdef})
below). To make headway on this issue, we let $\QQ^r$ denote the
regular part of $\QQ$ and, for any $\sigma$-algebra
$\GG\subseteq\FF$, we denote by $\QQ|_{\GG}$ the restriction of
the set-function $\QQ$ to $\GG$. Since the regular-part operator
$\QQ\mapsto\QQ^r$ depends nontrivially on the domain of $\QQ$, we
stress  that $(\QQ|_{\GG})^r$ stands for a countably-additive
measure on $\GG$ and, in general, does {\em not} equal
$\QQ^r|_{\GG}$\,: the regular-part and restriction operations do not
commute, in general. In fact, we have the following result:

\medskip
\begin{prop} \lprint{monot} For any two $\sigma$-algebras
$\GG\subseteq\HH$ and every $\QQ\in\linfd$, we have
$(\QQ|_{\GG})^r\geq (\QQ|_{\HH})^r|_{\GG}.$
\end{prop}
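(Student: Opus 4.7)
The plan is to exploit the maximality characterization of the regular part given by property (iv) of Proposition \ref{charge}. Recall that, for a positive finitely-additive measure, its regular part is by definition the \emph{largest} countably-additive measure it dominates. So the inequality to be proved is really a universal property: if I can exhibit $(\QQ|_{\HH})^r|_{\GG}$ as some countably-additive measure on $\GG$ dominated by $\QQ|_{\GG}$, then it will automatically be dominated by the maximal such object, namely $(\QQ|_{\GG})^r$.

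First I would reduce, if necessary, to the case $\QQ \in \linfd_+$ (in the application this is all we will ever need, and otherwise one invokes a Jordan-type decomposition in $\linfd$). Then I would check the two properties of the candidate measure $\nu \triangleq (\QQ|_{\HH})^r|_{\GG}$:
\begin{enumerate}
\item[(a)] $\nu$ is countably-additive on $\GG$. This is immediate since $(\QQ|_{\HH})^r$ is a countably-additive measure on $\HH$ by definition of the regular-part operator, and the restriction of a countably-additive measure to a sub-$\sigma$-algebra is automatically countably-additive.
\item[(b)] $\nu$ is dominated by $\QQ|_{\GG}$. Indeed, by the defining property of the regular part we have the pointwise inequality $(\QQ|_{\HH})^r(A) \leq \QQ|_{\HH}(A) = \QQ(A)$ for every $A \in \HH$. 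Restricting to $A \in \GG \subseteq \HH$ yields $\nu(A) \leq \QQ|_{\GG}(A)$.
\end{enumerate}

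Having established (a) and (b), the conclusion then follows instantly from the maximality clause in Proposition \ref{charge}(iv): $(\QQ|_{\GG})^r$ is, by definition, the largest countably-additive measure on $\GG$ dominated by $\QQ|_{\GG}$, so it must dominate the competitor $\nu = (\QQ|_{\HH})^r|_{\GG}$. There is really no serious obstacle here; the whole content of the statement is that restriction and regular-part do not commute, but they fail to commute only in one direction, precisely because passing from $\HH$ down to $\GG$ may allow more of the singular mass of $\QQ$ to become ``regularizable'' (some sets in $\HH$ witnessing singularity of $\QQ^s|_{\HH}$ may disappear when one restricts the $\sigma$-algebra to $\GG$).
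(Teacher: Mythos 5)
Your argument is correct and is precisely the paper's own: exhibit $(\QQ|_{\HH})^r|_{\GG}$ as a countably-additive measure on $\GG$ dominated by $\QQ$, and then appeal to the maximality defining $(\QQ|_{\GG})^r$. You have simply spelled out the two verifications (countable additivity under restriction, and domination) that the paper leaves implicit.
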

\begin{proof} By definition, $(\QQ|_{\GG})^r$ is the maximal
countably-additive measure on $\GG$ dominated by $\QQ$, so it must
dominate $(\QQ|_{\HH})^r|_{\GG}$ -- another countably-additive
measure on $\GG$ dominated by $\QQ$.
\end{proof}

 For $\QQ\in\DD$ we define the process
 \begin{equation} \lprint{ldef} L^{\QQ}_t\triangleq \qrpt,\ \ t\in [0,T].\end{equation}
 It is exactly the property from Proposition
 \ref{monot} that makes then the process defined by
\begin{equation}
\yq_t \, \triangleq \liminf_{q\searrow t,\  \text{$q$ is rational}}
L^{\QQ}_q,\ \  0\leq t< T,\ \ \ ~~ \text{and} ~~ \ \ \ \yq_T\triangleq
L^{\QQ}_T\lprint{yqdef}
\end{equation} a RCLL  supermartingale. This, seemingly unnatural,
 regularization through the limit-inferior in
(\ref{yqdef}) is necessary, since there is no guarantee that an
RCLL-modification exists for  the process $L^{\QQ}$. Appendix I,
theorem 4, p. 395 and Theorem 10, p. 402 in \cite{DelMey82}
establish good measurability properties of the processes involved,
as well as the fact that the limit-inferior in (\ref{yqdef}) is
actually a true limit for every $t\in [0,T)$, on a subset of
$\Omega$ of full probability. When $\QQ\in\MM$, it is immediate
that the process $\yq=\prf{\yq_t}$ of (\ref{yqdef}) is the
RCLL-modification of the martingale
$\prf{\EE[\rn{\QQ}{\PP}|\FF_t]}$.

We define the two sets of processes
\begin{equation}\lprint{yqdefn} \ym\triangleq\sets{\yq }{\QQ\in\MM}\ \
\text{\ \ and\ \ }\ \ \yd\triangleq\sets{\yq
}{\QQ\in\DD} \supseteqq \ym\,.\end{equation} The following proposition goes deeper
into the properties of the elements of $\yd$. It shows that the
regularization in the definition (\ref{yqdef}) of the process
$\yq$ is, in fact, a harmless operation.

\smallskip
\begin{prop}\lprint{pryq}
\begin{itemize}
\item[(a)] For $\QQ\in\DD$, there exists a
countable set $K\subseteq [0,T)$, such that $\, \yq_t=L^{\QQ}_t$ for all $t\in
[0,T]\setminus K$,  almost surely. In particular,
$\yq=L^{\QQ}$ $(\mu\otimes\PP)$-a.e., for any admissible measure
$\mu$.
\item[(b)] For every stopping time $S$, we have\
$\, \yq_S\leq \rn{(\QQ|_{\FF_S})^r}{(\PP|_{\FF_S})}$ a.s.
\end{itemize}
\end{prop}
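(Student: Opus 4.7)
For part (a), the plan is to show first that $L^{\QQ}$ is a (not necessarily right-continuous) supermartingale, and then identify $\yq$ as its classical right-continuous regularization. The supermartingale property is an immediate consequence of Proposition \ref{monot}: for $s\leq t$ and $A\in\FF_s\subseteq\FF_t$,
\[
\EE[L^{\QQ}_t\ind{A}]=(\QQ|_{\FF_t})^r(A)\leq (\QQ|_{\FF_s})^r(A)=\EE[L^{\QQ}_s\ind{A}],
\]
and $L^{\QQ}$ is uniformly bounded in $L^1$ by $\QQ(\Omega)\leq 1$. The classical regularization theorem (the Dellacherie--Meyer results already cited in the paper) then says: the decreasing map $t\mapsto\EE[L^{\QQ}_t]$ has an at most countable set $K\subseteq[0,T)$ of discontinuities, off $K$ the liminf in (\ref{yqdef}) coincides $\PP$-a.s.\ with $L^{\QQ}_t$, and $\yq_T=L^{\QQ}_T$ by definition. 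The ``in particular'' assertion then follows by Fubini, since $\mu$ is diffuse on $[0,T)$ makes $\mu(K)=0$.

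For part (b), I would approximate $S$ from above by a decreasing sequence of simple stopping times $S_n$ taking values in a dyadic grid of rationals in $[0,T)$ together with the point $T$ (e.g.\ $S_n=(\lceil 2^n S\rceil /2^n)\wedge T$), and exploit the interplay between Proposition \ref{monot} and conditional Fatou. The argument has four steps:
\begin{itemize}
\item[(i)] \textbf{Simple-stopping-time identity:} show that for each $n$, $L^{\QQ}_{S_n}=\rn{(\QQ|_{\FF_{S_n}})^r}{\PP|_{\FF_{S_n}}}$ almost surely. If $S_n$ takes values $t_1<\cdots<t_k$, the set-function
\[
R(A)\triangleq\sum_{i=1}^k(\QQ|_{\FF_{t_i}})^r(A\cap\{S_n=t_i\}),\quad A\in\FF_{S_n},
\]
is countably additive, dominated by $\QQ|_{\FF_{S_n}}$, and maximal among such (any countably-additive $\nu\leq\QQ$ on $\FF_{S_n}$ restricts on each trace $\{S_n=t_i\}\in\FF_{t_i}$ to a countably-additive measure $\leq\QQ|_{\FF_{t_i}}$, hence $\leq(\QQ|_{\FF_{t_i}})^r$). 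Thus $R=(\QQ|_{\FF_{S_n}})^r$, and its Radon--Nikodym derivative is exactly $L^{\QQ}_{S_n}$.
\item[(ii)] \textbf{Monotonicity step:} apply Proposition \ref{monot} to $\FF_S\subseteq\FF_{S_n}$ and take $\PP$-densities to obtain $\EE[L^{\QQ}_{S_n}\mid\FF_S]\leq\rn{(\QQ|_{\FF_S})^r}{\PP|_{\FF_S}}$ a.s.
\item[(iii)] \textbf{Pathwise liminf bound:} since $S_n(\omega)$ is a rational strictly decreasing to $S(\omega)$ a.s.\ (with $S_n(\omega)=T$ when $S(\omega)=T$), the defining liminf in (\ref{yqdef}) gives, pathwise outside a null set, $\liminf_n L^{\QQ}_{S_n}\geq\yq_S$.
\item[(iv)] \textbf{Conditional Fatou:} since $L^{\QQ}_{S_n}\geq 0$ and $\yq_S$ is $\FF_S$-measurable,
\[
\yq_S=\EE\bigl[\yq_S\mid\FF_S\bigr]\leq\EE\bigl[\liminf_n L^{\QQ}_{S_n}\mid\FF_S\bigr]\leq\liminf_n\EE\bigl[L^{\QQ}_{S_n}\mid\FF_S\bigr]\leq\rn{(\QQ|_{\FF_S})^r}{\PP|_{\FF_S}}.
\]
\end{itemize}

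The most delicate step is (i), the ``gluing'' identity for simple stopping times, since the Yosida--Hewitt decomposition behaves non-trivially under the passage from $\FF_{t_i}$ to $\FF_{S_n}$; the proof rests entirely on the maximality characterization of the regular part. A secondary technicality is ensuring in step (iii) that the dyadic values of $S_n$ are compatible with the countable family of rationals used to define $\yq$ in (\ref{yqdef}), which is why the dyadic discretization (rather than an arbitrary descending sequence of stopping times) is the natural choice.
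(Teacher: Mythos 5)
Your part (a) is essentially the paper's argument, just phrased as an appeal to the classical supermartingale right-regularization theorem; the paper spells out the same steps (discontinuity set $K$ of the decreasing mean function, Fatou to get $\yq_t\leq L^{\QQ}_t$, backward supermartingale convergence for equality off $K$). No substantive difference.

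Part (b) is correct but takes a genuinely different, and somewhat longer, route. The paper never identifies $L^{\QQ}_{S^n}$ as a regular-part density at all: it directly verifies that the countably-additive measure $A\mapsto\EE[\yq_S\ind{A}]$ on $\FF_S$ is dominated by $\QQ$, via Fatou and the decomposition of $\EE[L^{\QQ}_{S^n}\ind{A}]$ over the range of $S^n$, where on each piece $\{S^n=t^n_k\}$ one uses only the crude inequality $\EE[L^{\QQ}_{t^n_k}\ind{A\cap\{S^n=t^n_k\}}]\leq\scl{\QQ}{\ind{A\cap\{S^n=t^n_k\}}}$. Maximality of the regular part is then invoked once, at the very end, applied to $\yq_S$. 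Your argument instead proves the stronger gluing identity $L^{\QQ}_{S_n}=\rn{(\QQ|_{\FF_{S_n}})^r}{\PP|_{\FF_{S_n}}}$, passes through Proposition \ref{monot} at the level of $\FF_S\subseteq\FF_{S_n}$, and finishes by conditional Fatou. Both are valid; the paper's is more economical. Note also that for your chain in step (iv) only the easy direction of step (i) is used (the glued measure $R$ is countably additive and dominated by $\QQ|_{\FF_{S_n}}$, hence $R\leq(\QQ|_{\FF_{S_n}})^r$, giving $L^{\QQ}_{S_n}\leq\rn{(\QQ|_{\FF_{S_n}})^r}{\PP|_{\FF_{S_n}}}$); the maximality half of step (i), which you flagged as the most delicate part, is in fact not needed.

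One minor slip in step (iii): with $S_n=(\lceil 2^n S\rceil/2^n)\wedge T$, the sequence $S_n(\omega)$ is not strictly decreasing to $S(\omega)$ when $S(\omega)<T$ is dyadic --- it is eventually equal to $S(\omega)$, and this can happen with positive probability (e.g.\ $S\equiv 1/2$). The inequality $\liminf_n L^{\QQ}_{S_n}\geq\yq_S$ is still correct on that event (since $\yq_q\leq L^{\QQ}_q$ a.s.\ for each rational $q$, hence simultaneously for all dyadic $q$), but the stated justification --- that $S_n(\omega)$ strictly decreases --- is not. The paper's choice $S^n=(2^{-n}\lfloor 2^n S+1\rfloor)\wedge T$, which always strictly exceeds $S$ when $S<T$, avoids this case split.
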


\begin{proof}
\begin{itemize}
\item[(a)]
Let $K$ be the set of discontinuity  points  of the decreasing
function $ t\mapsto \EE[L^{\QQ}_t]= (\QQ|_{\FF_t})^r(\Omega),$ on
$[0,T)$; this set is at most countable. For every $t<T$, Fatou's
lemma gives
\begin{equation}\lprint{fh} \yq_t\leq\liminf_{q\searrow t,\
\text{q is rational}} \EE[L^{\QQ}_q|\FF_t]\leq
L^{\QQ}_t.\end{equation} On the other hand, for any sequence of
rationals $\seq{q}$ with $q_n\searrow t$,
$\set{L^{\QQ}_{q_n}}_{n\in\N}$ is a backward supermartingale
bounded in $\lone$, so that $L^{\QQ}_{q_n}\to \yq_t$ both in
$\lone$ and a.s., thanks to  the Backward Supermartingale
Convergence Theorem (see \cite{Chu74}, Theorem 9.4.7, page 338).
For each $t\in [0,T]\setminus K$ we have thus $\EE[Y^{\QQ}_t]=\EE[
L^{\QQ}_t]$ which, together with (\ref{fh}) and the fact that $K$ is
at most countable, completes the proof of (a).
\item[(b)]
For an arbitrary stopping time $S$, and $n\in\N$, we put
$S^n=(2^{-n}\lfloor 2^n S+1\rfloor)\wedge T$, so that $S\leq
S^n\leq S+2^{-n}.$ Therefore, $\set{S^n}_{n\in\N}$ is a sequence
of stopping times with finite range, a.s. decreasing to $S$. By
the definition (\ref{yqdef}) of $\yq$ we have $ \yq_S=\liminf_n
L^{\QQ}_{S^n}.$ Let $\set{t_1^n,\ldots, t_{m_n}^n}$ be the range
of $S^n$. Then for $A\in\FF_S\subseteq \FF_{S_n}$ we have
\begin{eqnarray*}
\EE[\yq_S\ind{A}]&=&\EE[\liminf_n\yh_{S^n}\cdot\ind{A}]\  \leq \
\liminf_n \EE[\yh_{S^n}\cdot \ind{A}]\   =\
\liminf_n\sum_{k=1}^{m_n}\EE[\yh_{t^n_k} \cdot
\ind{A\cap\set{S^n=t^n_k}}] \\ &\leq& \liminf_n\ \sum_{k=1}^{m_n}
\langle \QQ,\ind{A\cap\set{S^n=t^n_k}}\rangle  \ = \  \langle
\QQ,\ind{A}\rangle.\end{eqnarray*} Therefore, $\yq_S$ is the
density of a (countably-additive) measure dominated by $\QQ$ on
$\FF_S$, and we conclude  that $\,\yq_S\leq
\rn{(\QQ|_{\FF_S})^r}{(\PP|_{\FF_S})}$, almost surely.
\end{itemize}
\vskip-0.6cm\end{proof}

The next results, useful for the duality
treatment and interesting in their own right, introduce the notion
of {\it Fatou-convergence}, and relate it to the more familiar notion of weak
* convergence. Fatou-convergence is analogous to  a.s. convergence
in the context of RCLL-processes, and was used for example in
\cite{Kra96}, \cite{FolKra97} and \cite{8}.

\begin{defn}\lprint{FatD1}
Let $\{Y^{(n)}\}_{n\in\N}$ be a sequence of nonnegative,
$\FFF$-adapted processes with RCLL paths.
 We say that $\{Y^{(n)}\}_{n\in\N}$ {\bf Fatou-converges}
to an $\FFF$-adapted process $Y$ with RCLL-paths, if there is a
countable, dense subset $\TT$ of $[0,T]$, such that
\begin{eqnarray}
 Y_t&=&\liminf_{s\downarrow t,s\in\TT} \Bigl( \liminf_n Y^{(n)}_s
\Bigr)
\lprint{FatDef}\   = \  \limsup_{s\downarrow t,s\in\TT} \Bigl( \limsup_n
Y^{(n)}_s\Bigr)  ,\ \ \forall\, t\in [0,T]
\end{eqnarray} holds almost surely; we interpret $(\ref{FatDef})$ to mean
\ $Y_t=\lim_n Y^{(n)}_t$ a.s. for \, $t=T$. A set of nonnegative
RCLL-supermartingales is called {\bf
Fatou-closed}, if it is closed with respect to Fatou-convergence.
\end{defn}

Before stating the next proposition we need a technical result -
see Lemma 8 in \cite{Zit00}.

\begin{lem}\lprint{linf}
Let $\set{Y^{(n)}}_{n\in\N}$ be a sequence of nonnegative
RCLL-supermartingales,  Fatou-converging to a nonnegative
RCLL-supermartingale $Y$. There is a countable set $K\subseteq
[0,T)$ such that\, $Y_t=\liminf_n Y^{(n)}_t$ for all
$t\in[0,T]\setminus K$, almost surely.
\end{lem}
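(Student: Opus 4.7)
The plan is to realize $\tilde Z_t := \liminf_n Y^{(n)}_t$ as a raw (not necessarily right-continuous) supermartingale, identify $Y$ with its Doob right-continuous regularization, and then isolate $K$ as the countable set of right-discontinuities of $t\mapsto\EE[\tilde Z_t]$. First, conditional Fatou applied to the supermartingale inequality for each $Y^{(n)}$ gives, for $s\geq t$,
\[
\EE[\tilde Z_s\mid\FF_t]\,\leq\,\liminf_n \EE[Y^{(n)}_s\mid\FF_t]\,\leq\,\liminf_n Y^{(n)}_t\,=\,\tilde Z_t\quad\text{a.s.,}
\]
and $\EE[\tilde Z_t]\leq\liminf_n\EE[Y^{(n)}_t]<\infty$, so that $\tilde Z$ is a nonnegative supermartingale indexed by $[0,T]$. (Finiteness is inherited from the Fatou-convergence hypothesis together with $\EE[Y_0]<\infty$; a routine truncation $\tilde Z\wedge N$ followed by $N\to\infty$ handles any residual integrability issue.)

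Doob's regularization theorem then guarantees that the right-limit $\hat Z_t:=\lim_{s\downarrow t,\ s\in\QQ}\tilde Z_s$ exists a.s. for every $t$, is an RCLL supermartingale, and satisfies $\hat Z_t\leq\tilde Z_t$ a.s., with equality iff the nonincreasing function $g(t):=\EE[\tilde Z_t]$ is right-continuous at $t$. The downcrossing inequality further ensures that the right-limit of $\tilde Z$ along any countable dense subset of $[0,T]$ agrees a.s. with $\hat Z_t$; invoking this for the set $\TT$ of Definition \ref{FatD1} gives
\[
Y_t\,=\,\liminf_{s\downarrow t,\ s\in\TT}\tilde Z_s\,=\,\hat Z_t\quad\text{a.s., for every }t\in[0,T],
\]
and since both $Y$ and $\hat Z$ are RCLL this equality upgrades to indistinguishability.

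To conclude, the monotone function $g$ admits at most countably many discontinuities, so let $K\subseteq[0,T)$ be its right-discontinuity set. For $t\in[0,T)\setminus K$ the criterion above yields $\hat Z_t=\tilde Z_t$ a.s., i.e., $Y_t=\liminf_n Y^{(n)}_t$ a.s.; at $t=T$, the Fatou-convergence definition directly prescribes $Y_T=\lim_n Y^{(n)}_T$ a.s., so $T$ need not enter $K$.

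The main obstacle is the identification $Y=\hat Z$: it rests on the fact that a supermartingale's right-limit along a countable dense set is a.s. independent of the chosen set, a standard but subtle consequence of Doob's upcrossing estimate, combined with the observation that the liminf in Definition \ref{FatD1} must coincide a.s. with that true limit (which is precisely why both the liminf and limsup formulations appear in the definition of Fatou-convergence). A secondary point is the integrability of $\tilde Z$, addressed above.
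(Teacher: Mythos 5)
The paper does not actually prove this lemma — it simply cites Lemma 8 of \cite{Zit00} — so there is no in-paper argument to compare against. Your proof, however, follows what is almost certainly the intended (and indeed the natural) route: realize $\tilde Z := \liminf_n Y^{(n)}$ as a raw supermartingale via conditional Fatou, identify $Y$ with the Doob right-continuous regularization $\hat Z$ of $\tilde Z$, and take $K$ to be the countable set of right-discontinuities of the nonincreasing function $t\mapsto\EE[\tilde Z_t]$. Your observation that the $\liminf$/$\limsup$ squeeze in Definition~\ref{FatD1} forces the right-limit of $\tilde Z$ along $\TT$ to exist and coincide with $Y_t$ is exactly the point that makes the identification $Y=\hat Z$ go through, and the treatment of $t=T$ is handled correctly.

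Two places deserve a little more care. First, the integrability of $\tilde Z$ is not automatic from the hypotheses as stated — Fatou-convergence alone does not force $\sup_n \EE[Y^{(n)}_0]<\infty$, and one can have $\tilde Z_0=\infty$ on a set of positive measure even when $Y$ is a genuine finite supermartingale. Your parenthetical remark about truncating to $\tilde Z\wedge N$ is the right fix (note $\tilde Z\wedge N$ is a bounded supermartingale by Jensen applied to the concave map $x\mapsto x\wedge N$, and the union $K=\bigcup_N K_N$ of the exceptional sets is still countable), but as written it reads as an afterthought rather than a step of the proof; in the paper's actual use the elements of $\yd$ satisfy $Y_0\le 1$, which sidesteps the issue, and Lemma 8 of \cite{Zit00} is presumably stated under such a normalization. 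Second, the claim that the right-limit of a raw supermartingale along $\TT$ agrees a.s. with the right-limit along $\mathbb{Q}$ is correct, but attributing it simply to ``the downcrossing inequality'' is imprecise; the cleanest argument is to apply Doob's regularization to the enlarged dense set $\TT\cup\mathbb{Q}$, obtaining a single full-measure set on which right-limits along both sub-families exist simultaneously for all $t$ and must therefore coincide. Neither point affects the validity of the overall scheme.
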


\smallskip
\begin{prop}\lprint{Fat2}\
Let $\mu$ be a probability measure on $[0,T]$, diffuse on $[0,T)$.
Let $\{\QQ^{(n)}\}_{n\in\N}$ be a sequence in $\DD$ with a cluster
point $\QQ^*\in\DD$, such that the sequence
$\{Y^{\QQ^{(n)}}\}_{n\in\N}$ converges, both
$(\mu\otimes\PP)$-a.e. and in the Fatou sense. Then the
Fatou-limit $Y$ coincides with the $(\mu\otimes\PP)$-limit, up to
a.e. equivalence, and both are equal to $Y^{\QQ^*}$.
\end{prop}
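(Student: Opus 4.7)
The plan is to show that, modulo $(\mu\otimes\PP)$-null sets, the Fatou-limit, the $(\mu\otimes\PP)$-limit and $Y^{\QQ^*}$ are all the same process. Write $\tilde Y$ for the $(\mu\otimes\PP)$-limit. The backbone of the argument is to identify both limits with $L^{\QQ^*}$ at $\mu$-a.e. time using Proposition \ref{charge}(vi), and then convert from $L^{\QQ^*}$ back to $Y^{\QQ^*}$ via Proposition \ref{pryq}(a).

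\textbf{Identifying $\tilde Y$ with $Y^{\QQ^*}$.} Since $\QQ^*$ is a cluster point of $\{\QQ^{(n)}\}$ in the weak $*$ topology of $\linfd$, extract a subsequence $\{\QQ^{(n_k)}\}$ with $\QQ^{(n_k)}\to\QQ^*$ weak $*$. Because $L^{\infty}(\FF_t)$ embeds isometrically into $L^{\infty}(\FF)$, testing against $f\in L^{\infty}(\FF_t)$ yields $\QQ^{(n_k)}|_{\FF_t}\to\QQ^*|_{\FF_t}$ weak $*$ in $(L^{\infty}(\FF_t))^*$ for every $t\in[0,T]$. By Proposition \ref{pryq}(a) each $\QQ^{(n)}$ has a countable exceptional set $K_n\subseteq[0,T)$ off which $Y^{\QQ^{(n)}}_t=L^{\QQ^{(n)}}_t$ a.s., and analogously for $\QQ^*$ with set $K^*$. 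Since $\bigcup_n K_n\cup K^*$ is countable and $\mu$ is diffuse on $[0,T)$, for $\mu$-a.e.\ $t\in[0,T]$ the assumption $Y^{\QQ^{(n)}}_t\to\tilde Y_t$ a.s.\ translates to $L^{\QQ^{(n_k)}}_t\to\tilde Y_t$ a.s. Applying Proposition \ref{charge}(vi) to the bounded sequence $\{\QQ^{(n_k)}|_{\FF_t}\}\subseteq(L^{\infty}(\FF_t))^*$ with cluster point $\QQ^*|_{\FF_t}$, we conclude that $L^{\QQ^*}_t=\tilde Y_t$ a.s.\ for $\mu$-a.e.\ $t$. One more appeal to Proposition \ref{pryq}(a) replaces $L^{\QQ^*}_t$ with $Y^{\QQ^*}_t$ off $K^*$, giving $\tilde Y=Y^{\QQ^*}$ \ $(\mu\otimes\PP)$-a.e.

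\textbf{Identifying $Y$ with $\tilde Y$.} By Lemma \ref{linf} there is a countable set $K\subseteq[0,T)$ such that, off $K$, the Fatou-limit satisfies $Y_t=\liminf_n Y^{\QQ^{(n)}}_t$ a.s. Since $\mu$ is diffuse on $[0,T)$ we have $\mu(K)=0$, and for $\mu$-a.e.\ $t$ the sequence $Y^{\QQ^{(n)}}_t$ converges a.s.\ to $\tilde Y_t$, so the liminf is simply $\tilde Y_t$; hence $Y=\tilde Y$ \ $(\mu\otimes\PP)$-a.e. At the endpoint $t=T$ (should $\mu$ charge $\{T\}$), the Fatou-convergence gives $Y_T=\lim_n Y^{\QQ^{(n)}}_T=\lim_n L^{\QQ^{(n)}}_T$ a.s., and Proposition \ref{charge}(vi) applied on $(L^{\infty}(\FF))^*$ itself identifies this limit with $L^{\QQ^*}_T=Y^{\QQ^*}_T$. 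Combining both identifications completes the proof.

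\textbf{Main obstacle.} The only delicate point is that the weak $*$ cluster-point structure of the sequence lives in $(L^{\infty}(\FF))^*$, while Proposition \ref{charge}(vi) needs to be applied at each fixed $t$ in $(L^{\infty}(\FF_t))^*$. The observation that restriction to a sub-$\sigma$-algebra is weak-$*$ continuous and preserves cluster points unlocks the argument, after which the countability of all the \emph{bad} time-sets from Proposition \ref{pryq}(a) and Lemma \ref{linf} allows the diffuseness of $\mu$ on $[0,T)$ to do the remaining work.
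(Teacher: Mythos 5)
Your argument follows the same route as the paper's: Lemma \ref{linf} and the diffuseness of $\mu$ identify the Fatou-limit and the $(\mu\otimes\PP)$-limit at $\mu$-a.e.\ time; Proposition \ref{pryq}(a) lets you pass back and forth between $Y^{\QQ}_t$ and $L^{\QQ}_t$ at all but countably many times; and Proposition \ref{charge}(vi), applied in $(L^\infty(\FF_t))^*$ at each fixed $t$, provides the identification of the $(\mu\otimes\PP)$-limit with $L^{\QQ^*}_t$. Two points need repair.

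First, the opening sentence \emph{``extract a subsequence $\{\QQ^{(n_k)}\}$ with $\QQ^{(n_k)}\to\QQ^*$ weak $*$''} is not justified: on bounded subsets of $\linfd$ the weak-$*$ topology is not metrizable unless $\linf$ is separable, so a weak-$*$ cluster point of a sequence need not be a subsequential limit. Fortunately you do not need a convergent subsequence anywhere. Proposition \ref{charge}(vi) is stated for cluster points, and---as you correctly note in your closing remark---the restriction map $(L^\infty(\FF))^*\to(L^\infty(\FF_t))^*$ is weak-$*$ continuous, hence carries $\QQ^*$ to a cluster point $\QQ^*|_{\FF_t}$ of $\{\QQ^{(n)}|_{\FF_t}\}_{n\in\N}$. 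You should simply apply (vi) to the original sequence of restrictions and this cluster point, and delete the subsequence extraction altogether. Your ``Main obstacle'' paragraph shows you already see the correct mechanism; the body of the proof should match it.

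Second, you stop at $Y=Y^{\QQ^*}$ $(\mu\otimes\PP)$-a.e.\ Both $Y$ (as a Fatou-limit of RCLL supermartingales) and $Y^{\QQ^*}$ are genuine RCLL processes, and the proposition is subsequently used (in the proof of Theorem \ref{yeyd}) to conclude $Y=Y^{\QQ^*}$ as processes. You have actually shown $Y_t = Y^{\QQ^*}_t$ a.s.\ for every $t$ outside a countable set $\hat K\subseteq[0,T)$, whose complement is dense in $[0,T]$; one sentence invoking right-continuity of both processes upgrades this to indistinguishability, exactly as the paper concludes. With these two small fixes the proof is correct and essentially identical to the one in the text.
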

\begin{proof} The two limits are the same $(\mu\otimes\PP)$-a.e., by
Lemma \ref{linf}. By Proposition \ref{pryq}, there exists a
sequence $\{K_n\}_{n\in\N}$ of countable subsets of $[0,T)$, and a
$\mu$-null set $K'$, such that
\[
Y_t=\lim_n Y^{\QQ^{(n)}}_t= \lim_n L^{\QQ^{(n)}}_t, \ \
~~\text{for all} ~~ t \in [0,T]\setminus K
\]
holds almost surely, where $K\triangleq K'\cup \bigcup_{n\in\N} K_n$. By Proposition
\ref{charge}(vi), (2.4), and Proposition
\ref{pryq},  there is a $\mu$-null set $\hat{K}\supseteqq K$ such
that
\[
Y_t=Y^{\QQ^*}_t =L^{\QQ^*}_t ,\ ~~\text{for all}~~ t\in[0,T]\setminus \hat{K}
\]
holds almost surely. Since
$[0,T]\setminus \hat{K}$ is dense in $[0,T]$, the right-continuous
processes $Y$ and $Y^{\QQ^*}$ are indistinguishable.
\end{proof}

\subsection{On a point raised by  Cvitani\' c, Schachermayer and Wang}
In \cite{KraSch99}, page 6,  the authors define a set  $\YY$ of
supermartingales, which acts as an enlargement for the set of
densities of equivalent martingale measures;  they then use $\YY$
as the domain for the convex-duality approach to utility
maximization in incomplete markets. In their setup there is no
endowment after time $t=0$, no portfolio constraint, and  utility
comes from terminal wealth only. In terms of the set $\XX$ of
stochastic integrals in (\ref{XX}), the set $\YY$ of
supermartingales is  defined as
\begin{eqnarray} \nonumber \YY&\triangleq& {\Big \{ } Y\,:\, \text{$Y$ is an adapted
nonnegative RCLL process such that $Y_0\leq 1$ and}   \\
\lprint{yofs} &&  ~~~~~{\big ( } Y_tX_t {\big )}_{t\in [0,T]} \
\text{is a supermartingale for each process $X\in\XX$ } {\Big \}}\,.
\end{eqnarray}

Obviously, the elements of $\YY$ are supermartingales (just take $H=0$,
thus $X\equiv x$, in (\ref{XX})), and $\YY$ contains the set $\ym$ of
(\ref{yqdefn}) by its very definition; but except in trivial cases,
$\YY$ is a true enlargement of $\ym$. An attempt to study the
structure of $\YY$ was made in \cite{Zit00}, by establishing and
applying a generalization of the {\em Bipolar Theorem for Subsets
of} ~$\lzer_+$ (see \cite{BraSch99}); this is a non-locally-convex
version of the classical Bipolar Theorem of functional analysis.
The aforementioned generalization comes in the form of the {\em
Filtered Bipolar Theorem}, whose statement and  relevant
definitions we recall now from \cite{Zit00}:

\begin{defn} A set of $\YY$ of nonnegative $\FFF$-adapted
processes with RCLL paths, is  said to be \begin{enumerate}
\item {\bf (process-) solid}, if for each $Y\in \YY$ and each
nonincreasing $\FFF$-adapted process $B$ with RCLL paths and
$B_0\leq 1$, we have
 \  $YB\in\YY$;
\item {\bf fork-convex}, if for any
$s\in (0,T]$, any $h\in L^0_+(\FF_s)$ with $h\leq 1$ a.s., and any
$Y^{(1)},Y^{(2)},Y^{(3)}\in\YY$, the process $Y$ defined by
\[ Y_t=\left\{
\begin{array}
{cl} Y^{(1)}_t&,\  0\leq t<s \\
Y^{(1)}_s\left(
h\frac{Y^{(2)}_t}{Y^{(2)}_s}+(1-h)\frac{Y^{(3)}_t}{Y^{(3)}_s}
\right) &,\  s\leq t\leq T
\end{array}\right\}
\]
belongs to $\YY$.
\end{enumerate}
\end{defn}

\begin{defn} Let $\YY$ be a set of nonnegative, $\FFF$-adapted with RCLL paths.
The {\bf (process)-polar} of $\YY$ is the set of all nonnegative,
$\FFF$-adapted processes $X$ with RCLL paths, such that
$XY=(X_tY_t)_{t\in[0,T]}$ is a supermartingale with $(XY)_0\leq 1$
for all $Y\in \YY$.
\end{defn}

We can now state a mild extension of the main result of
\cite{Zit00}. The additional statement (last sentence of Theorem
\ref{fbt} below) follows directly from the proof of the original
version.
\begin{thm}\lprint{fbt}{\em [\bf Filtered Bipolar Theorem]\em} Let $\YY$
be a set of nonnegative,  $\FFF$-adapted processes with RCLL paths,
$Y_0\leq 1$ for each $Y\in\YY$, and with $Y_T>0$ a.s. for at least
one $Y\in\YY$. The process-bipolar
$\YY^{\times\times}=(\YY^{\times})^{\times}$ of $\YY$ is the
smallest Fatou-closed, fork-convex and solid set of $\FFF$-adapted
processes $Y$ with RCLL paths and $Y_0\leq 1$ that contains $\YY$.
Furthermore, every element of $\YY^{\times\times}$ can be obtained
as the Fatou-limit of a sequence in the solid and convex hull of $\YY$.
\end{thm}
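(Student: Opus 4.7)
The plan is to establish the theorem in two parts: first the structural claims that $\YY^{\times\times}$ is Fatou-closed, fork-convex, solid, and contains $\YY$; then the additional Fatou-approximation statement, from which minimality follows automatically, because any Fatou-closed, fork-convex, solid set $\ZZ$ containing $\YY$ must also contain the solid convex hull of $\YY$ (convex combinations are the degenerate $s=0$ case of fork-convex combinations, and solidity is assumed) and hence its Fatou-closure, hence $\YY^{\times\times}$.

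\emph{Structural properties.} The containment $\YY \subseteq \YY^{\times\times}$ is immediate from unwinding the polar operation twice. Solidity of $\YY^{\times\times}$ is inherited from the polar: for $X \in \YY^{\times}$, $Y \in \YY^{\times\times}$, and a nonincreasing adapted RCLL process $B$ with $B_0 \leq 1$, the product $(XY)B$ remains a nonnegative supermartingale with initial value at most $1$. Fork-convexity is a case-by-case verification: on $[0,s)$ the pasted process coincides with $Y^{(1)}$, while on $[s,T]$ the supermartingale inequality for $X$ times the paste, after $\FF_s$-conditioning, is an $h$-weighted convex combination of the supermartingale inequalities for $X Y^{(2)} / Y^{(2)}_s$ and $X Y^{(3)} / Y^{(3)}_s$. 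Fatou-closedness follows from Fatou's lemma applied to the nonnegative supermartingales $X Y^{(n)}$, together with Lemma \ref{linf} which guarantees that the Fatou-limit coincides with $\liminf_n Y^{(n)}_t$ for $t$ outside a countable set.

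\emph{The approximation argument.} Fix $Y \in \YY^{\times\times}$ and a countable dense set $\TT \subseteq [0,T]$ containing $T$. The central input is the static Bipolar Theorem of Brannath--Schachermayer in $L^0_+$: applied to the time-$t$ sections for each $t \in \TT$, it identifies the $L^0_+$-bipolar of $\{Z_t : Z \in \YY\}$ with its solid closed convex hull. This delivers, for each $t \in \TT$, sequences of convex combinations of time-$t$ sections from $\YY$, rescaled by $\FF_t$-measurable $[0,1]$-valued factors, converging a.s.\ to $Y_t$. A diagonalization over $\TT$ then produces a single candidate sequence $Y^{(n)}$ in the solid convex hull of $\YY$, and right-continuity of the RCLL process $Y$ extends Fatou-convergence from $\TT$ to all of $[0,T]$.

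\emph{Main obstacle.} The principal difficulty is upgrading the static, time-local approximations to a single process-level sequence in the solid convex hull of $\YY$ that works simultaneously for every $t \in \TT$: a priori, the approximating convex combinations for different times involve different elements and weights drawn from $\YY$. One needs a Koml\'os-type forward-convex-combination argument at each time, followed by a careful diagonalization that controls both $\liminf$ and $\limsup$ in the definition of Fatou-convergence. Fork-convexity of $\YY^{\times\times}$ enters as a structural feature allowing the bipolar to \emph{contain} more elaborate glued processes; the substantive content of the last sentence is that the approximants themselves can be chosen from the strictly smaller solid convex hull, without invoking fork-convexity in the construction of the sequence.
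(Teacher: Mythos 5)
The paper does not give a proof of this theorem at all: it is imported verbatim from \cite{Zit00}, with the added remark that the last sentence ``follows directly from the proof of the original version.'' Your sketch is therefore an attempt to supply what the paper omits, and it has a genuine gap at its central step.

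Your structural claims (containment, solidity, fork-convexity) are essentially fine, and the Fatou-closedness of $\YY^{\times\times}$ does go through once you use the \emph{conditional} Fatou lemma together with Lemma~\ref{linf} and right-continuity to pass from a countable dense set of times to all of $[0,T]$ -- your one-liner suppresses this but is in the right spirit. The difficulty is in the approximation argument. You assert that, for $Y\in\YY^{\times\times}$ and each $t\in\TT$, the random variable $Y_t$ lies in the Brannath--Schachermayer $\lzer_+$-bipolar of the time-$t$ slice $\{Z_t:Z\in\YY\}$, and then you invoke the static theorem to produce approximating solid convex combinations. But membership of $Y$ in $\YY^{\times\times}$ only tells you that $E[X_tY_t]\le 1$ for all $X\in\YY^{\times}$, i.e.\ that $Y_t$ lies in the static polar of the \emph{image} set $\{X_t:X\in\YY^{\times}\}$. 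Since the process-polar is defined through a path-level supermartingale test, the inclusion $\{X_t:X\in\YY^{\times}\}\subseteq(\{Z_t:Z\in\YY\})^{\circ}$ goes the wrong way: the polar of a smaller set is larger, so you only get $Y_t\in(\{X_t:X\in\YY^{\times}\})^{\circ}\supseteq(\{Z_t:Z\in\YY\})^{\circ\circ}$, not the inclusion you need. To justify the step you would have to show that every element of the static polar of $\{Z_t:Z\in\YY\}$ is realizable as the time-$t$ value of some process in $\YY^{\times}$; this is not obvious and you do not argue it. This is precisely where the filtered structure (fork-convexity, the conditional version of the bipolar theorem) is needed in \cite{Zit00}, and why the theorem is not a routine ``apply the static theorem slice by slice'' corollary.

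Separately, you correctly identify the diagonalization over $\TT$ as the main remaining obstacle, but you do not carry it out: you only observe that a Koml\'os-type passage to forward convex combinations ``at each time'' followed by ``a careful diagonalization'' is needed. Controlling simultaneously the $\liminf$ and $\limsup$ over $n$ and over $s\downarrow t$, uniformly across a countable dense set of times and with approximants drawn from a single sequence in the solid convex hull, is exactly the nontrivial content here, and as written it is a gap rather than a proof.
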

\begin{rem}{\rm The set $\YY^{\MM}$ of (\ref{yqdefn}) is fork-convex, and its
process-bipolar is the set $\YY$ of (\ref{yofs}) (see Theorem 4 in
\cite{Zit00}). It follows immediately from Theorem \ref{fbt} that
$\YY$  is the solid and Fatou-closed hull of
$\YY^{\MM}$.}\end{rem}

The task we take on in  this subsection is to formulate and establish
formally the statement put forth by the authors in
\cite{CviSchWan99}, to the effect that

\begin{quote}{\em \ldots the idea of passing from $\MM$ to $\DD$
}(introduced in \cite{CviSchWan99}) {\em  had already been
implicitly present in \cite{KraSch99}} (disguised in the
definition of $\YY$).\end{quote} Namely, we shall show that
$\yd\subseteq\YY$, and that $\yd$ already contains all maximal
elements of $\YY$. More precisely, we have the following result.

\begin{thm} \lprint{yeyd}The set $\yd$ in (\ref{yqdefn}) is fork-convex and Fatou-closed, and
the set $\YY$ of (\ref{yofs}) is its solid hull. \end{thm}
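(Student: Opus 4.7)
The plan is to establish the three claims in order---Fatou-closedness, fork-convexity, and $\YY = (\yd)^s$ (solid hull)---since the fork-convexity argument will lean on Fatou-closedness.

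\textbf{Fatou-closedness of $\yd$.} I will take $\{Y^{\QQ_n}\}$ Fatou-converging to some adapted RCLL process $Y$ and show $Y \in \yd$. First, using the weak* compactness in Proposition \ref{charge}(iii), I extract $\QQ_{n_k} \to \QQ^* \in \DD$ weak*; Fatou-convergence passes to this subsequence by sandwiching $\liminf_k$ and $\limsup_k$ in Definition \ref{FatD1}. To identify $Y = Y^{\QQ^*}$ via Proposition \ref{Fat2}, I need $(\mu\otimes\PP)$-a.e.\ convergence in addition to Fatou-convergence for some admissible measure $\mu$. I will supply this by a Koml\'os-type diagonal extraction: Lemma \ref{linf} gives $\liminf_k Y^{\QQ_{n_k}}_t = Y_t$ a.s.\ off a countable exceptional set, and the $L^1$-boundedness of $\{Y^{\QQ_n}_t\}_n$ allows passage to convex combinations of the $\QQ_n$'s---which stay in $\DD$ by convexity and still cluster weak* at $\QQ^*$---along which $(\mu\otimes\PP)$-a.e.\ convergence is achieved. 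Proposition \ref{Fat2} then delivers $Y = Y^{\QQ^*} \in \yd$.

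\textbf{Fork-convexity of $\yd$.} Given $\QQ_1,\QQ_2,\QQ_3 \in \DD$, $s\in(0,T]$, and $h\in L^0_+(\FF_s)$ with $h\leq 1$, I will weak*-approximate each $\QQ_i$ by $\QQ_i^{(n)} \in \MM$. By the fork-convexity of $\ym$ (Remark after Theorem \ref{fbt}), the resulting fork-combination $Y^{(n)}$ of the $Y^{\QQ_i^{(n)}}$ lies in $\ym \subseteq \yd$. After passing to common subsequences so that each $Y^{\QQ_i^{(n)}}$ Fatou-converges to $Y^{\QQ_i}$ (using the Fatou-closedness just established, applied to the cluster point $\QQ_i$), I check that $Y^{(n)}$ Fatou-converges to the fork-combination $Y$ of $Y^{\QQ_1},Y^{\QQ_2},Y^{\QQ_3}$; the $\FF_s$-measurability of $h$ reduces this to a routine pointwise-in-$t$ manipulation of $\liminf$s and $\limsup$s. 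Fatou-closedness then places $Y$ in $\yd$.

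\textbf{$\YY$ is the solid hull of $\yd$.} The easy direction is $(\yd)^s \subseteq \YY$: the inclusion $\yd\subseteq\YY$ follows by definition for $\QQ\in\MM$ and by Fatou-approximation (plus the supermartingale Fatou lemma) for general $\QQ\in\DD$, and $\YY$ is solid since multiplying a supermartingale by a nonincreasing, adapted, $[0,1]$-valued process preserves the supermartingale property. For the reverse, I will invoke the second half of Theorem \ref{fbt}: every $Y\in\YY$ is a Fatou-limit of $Z^{(n)} B^{(n)}$ with $Z^{(n)}\in\mathrm{conv}(\ym) \subseteq \yd$ (using convexity of $\DD$) and $B^{(n)}$ nonincreasing adapted with $B^{(n)}_0\leq 1$. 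By Fatou-closedness I extract a subsequence along which $Z^{(n)}$ Fatou-converges to some $Z\in\yd$, and then read off from the ratios $Y^{(n)}/Z^{(n)}$ a nonincreasing adapted $B$ with $B_0\leq 1$ and $Y = ZB$, so $Y\in(\yd)^s$.

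The technical crux is Part~1, where Fatou-convergence has to be upgraded to $(\mu\otimes\PP)$-a.e.\ convergence; this hinges on the interplay between weak* compactness (for cluster points) and convexity (for Koml\'os-style averaging) of $\DD$. A secondary delicacy is the final step of Part~3, where the factor $B = Y/Z$ must be shown to inherit monotonicity and the correct initial bound from the approximants $B^{(n)}$---essentially an optional-decomposition argument in disguise.
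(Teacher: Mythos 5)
Your outline correctly identifies the overall architecture (establish fork-convexity, Fatou-closedness, and the solid-hull identity, then close the loop via the Filtered Bipolar Theorem), and the Fatou-closedness argument (Koml\'os on convex combinations, weak\textsuperscript{*} cluster point, then Proposition~\ref{Fat2}) matches the paper's. But two of your three parts rest on a step that does not go through, and it is the same misconception in both places: you treat weak\textsuperscript{*}-convergence of a sequence $\QQ^{(n)}\to\QQ$ in $\DD$ as if it entails Fatou-convergence of the associated supermartingales $Y^{\QQ^{(n)}}\to Y^{\QQ}$. It does not. Proposition~\ref{Fat2} goes the \emph{other} way: it identifies the limit \emph{given} both $(\mu\otimes\PP)$-a.e.\ convergence and a weak\textsuperscript{*} cluster point. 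And Fatou-closedness, which you invoke in the fork-convexity step, only tells you the limit of an \emph{already} Fatou-convergent sequence lies in $\yd$; it cannot manufacture such a sequence for you. So when you "weak\textsuperscript{*}-approximate each $\QQ_i$ by $\QQ_i^{(n)}\in\MM$" and then "pass to common subsequences so that each $Y^{\QQ_i^{(n)}}$ Fatou-converges to $Y^{\QQ_i}$," you are asserting exactly what needs to be proved (and, separately, extracting convergent \emph{sequences} from weak\textsuperscript{*}-nets in a nonmetrizable dual is not automatic either). The paper instead quotes the last sentence of Theorem~\ref{fbt} to obtain a genuine Fatou-approximating sequence in the solid-convex hull of $\ym$ --- a statement with real content from \cite{Zit00} --- and feeds that into fork-convexity.

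The same circularity is fatal in your inclusion $\yd\subseteq\YY$: you say it follows "by Fatou-approximation" of $Y^{\QQ}$ by elements of $\ym$, but for $\QQ\in\DD\setminus\MM$ the existence of such an approximating sequence is precisely the content of the Corollary that \emph{follows from} this theorem, not an input to it. The paper's proof of $\yd\subseteq\YY$ is a direct supermartingale computation: reduce $Y^{\QQ}X$ to $L^{\QQ}X$ via Proposition~\ref{pryq}, use density of $\MM$ in $\DD$ to get $\scl{\QQ}{X_s\ind{A}}\geq\scl{\QQ}{(X_t\wedge m)\ind{A}}$ for all $\QQ\in\DD$, split off the singular part of $\QQ|_{\FF_s}$ using Proposition~\ref{charge}(v), and pass to the limit. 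That argument is not replaceable by an appeal to Fatou-approximation at this stage. Finally, your third part (extracting a Fatou-convergent $Z^{(n)}\to Z$ and reading off $B=Y/Z$) carries a secondary difficulty --- Fatou-convergent subsequences of supermartingales are produced by the F\"ollmer--Kramkov lemma under specific hypotheses, not generic compactness, and division of Fatou-limits is delicate --- but this part is in any case unnecessary once the three ingredients are in hand, because the Filtered Bipolar Theorem delivers the solid-hull identity structurally.
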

\begin{proof}
Since $\YY$ is the process-bipolar of $\YY^{\MM}$ from
(\ref{yqdefn}), by the Fil\-te\-red Bi\-po\-lar The\-o\-rem
\ref{fbt} it is enough to prove that $\yd$ is fork-convex,
contained in $\YY$, and Fatou-closed, since $\ym$ is already contained in
$\yd$.

The fork-convexity of $\yd$ follows from its definition, from the
fork-convexity of $\ym$, and from the fact (Theorem \ref{fbt})
that every $Y\in\yd\subseteq\YY$ can be Fatou-approximated by a
sequence in $\ym$.

As for Fatou-closedness, we take a sequence
$\{\ynn\}_{n\in\N}\subseteq\yd$, Fatou-converging towards a
supermartingale $Y$. Let $\lambda$ stand for the normalized
Lebesgue measure on $[0,T]$. By Koml\' os's theorem (see
\cite{Sch86}) and the convexity of $\yd$, we can assume that
$\{\ynn\}_{n\in\N}$ converges $(\lambda \otimes \PP)-$a.e., by passing
to a sequence of convex combinations if necessary (note that this
operation preserves the Fatou-limit). Let
$\{\QQ^n\}_{n\in\N}\subseteq\DD$ be a sequence such that
$Y^{(n)}=Y^{\QQ^n}$. By the weak * compactness of $\DD$, the
sequence $\{\QQ^n\}_{n\in\N}$ possesses a cluster point $\QQ^*$.
Proposition \ref{Fat2} now yields $Y=Y^{\QQ^*}$, implying
Fatou-closedness of $\yd$.

Finally, we prove that $\yd\subseteq \YY$. Let $X\in\XX$ be such
that $X_0=1$, and let $Y\in\yd$. By the definition of $\YY$, it will be
enough to show that $YX$ is a supermartingale, and by Proposition
\ref{pryq} it is enough to prove that $L^{\QQ}X$ is a
supermartingale, where $L^{\QQ}$ is the process defined in
(\ref{ldef}). Equivalently, we have to prove  $\scl{(\qfs)^r}{ X_s
\ind{A}}\geq \scl{(\qft)^r}{X_t \ind{A}}$, for all $0\leq s<t\leq
T$, $A\in\FF_s$. For this, we may assume without loss of
generality that $X_s$ is bounded on $A$.

Recall that, for $\QQ\in\MM$, the process $X$ is a nonnegative
$\QQ$-supermartingale. By density of $\MM$ in $\DD$, we easily
conclude that $\scl{\QQ}{ X_s\ind{A}}\geq \scl{\QQ}{ \left(
X_t\wedge m \right) \ind{A}} $,  for all $\QQ\in\DD$ and $m\in
(0,\infty)$. The regular-part-operator is positive, so we have
\[  \scl{(\qfs)^r}{ X_s\ind{A}} + \scl{(\qfs)^s}{X_s\ind{A}}
\geq \scl{(\qft)^r}{ \left( X_t\wedge m\right) \ind{A}},\ \
\forall\, m\in (0,\infty).\] Proposition \ref{charge} (v)
guarantees the existence of a sequence of sets $\seq{A}$ in
$\FF_s$ such that $\PP[A_n]>1-2^{-n}$ and $(\qfs)^s(A_n)=0$. We
get
\[  \scl{(\qfs)^r}{ X_s\ind{A\cap A_n}}\geq
\scl{(\qft)^r}{ \left(X_t\wedge m\right) \ind{A\cap A_n}},\
\forall\,m\in (0,\infty),\ n\in\N,\] and the claim follows by
letting $m,n\to\infty$.
\end{proof}

For future use, we restate the result of the Theorem \ref{yeyd} in
the following terms.
\begin{cor} Every $Y\in\YY$ can be written
as $Y=Y^{\QQ}D$, where $\QQ\in\DD$, and  $D$ is a nonincreasing,
nonnegative, $\FFF$-adapted process with $D_0\leq 1$ and RCLL
paths. The process $\yq$ can be obtained as the Fatou-limit of a
sequence of martingales in $\ym$.
\end{cor}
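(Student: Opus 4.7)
I tackle the two assertions separately. For the first, Theorem \ref{yeyd} identifies $\YY$ as the solid hull of $\yd$; unpacking this, every $Y\in\YY$ can be written $Y=\widetilde{Y}D$ for some $\widetilde{Y}\in\yd$ and some nonincreasing, nonnegative, $\FFF$-adapted RCLL process $D$ with $D_0\leq 1$. Since $\widetilde{Y}\in\yd$ means $\widetilde{Y}=Y^{\QQ}$ for some $\QQ\in\DD$ by the very definition of $\yd$, this yields the claimed factorization.

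For the second assertion, my plan combines the Filtered Bipolar Theorem \ref{fbt} with Koml\'os-type diagonal extraction and Proposition \ref{Fat2}. By the Remark following \ref{fbt}, $\ym^{\times\times}=\YY$, so the last sentence of that theorem delivers, for each $Y^{\QQ}\in\yd\subseteq\YY$, a Fatou-approximating sequence $Z^{(n)}=\sum_i\lambda_i^n Y^{\QQ_i^n}D_i^n$ lying in the solid-and-convex hull of $\ym$, with $\QQ_i^n\in\MM$ and $D_i^n$ nonincreasing, $\FFF$-adapted, RCLL, with $D_{i,0}^n\leq 1$. Enumerating the countable family $\{\QQ_i^n\}$ as a single sequence $(\QQ_k)\subseteq\MM$, I would apply Koml\'os's theorem diagonally over a countable dense subset of $[0,T]$, exploiting the $\lone$-bound $\EE[Y^{\QQ_k}_t]\leq Y^{\QQ_k}_0=1$ at each rational $t$, to pass to convex combinations $\widetilde{\QQ}_n\in\MM$ (still in $\MM$ by its convexity) along which $Y^{\widetilde{\QQ}_n}$ converges almost surely at every rational time and hence Fatou-converges to some RCLL supermartingale $Y^{*}$.

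Fatou-closedness of $\yd$ (Theorem \ref{yeyd}) then places $Y^{*}\in\yd$, so $Y^{*}=Y^{\QQ^{*}}$ for some $\QQ^{*}\in\DD$, and Proposition \ref{Fat2} identifies $\QQ^{*}$ with any weak * cluster point of $(\widetilde{\QQ}_n)$ in the weak *-compact set $\DD$. The main obstacle is the final identification $Y^{*}=Y^{\QQ}$, which I would establish by weaving together the original Fatou-approximation $Z^{(n)}\to Y^{\QQ}$ and the extracted one $Y^{\widetilde{\QQ}_n}\to Y^{*}$: the initial equality $Y^{\QQ}_0=Y^{\widetilde{\QQ}_n}_0=1$ together with the bounds $D_{i,0}^n\leq 1$ forces $Z^{(n)}_0\to 1$, and the time-monotonicity of each $D_i^n$ propagates this control so that the convex-combination step, which preserves the Fatou-limit of $Z^{(n)}$, can be arranged to also pick up $Y^{\QQ}$ as the Fatou-limit of the $\ym$-sequence $Y^{\widetilde{\QQ}_n}$.
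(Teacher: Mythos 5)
Your first assertion is handled correctly and exactly as the paper intends: Theorem \ref{yeyd} identifies $\YY$ as the solid hull of $\yd$, so every $Y\in\YY$ factors as $Y=\widetilde{Y}D$ with $\widetilde{Y}\in\yd$, and by definition of $\yd$ every $\widetilde{Y}$ is some $Y^{\QQ}$ with $\QQ\in\DD$. Nothing more is needed there.

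The second assertion is where the proposal falls short. Theorem \ref{fbt} as quoted only delivers a Fatou-approximating sequence $Z^{(n)}=\sum_i\lambda_i^n Y^{\QQ_i^n}D_i^n$ in the \emph{solid-and-convex} hull of $\ym$; since multiplying by the $D_i^n$ destroys the martingale property, these $Z^{(n)}$ are not elements of $\ym$, so there really is something to prove. Your Koml\'os/diagonal extraction on the martingales $\sum_i\lambda_i^n Y^{\QQ_i^n}=Y^{\widetilde{\QQ}_n}$ (with $\widetilde{\QQ}_n\in\MM$ by convexity of $\MM$) is a sensible move, and Fatou-closedness of $\yd$ plus Proposition \ref{Fat2} do place the resulting Fatou-limit $Y^*$ in $\yd$. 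But the key step, that $Y^*=Y^{\QQ}$, is not established. Indeed, since $D_i^n\leq 1$ you have $Z^{(n)}\leq Y^{\widetilde{\QQ}_n}$ pointwise, and Fatou-limits preserve order, so at best you obtain $Y^{\QQ}\leq Y^*$; there is no argument here that excludes strict inequality. The ``initial equality forces $Z^{(n)}_0\to 1$'' claim does not follow from Fatou-convergence (Definition \ref{FatD1} controls right limits at $t$, not values at $t=0$), and the assertion that time-monotonicity of the $D_i^n$ ``propagates this control'' so that the extraction ``can be arranged to also pick up $Y^{\QQ}$'' is a plan, not a proof. You have yourself flagged this as ``the main obstacle,'' and it remains unresolved.

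For orientation: the paper does not supply a separate argument for this corollary. It is presented as a restatement, and the second sentence is taken directly from the refined approximation machinery of the Filtered Bipolar Theorem of \cite{Zit00}, which the paper cites in the proof of Theorem \ref{yeyd} precisely in the form ``every $Y\in\yd\subseteq\YY$ can be Fatou-approximated by a sequence in $\ym$.'' If you want to prove the second sentence from what is quoted here, you would need to supply the missing half of the sandwich --- an upper control matching the lower bound $Y^*\geq Y^{\QQ}$ --- for instance by comparing against a weak$*$ cluster point of $\{\widetilde{\QQ}_n\}$ and using Proposition \ref{charge}(vi) at $t=T$ together with Proposition \ref{pryq}; your current write-up does not carry this out.
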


\subsection{A Characterization Of Admissible Consumption Processes} The
enlargement of the dual domain from $\MM$ to $\DD$ necessitates a
reformulation  of certain old results in the new setting. As given
in subsection \ref{FinMar1}, the definition of an admissible
consumption process is as intuitively graspable as practically
useless. To remedy this situation, we establish a
budget-constraint-characterization of admissible consumption
processes, analogous to Theorem 3.6, p. 166 in [KS98], in the
context of the
endowment process ~$x+\EN=\left(x+\EN_t\right)_{t\in [0,T]}$.

\begin{prop} \lprint{OU} A nonnegative, nondecreasing,
right-continuous and $\FFF$-adapted  process
$C$ is an admissible cumulative consumption process, if and only if
\begin{equation}\lprint{c1} \EE\left[\int_0^T Y^{\QQ}_t\,dC_t\right]\leq
x+\scl{\QQ}{\EN_T},\ \ \mbox{for all \ $\QQ\in\DD$.}\end{equation}
\end{prop}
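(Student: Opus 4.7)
My plan is to prove the two implications separately: first, necessity for $\QQ\in\MM$ (where $Y^{\QQ}$ is a genuine density martingale); then extension to $\QQ\in\DD$ by weak*-density combined with a Fatou-type argument; finally, sufficiency via the optional decomposition theorem for cone-constrained semimartingale markets.

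For necessity with $\QQ\in\MM$, fix an admissible strategy $(H,C)$, set $G\triangleq\int_0^{\cdot}H_u\,\dd{S_u}$, and pick $c\geq 0$ with $G\geq -c$, so that $\tilde X\triangleq x+c+G$ is a nonnegative element of $\XX$. For $\QQ\in\MM$ the process $Y^{\QQ}$ is the RCLL martingale $\prf{\EE[\rn{\QQ}{\PP}|\FF_t]}$, and $\tilde X$ is a $\QQ$-supermartingale by the definition of $\MM$, whence $\EE_{\QQ}[G_T]\leq 0$. A standard Fubini-type identity, derivable from $Y^{\QQ}_t=\EE[Y^{\QQ}_T|\FF_t]$ via integration by parts and the adaptedness of $C$, gives
\[
\EE\!\left[\int_0^T Y^{\QQ}_t\,\dd{C_t}\right]=\EE[Y^{\QQ}_T C_T]=\EE_{\QQ}[C_T].
\]
Combining with the admissibility inequality $C_T\leq x+\EN_T+G_T$ and $\EE_{\QQ}[G_T]\leq 0$ yields the claimed bound $\EE[\int_0^T Y^{\QQ}_t\,\dd{C_t}]\leq x+\scl{\QQ}{\EN_T}$.

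To extend to $\QQ\in\DD$, pick $\{\QQ^{(n)}\}\subseteq\MM$ weak*-converging to $\QQ$. Koml\'os's theorem combined with the fork-convexity of $\ym$ (exactly as in the proof of Theorem \ref{yeyd}) furnishes convex combinations $\tilde\QQ^{(n)}\in\MM$, still weak*-converging to $\QQ$, for which $Y^{\tilde\QQ^{(n)}}$ Fatou-converges to $Y^{\QQ}$ by Proposition \ref{Fat2}; Lemma \ref{linf} upgrades this to pointwise a.s.\ convergence off a countable deterministic set $K\subset[0,T)$. The right-hand side passes to the limit since $\EN_T\in\linf$, giving $\scl{\tilde\QQ^{(n)}}{\EN_T}\to\scl{\QQ}{\EN_T}$. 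For the left-hand side, I first upgrade the pointwise Fatou-convergence to $\liminf_n Y^{\tilde\QQ^{(n)}}_t\geq Y^{\QQ}_t$ a.s.\ at \emph{every} fixed $t\in[0,T]$: for $s>t$ with $s\notin K$ the supermartingale property of each $Y^{\tilde\QQ^{(n)}}$ and conditional Fatou give $\liminf_n Y^{\tilde\QQ^{(n)}}_t\geq\liminf_n\EE[Y^{\tilde\QQ^{(n)}}_s|\FF_t]\geq\EE[Y^{\QQ}_s|\FF_t]$, and passing $s\downarrow t$ along $s\notin K$ together with right-continuity of $Y^{\QQ}$ yields the inequality. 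An $\omega$-wise Fatou's lemma against the random measure $\dd{C_t}$ then delivers $\EE[\int_0^T Y^{\QQ}_t\,\dd{C_t}]\leq\liminf_n\EE[\int_0^T Y^{\tilde\QQ^{(n)}}_t\,\dd{C_t}]\leq x+\scl{\QQ}{\EN_T}$.

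For sufficiency, specialising the budget inequality to $\QQ\in\MM\subseteq\DD$ and using the identity from the first paragraph reduces the hypothesis to $\EE_{\QQ}[C_T-\EN_T]\leq x$ for every $\QQ\in\MM$. The claim $\xi\triangleq C_T-\EN_T$ is bounded below (by $-\norm{\EN_T}_{\infty}$), and $\sup_{\QQ\in\MM}\EE_{\QQ}[\xi]\leq x$; the optional decomposition theorem for cone-constrained semimartingale markets then produces an $S$-integrable, $\KK$-valued predictable $H$ whose gains process $G$ is uniformly bounded below and satisfies $x+G_T\geq\xi$ almost surely. Consequently $W^{H,C}_T=x+\EN_T+G_T-C_T\geq 0$, so $(H,C)$ is admissible and $C$ is an admissible consumption process. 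The main obstacle is the necessity extension to $\DD$: atoms of the arbitrary nondecreasing RCLL process $C$ may lie at the exceptional countable set $K$ where Fatou-convergence is not pointwise, and only the conditional-Fatou/right-continuity refinement above, together with the characterization of $Y^{\QQ}$ in terms of $L^{\QQ}$ from Proposition \ref{pryq}, lets the passage to the liminf absorb these atoms.
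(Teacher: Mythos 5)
Your sufficiency direction matches the paper's approach (reduce to $\EE_{\QQ}[C_T-\EN_T]\leq x$ for $\QQ\in\MM$, then invoke the constrained optional decomposition theorem), and your necessity argument restricted to $\QQ\in\MM$ is correct in spirit, modulo the localization needed to justify the identity $\EE[\int_0^T Y^{\QQ}_t\,dC_t]=\EE_{\QQ}[C_T]$ (only the $\leq$ inequality is actually used, obtained in the paper via a reducing sequence of stopping times).

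The genuine gap is the opening move of your extension from $\MM$ to $\DD$: ``pick $\{\QQ^{(n)}\}\subseteq\MM$ weak*-converging to $\QQ$.'' Since $\linf(\PP)$ is in general non-separable, the weak*-topology on bounded subsets of $\linfd$ is not metrizable, and the weak*-closure $\DD$ of $\MM$ need not coincide with its weak*-sequential closure; there may be no sequence in $\MM$ weak*-converging to a given $\QQ\in\DD$. This is precisely the difficulty the paper's proof is built to circumvent: instead of a weak*-convergent sequence, it introduces a pseudometric generated by \emph{countably many} test functionals ($C_T$ and the indicators $\inds{D_{s_k}<\infty}$ at a countable dense set $\{s_k\}$), observes that the induced topology is coarser than weak*, and only then extracts a convergent sequence in $\MM$; monotonicity of $\phi(s)=\EE[Y^{\QQ}_{D_s}\inds{D_s<\infty}]$ and Proposition \ref{pryq}(b) at stopping times do the rest. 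The Remark following the Proposition explicitly flags the non-metrizability of $\linfd$ (together with the failure of Fubini in the finitely-additive setting) as the reason the proof is lengthy. Everything downstream in your argument — Komlós, Proposition \ref{Fat2} (which requires $\QQ$ to be a weak* \emph{cluster point} of the sequence to identify the Fatou-limit with $Y^{\QQ}$), the conditional-Fatou refinement, and the passage $\scl{\tilde\QQ^{(n)}}{\EN_T}\to\scl{\QQ}{\EN_T}$ — presupposes the missing sequence, so the argument does not close as written. Your observation that atoms of $C$ sitting in the exceptional countable set are dangerous, and your conditional-Fatou/right-continuity remedy, are correct and pertinent, but they address a later step; the nonexistence of the sequence is the prior obstacle. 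Repairing your proof would require showing that the convergences you need can be packaged into countably many functionals on $\linf$ and constructing the sequence from a coarser pseudometric — at which point you would have essentially reproduced the paper's argument.
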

\begin{proof}
Let $C$ be a nonnegative nondecreasing adapted right-continuous
process satisfying (\ref{c1}). For each probability measure
$\QQ\in\MM$, the process $Y^{\QQ}$ is the RCLL  modification of
the martingale $\EE[\rn{\QQ}{\PP}|\FF_t]$, $t\in [0,T]$. By virtue
of the left-continuity and existence of right-limits for the
process $t\mapsto C_{t-}$, the stochastic integral
$\, M_t\triangleq\int_0^t C_{u-}\, d\yq_u\,,~ 0 \le t \le T\,$ is a local martingale
(\cite{Pro90}, Theorem III. 17),  so we can find a non-decreasing
sequence of stopping times $\{T_n\}_{n\in\N}$ such that $\PP[T_n=
T]\to 1$ as $n\to\infty$, and the processes $M^{T_n}_{\cdot}\equiv
M_{\cdot\wedge T_n}$ are uniformly integrable martingales, for
each $n\in\N$. By the assumption (\ref{c1}) and the
integration-by-parts formula, we have
\begin{eqnarray}\nonumber \lprint{ce2} x+\scl{\QQ}{\EN_T}&\geq& \ei{T}{\yq_t\, dC_t}=
\lim_{n} \ \ei{T_n}{\yq_t\, dC_t}=\lim_n \left(
\ei{T_n}{\yq_{t-}\, dC_t}+ \sum_{s\leq T_n} \Delta \yq_s\Delta
C_s\right)\\ &=& \lim_n \left(
\EE\left[\yq_{T_n}C_{T_n}-\int_0^{T_n} C_{t-}\,
d\yq_t\right]\right) =\lim_n \ \EE_{\QQ}[C_{T_n}]=
\scl{\QQ}{C_T}.\end{eqnarray}
 Let us define \[ Z_t\triangleq\esssup_{\QQ\in\MM} \EE_{\QQ}[C_T-\EN_T|\FF_t],\
 ~~0\leq t\leq T.\] From
Theorem 2.1.1 in \cite{ElkQue95}, the process $Z$ is a
supermartingale under {\it each} $\QQ\in\MM$, with a RCLL modification.
Choose this RCLL  version for $Z$. Moreover, $Z$ is uniformly
bounded from below and $Z_0\leq x$; this is because
$\EE_{\QQ}[C_T-\EN_T]\leq x$ for every $\QQ\in\MM$, thanks to
(\ref{ce2}). Applying the Constrained Version of the Optional
Decomposition Theorem (see \cite{FolKra97}, Theorem 4.1) to $Z$,
we can assert the existence of an admissible portfolio $\hH$ and
of a nondecreasing optional process $F$ with $F_0\geq 0$, such
that
 $ Z_t=\hat{X}_t-F_t$, where $\hat{X}_t\triangleq x+\int_0^t \hH_u\,{ dS}_u.$
 On the other hand, by
the increase of $C$ we have
 \[\hat{X}_t-F_t=Z_t\geq C_t-\essinf_{\QQ\in\MM}
\EE_{\QQ}[\EN_T|\FF_t],\ \ \ \ \ t\in [0,T],\]
so that $\hat{X}_T-C_T+\EN_T\geq F_T\geq F_0\geq 0$, a.s.,
implying the admissibility of the strategy $(\hH,C)$.
\medskip

Conversely, let $C$ be an admissible consumption process; there
exists then an admissible porfolio process $H$, such that the process
$X_{\cdot}\triangleq x+\int_0^{\cdot}H_u\,dS_u$ satisfies
$X_T-C_T+\EN_T\geq 0$. By the supermartingale property of $X$
under every $\QQ\in\MM$, we conclude that $\scl{\QQ}{C_T}\leq x
+\scl{\QQ}{\EN_T}$, $\forall\, \QQ\in\MM$. Suppose first that $C$ is
uniformly bounded from above by a constant $M$, and define its
right-continuous inverse (taking values in $[0,\infty]$) by
\[ D_s=\inf\set{\,t\geq 0\,:\, C_t>s},\ \ 0\leq s<\infty.\]
For an arbitrary, but fixed $\QQ\in\DD$, by Theorem 55 in
\cite{DelMey82} and Fubini's theorem, we can write
\[ \EE\left[\int_0^T Y^{\QQ}_t\, dC_t\right]=\EE\left[\int_0^M
\yq_{D_s}\inds{D_s<\infty}\, ds\right] = \int_0^M \phi(s) \, ds,\]
where $\phi(s)=\EE[\yq_{D_s}\inds{D_s<\infty}]$. By the
supermartingale property of $\yq$ and the increase of $D$, the
function $\phi$ is nonincreasing, so we can find a countable set
$K$, dense in $[0,M]$, that contains all discontinuity points of
$\phi$. For a denumeration $\{s_k\}_{k\in\N}$ of $K$, the topology
on $\DD$ induced by the pseudometric
\[ d(\QQ_1,\QQ_2)=|\scl{\QQ_1-\QQ_2}{C_T}|+\sum_{k} 2^{-n}|
\scl{\QQ_1-\QQ_2}{\inds{D_{s_k}<\infty}}|\] is
 coarser than the weak * topology on $\DD$, so we
can find a {\em sequence} $\{\QQ^n\}_{n\in\N}\subseteq\MM$ such
that
\[\scl{\QQ^n}{C_T}\to \scl{\QQ}{C_T}\ \mbox{\ \ and\ }\
\scl{\QQ^n}{\inds{D_{s}<\infty}}\to \scl{\QQ}{\inds{D_s<\infty}},\
\ \text{as $n\to\infty$},
\] for every $s\in K$. Such choice for the sequence $\{\QQ^n\}_{n\in\N}$ implies
that $\phi^n(s)=\EE_{\QQ^n}[\inds{D_s<\infty}]$ converges to
$\scl{\QQ}{\inds{D_s<\infty}}$ for {\em every} $s$. Using again
Theorem 55 in \cite{DelMey82}, the integration-by-parts formula
from the first part of the proof, and the Dominated Convergence
Theorem, we get
\begin{eqnarray*}  x+\scl{\QQ}{\EN_T} &=&  x+\lim_n \scl{\QQ^n}{\EN_T}
 ~ \geq ~ \lim_n\scl{\QQ^n}{C_T}\  =\ \lim_n \EE\left[\int_0^T
Y^{\QQ^n}_t\, dC_t\right]=\lim_n \int_0^M \phi^n(s)\\ &=& \int_0^M
\scl{\QQ}{\inds{D_s<\infty}}\, ds.
\end{eqnarray*}
As $D_s$ is a stopping time, Proposition \ref{pryq},(b) yields
\begin{eqnarray*} \int_0^M \scl{\QQ}{\inds{D_s<\infty}}\,ds
\ \geq\  \int_0^M \EE\left[
\frac{d(\QQ|_{\FF_{D_s}})^r}{d(\PP|_{\FF_{D_s}})}
\inds{D_s<\infty}\right]\,ds \ \geq\  \int_0^M
\EE[Y^{\QQ}_{D_s}\inds{D_s<\infty}]\,ds \ =\  \EE\int_0^T
Y^{\QQ}_t\,dC_t, \end{eqnarray*} which establishes the claim.

We turn now to the case of $C$ which is not necessarily bounded.
For $M\in\N$, the truncated consumption process $C^M=C\wedge M$ is
admissible and (\ref{c1}) holds with $C$ replaced by $C\wedge M$.
Passing to the limit as $M\to\infty$ on the left-hand side of
(\ref{c1}) is justified by the increase of the trajectories of $C$ and
the Monotone Convergence Theorem.
\end{proof}

\begin{rem}{\rm  The necessity for the rather lengthy and technical proof of
this result (to be
more precise: the authors' inability to find a shorter one), stems
from two rather unpleasant facts: first, $\linfd$ is not
metrizable, and secondly, Fubini's theorem fails in the setting of
finitely-additive measures (see \cite{YosHew52}, Theorem 3.3, p.
57 for such a counterexample). }\end{rem}


\section{The optimization problem}\lprint{OptProb}

\subsection{The Preference Structure}

Apart from  external factors, such as market conditions and the
randomness of the endowment process $\EN$, it is important to
describe the agent's ``preference structure'' (or idiosyncratic
rapport with risk). We shall adopt the von Neyman-Morgenstern
utility approach to risk-aversion, and proceed to define a utility
random field
 \ $U:\Omega\times[0,T]\times \R_+\to\R$.

We shall impose no smoothness conditions in the time parameter.
Instead, we shall control the range of the marginal utility. As
seen in \cite{KraSch99}, a condition of {\em reasonable asymptotic
elasticity} (in the setting of an incomplete semimartingale market
with initial endowment only, and utility from terminal wealth) is
both necessary and sufficient for the existence of an optimal
investment policy. This is the reason for extending the notion of
asymptotic elasticity to the time-dependent case, and for restricting
our analysis to reasonably elastic utilities only. More precisely,
we have the following definition.
\begin{defn}
\lprint{DefUtility} A jointly measurable function
$U:\Omega\times[0,T]\times\R_+\to\R$ is called a {\bf (reasonably
elastic) utility random field}, if it has the following properties
(unless specified otherwise, all these properties are assumed to hold
almost surely and the argument $\omega\in\Omega$ will
consistently be suppressed):
\begin{enumerate}
\item \lprint{1} For a fixed $t\in[0,T]$, $U(t,\cdot)$ is strictly concave,
increasing and $C^1$ satisfying the so-called Inada conditions
$\pd U(t,0+)=\infty$ and $\pd U(t,\infty)=0$. In other words,
$U(t,\cdot)$ is a {\bf utility function}.
\item \lprint{2} There are continuous, strictly decreasing (nonradom)
functions $K_1:\R_+\to \R_+$ and $ K_2:\R_+\to \R_+$ such that for
all $t\in [0,T]$ and $x>0$, we have $ K_1(x)\leq \pd U(t,x) \leq
K_2(x)$, and $\limsup_{x\to\infty}\frac{K_2(x)}{K_1(x)}<\infty$.
\item \lprint{3} At $x=1$, $t\mapsto U(t,1)$ is a uniformly bounded function
of $(\omega,t)$ and $\lim_{x\to\infty} ( \essinf_{t,\omega}
U(t,x){)}
>0$.
\item \lprint{4} $U$ is {\bf reasonably elastic}, i.e., its {asymptotic
elasticity}  satisfies ${\mathrm AE}[U]<1$ a.s., where
\[{\mathrm AE}[U]:=\limsup_{x\to\infty} \left(\esssup_{t,\omega} \frac{x\pd
U(t,x)}{U(t,x)}\right).\]
\item \lprint{5} For any $x>0$, the stochastic process $U(\cdot,x)$
is $\FFF$-progressively measurable.
\end{enumerate}
\end{defn}
\begin{rem}{\rm
 Condition \ref{3} is the least restrictive - in fact, it
only serves to simplify the analysis by excluding some trivial
nuisances, as well as to have the expression ${\mathrm{AE}}[U]$ of
part \ref{4} well defined. It is an immediate consequence of
conditions \ref{2} and \ref{3} that the function $t\to U(x_0,t)$
is bounded for any $x_0>0$, a.s. Also,
 the trajectory $U(t,\infty)$ is either a bounded function of
$t$, or we have $U(t,\infty)=\infty$ for all $t$, a.s. }\end{rem}
\begin{exam}
\lprint{ex1} Let $\hat{U}:\R_+\to\R_+$ be a utility function as in
Definition \ref{DefUtility} (\ref{1}),   with $\hat{U}(\infty)>0$
and $\limsup_{x\to\infty} \frac{x\hat{U}'(x)}{\hat{U}(x)}<1.$ Let
$\psi$ be a measurable function of $[0,T]$ such that $0<\inf_{t\in
[0,T]} \psi (t)\leq \sup_{t\in [0,T]} \psi (t)<\infty$. Then it is
easy to see that $U(t,x)\triangleq \psi (t) \hat{U}(x)$ is a
reasonably elastic utility random field. In particular, this
example includes so-called {\em discounted} time-dependent utility
functions of the form $U(t,x)=e^{-\beta t} \hat{U}(x)$.
\end{exam}
\begin{exam}
\lprint{ex2} Let $U_1: [0,T]\times\R_+\to\R$ be a deterministic
utility   field with corresponding $K_1$ and $K_2$ as in
Definition \ref{DefUtility},(2). Further, let $U_2:\R_+\to\R$ be a
utility function satisfying
\[ U_2(\infty)>0,~~ \limsup_{x\to\infty} \frac{xU_2'(x)}{U_2(x)}<1\
\ \ \mbox{and}\ \ \ 0<\liminf_{x\to\infty}
\frac{U_2'(x)}{K_1(x)}\leq\limsup_{x\to\infty}
\frac{U_2'(x)}{K_1(x)}<\infty.\] One can check then the
requirements of Definition \ref{DefUtility} to see that
\[ U(t,x):=\left\{ \begin{array}{cl} U_1(t,x),& t<T\\ U_2(x), & t=T
\end{array}\right\}\] is a reasonably elastic utility
random field.
\end{exam}
\begin{exam}\lprint{stochutil} Let $U_1:[0,T]\times\R_+\to\R$ be any
deterministic reasonably elastic utility   field, and let $B_t$ be
a adapted process uniformly bounded from above and away from zero.
To model a stochastic discount factor, we define $U(t,x)\triangleq
U_1(t,B_tx)$. Such a utility random field arises when the agent
accrues utility from nominal, instead of real value of
consumption.
\end{exam}

 With a utility random field $U$ we
associate a random field $V:\Omega\times[0,T]\times \R_+\to \R$
defined by \begin{equation}\lprint{defv}
V(t,y)\triangleq\sup_{x>0} [ U(t,x)-xy],\ \ \
0<y<\infty,\end{equation}
 the {\bf conjugate of} $\,U$. We also define the random field
$I:\Omega\times[0,T]\times \R_+\to \R$, by $I(t,y)=( \pd
U(t,\cdot))^{-1}(y),$ the {\bf inverse marginal utility} of $U$.
The following proposition lists some important,  though technical,
properties of these random fields and their conjugates. They will
be used extensively in the sequel. We leave the proof to the
diligent reader.
\begin{prop}\lprint{conjuv}
 Let $U$ be a
utility random field and $V$ its conjugate.
\begin{enumerate} \item There are (deterministic) utility functions
$\uu$ and $\ou$ such that \[ \uu(x)\leq U(t,x)\leq \ou(x) \
\text{for all  $x>0$ and all $t\in[0,T]$},\ \text{a.s.}\] \item
For a given $t\in [0,T]$, the function $V(t,\cdot)$ is  finite
valued, strictly decreasing, strictly convex and continuously
differentiable.
\item The convex conjugates $\uv$ and $\ov$ of $\uu$ and $\ou$
satisfy
\[ \uv(y)\leq V(t,y)\leq \ov(y)\  \text{for all  $y>0$, and all $t\in[0,T]$\
 a.s.}\] In particular, the function $t\mapsto V(t,y)$ is uniformly
bounded, for any $y\in (0,\infty)$.
\end{enumerate}
\end{prop}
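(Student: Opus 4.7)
The proposition is a collection of standard consequences of the hypotheses in Definition \ref{DefUtility}, and I would attack the three parts in order.

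For part (1), the idea is to integrate the pointwise marginal bounds $K_1(x) \leq \pd U(t, x) \leq K_2(x)$ from part (2) of the definition, anchoring at $x = 1$ where part (3) furnishes a uniform bound $|U(t, 1)| \leq M$. The fundamental theorem of calculus gives, for $x \geq 1$,
\[
-M + \int_1^x K_1(u)\,du \;\leq\; U(t, x) \;\leq\; M + \int_1^x K_2(u)\,du,
\]
and, for $x < 1$, the analogous bounds with the roles of $K_1, K_2$ interchanged (because $-\int_x^1 \cdot\, du$ reverses inequalities). I would then read off explicit piecewise candidates for $\uu$ and $\ou$. The crucial checks are the Inada conditions, which follow from $K_2(0^+) = \infty$ and $K_1(\infty) = 0$; both of these are forced by the pointwise relations $K_1 \leq \pd U \leq K_2$ together with $\pd U(t, 0^+) = \infty$ and $\pd U(t, \infty) = 0$.

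Part (2) is routine convex analysis: for each $(t, \omega)$, strict concavity, $C^1$-regularity, and the Inada conditions on $U(t, \cdot)$ ensure that the supremum in \eqref{defv} is attained at the unique critical point $x^* = I(t, y) \triangleq (\pd U(t, \cdot))^{-1}(y)$, giving $V(t, y) = U(t, I(t, y)) - y\,I(t, y)$ and $\pd V(t, y) = -I(t, y)$; finiteness on $(0,\infty)$, strict decrease, strict convexity, and continuous differentiability all follow. Part (3) is then an immediate corollary of part (1): the Legendre transform reverses the pointwise order, so taking suprema in $\uu(x) - xy \leq U(t, x) - xy \leq \ou(x) - xy$ over $x > 0$ gives $\uv(y) \leq V(t, y) \leq \ov(y)$, and the uniform boundedness of $V(\cdot, y)$ is inherited from the finiteness of the deterministic envelopes $\uv(y)$ and $\ov(y)$.

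\textbf{Expected obstacle.} The delicate point is the concavity of $\ou$ in part (1): the piecewise definition using $K_1$ on $(0, 1)$ and $K_2$ on $(1, \infty)$ has a derivative that jumps \emph{upward} across $x = 1$ (since $K_1(1) \leq K_2(1)$), violating concavity. A smoothing of $\ou$ near the join, or passage to the concave envelope of the piecewise majorant, repairs this while preserving both the domination of $U$ and the Inada conditions; the analogous step for $\uu$ is automatic, since there the derivative jumps \emph{downward} at $x = 1$.
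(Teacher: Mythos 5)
The paper leaves the proof to ``the diligent reader'', so there is no model argument to compare against; the proposal must stand on its own. Parts (2) and (3) are fine and genuinely routine. In part (1), however, your treatment of the majorant $\ou$ has a defect beyond the concavity kink you flag. You claim the Inada conditions ``follow from $K_2(0^+) = \infty$ and $K_1(\infty) = 0$''; this is correct for $\uu$, but your piecewise candidate for $\ou$ has derivative $K_1$ near the origin, so its Inada condition there requires $K_1(0^+) = \infty$ --- and Definition \ref{DefUtility}(2) does \emph{not} force this. (Only $K_2(0^+)=\infty$ is forced, by $\pd U(t,0^+)=\infty$ and $\pd U \le K_2$; the $\limsup$ hypothesis constrains $K_1$ at infinity, not at zero.) Neither smoothing the join nor passing to the concave envelope helps, since both operations leave the derivative of $\ou$ unchanged on a neighborhood of the origin, where it stays bounded by $K_1(0^+)$.

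The fix requires a different choice of $\ou'$ near $0$: take on $(0,1]$ any continuous, strictly decreasing function that blows up at $0$, matches $K_2(1)$ at $1$ (so that appending $K_2$ on $[1,\infty)$ yields a continuous, strictly decreasing derivative, giving a $C^1$ strictly concave $\ou$), and is \emph{integrable} on $(0,1)$ --- for instance $\ou'(u) = K_2(1) + u^{-1/2} - 1$. This sacrifices the pointwise bound $\ou' \le K_1$ near the origin, so the naive domination $\ou(x) \ge M - \int_x^1 K_1(u)\,du$ need no longer hold; one compensates by enlarging the constant, setting $\ou(1) = M + \sup_{x\in(0,1)}\int_x^1\bigl(\ou'(u)-K_1(u)\bigr)\,du$, which is finite precisely because $\int_0^1 \ou'(u)\,du < \infty$. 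With that correction the construction of $\ou$ goes through; the $\uu$ side is indeed easier (the derivative jumps downward at $x=1$, so only a $C^1$ smoothing of the corner is needed, absorbed by a downward shift of $\uu(1)$), and the rest of your argument is sound.
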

\begin{defn} \lprint{defmin}Any utility functions (i.e., strictly concave, increasing,
and continuously differentiable
functions that satisfy the Inada conditions) $\uu:\R_+\to\R$ and
$\ou:\R_+\to R$,
 such that $\uu(x)\leq U(t,x)\leq \ou(x)$ for all $x>0$ and $t\in [0,T]$,  are
called a {\bf\em minorant} and a {\bf\em majorant} of $U$,
respectively. Functions $\uv:\R_+\to\R$ and $\ov:\R_+\to\R$, that
are convex conjugates of some minorant and  majorant of $U$, are
called a {\bf\em minorant} and a {\bf\em majorant} of $V$,
respectively.
\end{defn}
\begin{rem}{\rm  It follows immediately from the definition of convex
conjugation that for any minorant and majorant $\uv$ and $\ov$ of
$V$, we have $\uv(y)\leq V(t,y)\leq \ov(y)$, for all $y>0$ and
$t\in[0,T]$. }\end{rem}
 Finally, we state a technical
result stemming from the reasonable-asymptotic-elasticity
condition; its proof is, mutatis mutandis, identical to the proof
leading to Corollary 6.3., page 994 of \cite{KraSch99}.
\begin{prop} \lprint{ae1} Let $U$ be a utility random field. If we
define the random sets
\begin{eqnarray*}
\Gamma_1&=&{\Big\{} \gm>0\, :\, \exists\, x_0>0,\,\ \forall\, t\in
[0,T],\,\ \forall\, \ld>1,\, \ \forall\, x\geq
x_0,\ \ U(t,\ld x)< \ld^{\gm} U(t,x){\Big\}}\\
\Gamma_2&=&{\Big\{} \gm>0\, :\, \exists\, x_0>0,\,\ \forall\, t\in
[0,T],\,\ \forall\, x\geq x_0, \ \ \pd U(t,x)< \gm
\frac{U(t,x)}{x}{\Big\}} \\
\Gamma_3&=&{\Big\{} \gm>0\, :\, \exists\, y_0>0,\,\ \forall\, t\in
[0,T],\,\ \forall\, 0<\rho<1,\, \ \forall\, 0<y\leq y_0,\ \
V(t,\rho y)<\rho^{-\frac{\gm}{1-\gm}}
V(t,y){\Big\}}\\
\Gamma_4&=&{\Big \{} \gm>0\, :\, \exists\, y_0>0,\,\ \forall\,
t\in [0,T],\,\ \forall\, 0<y\leq y_0,\ \ -\pd V(t,y)<
\frac{\gm}{1-\gm} \frac{V(t,y)}{y}{\Big\}},
\end{eqnarray*} then\ \
$\inf\,\Gamma_1=\inf\,\Gamma_2=\inf\,\Gamma_3=\inf\,\Gamma_4={\mathrm
AE}[U]$, a.s.
\end{prop}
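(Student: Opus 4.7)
The plan is to establish the chain of equalities
\[
\mathrm{AE}[U]=\inf\Gamma_2=\inf\Gamma_1=\inf\Gamma_4=\inf\Gamma_3,
\]
treating each link separately. For a single deterministic utility function these equivalences are standard (essentially Lemma~6.1--Corollary~6.3 of \cite{KraSch99}), so the work lies in transporting those arguments into the random, $t$-dependent setting. Two structural features of Definition~\ref{DefUtility} will do this for us: Condition~(\ref{3}) provides the uniform positivity $U(t,x)>0$ for $x$ large and, dually, $V(t,y)>0$ for $y$ small; Condition~(\ref{2}) supplies the nonrandom envelopes $K_1(\cdot)\leq\pd U(t,\cdot)\leq K_2(\cdot)$, which let us translate thresholds in $x$ to thresholds in $y=\pd U(t,x)$ without sacrificing uniformity in $(t,\omega)$.

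The identity $\mathrm{AE}[U]=\inf\Gamma_2$ is a direct rereading of definitions. The equivalence $\inf\Gamma_1=\inf\Gamma_2$ follows from standard calculus: $\gamma\in\Gamma_2$ gives $\partial_x\log U(t,x)<\gamma/x$, and integrating from $x$ to $\lambda x$ puts $\gamma$ into $\Gamma_1$. For the reverse, if $\gamma\in\Gamma_1$ then concavity of $U(t,\cdot)$ yields, for $\lambda>1$,
\[
(\lambda-1)x\,\pd U(t,\lambda x)\leq U(t,\lambda x)-U(t,x)<(\lambda^\gamma-1)U(t,x)\leq(\lambda^\gamma-1)U(t,\lambda x),
\]
so $(\lambda x)\pd U(t,\lambda x)/U(t,\lambda x)<\lambda(\lambda^\gamma-1)/(\lambda-1)$; letting $\lambda\downarrow 1$ places every $\gamma'>\gamma$ into $\Gamma_2$.

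The crucial link is $\inf\Gamma_2=\inf\Gamma_4$, where Legendre--Fenchel duality enters. At $y=\pd U(t,x)$ one has $I(t,y)=x$, $V(t,y)=U(t,x)-xy$ and $-\pd V(t,y)=x$, so a direct computation gives
\[
\frac{-y\,\pd V(t,y)}{V(t,y)}=\frac{\phi(t,x)}{1-\phi(t,x)},\qquad \phi(t,x)\triangleq\frac{x\,\pd U(t,x)}{U(t,x)}\in[0,1).
\]
Because $\xi\mapsto\xi/(1-\xi)$ is a strictly increasing bijection of $[0,1)$ onto $[0,\infty)$, the property ``$\phi(t,x)<\gamma$ uniformly for $x\geq x_0$'' is equivalent to ``$-y\pd V(t,y)/V(t,y)<\gamma/(1-\gamma)$ uniformly for $y\leq y_0$'', \emph{provided} the translation from $x\geq x_0$ to $y\leq y_0$ can be carried out uniformly in $(t,\omega)$. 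This is exactly what Condition~(\ref{2}) buys: the choice $y_0\triangleq K_1(x_0)$ makes $y\leq y_0$ force $x=I(t,y)\geq x_0$ for every $(t,\omega)$. The remaining equality $\inf\Gamma_3=\inf\Gamma_4$ is obtained by rerunning the integration argument of the previous paragraph for $V$ near $y=0$, using convexity of $V(t,\cdot)$ in place of concavity of $U(t,\cdot)$, and invoking positivity of $V(t,y)$ for small $y$ (which follows from $V(t,0+)=U(t,\infty)\geq\essinf_{t,\omega}U(t,\infty)>0$ by Condition~(\ref{3})).

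The main obstacle, in my view, is not the calculus itself---which is essentially verbatim Kramkov--Schachermayer---but the careful propagation of the thresholds $x_0,y_0$ through the chain in a way that remains uniform in $(t,\omega)$. The deterministic envelopes $K_1,K_2$ from Condition~(\ref{2}) and the uniform positivity from Condition~(\ref{3}) of Definition~\ref{DefUtility} are precisely the structural hypotheses that make this transfer succeed.
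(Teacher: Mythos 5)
The paper itself does not give a proof of Proposition~\ref{ae1}; it simply declares the argument to be \emph{mutatis mutandis} identical to the one leading to Corollary~6.3 of \cite{KraSch99}. Your proposal takes exactly that route and, usefully, makes explicit where the ``mutandis'' lives: (a) the definitional rereading $\mathrm{AE}[U]=\inf\Gamma_2$, (b) the calculus links $\Gamma_1\leftrightarrow\Gamma_2$ and $\Gamma_3\leftrightarrow\Gamma_4$ via integration of $\partial_x\log U$ (resp.\ $\partial_y\log V$) and a concavity/convexity one-liner for the reverse direction, and (c) the duality pivot $\Gamma_2\leftrightarrow\Gamma_4$ via the pointwise identity $-y\,\pd V(t,y)/V(t,y)=\phi/(1-\phi)$ with $\phi=x\,\pd U(t,x)/U(t,x)$ at $y=\pd U(t,x)$. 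Your identification of Conditions~(2) and~(3) of Definition~\ref{DefUtility} as the structural hypotheses that allow the thresholds $x_0,y_0$ to be chosen uniformly in $(t,\omega)$ --- in particular the choice $y_0=K_1(x_0)$ forcing $I(t,y)\geq x_0$ whenever $y\leq y_0$, and the uniform positivity $V(t,0+)=U(t,\infty)>0$ that keeps $\phi\in[0,1)$ and the ratio well defined --- is precisely the content hidden behind the paper's ``mutatis mutandis.'' One small point worth being explicit about in a full write-up: the inequality $\phi(t,x)<1$ for $x\geq$ some uniform threshold requires a short concavity argument (take $x_1$ with $\essinf_{t,\omega}U(t,x_1)>0$; then $U(t,x)\geq U(t,x_1)+(x-x_1)\pd U(t,x)>x\,\pd U(t,x)$ once $x_1\,\pd U(t,x)<U(t,x_1)$, which $K_2$ and Condition~(3) make uniform), but this is routine and consistent with the approach you describe.
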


\subsection{The Optimization Problem and the Main Result}

The principal task our agent is facing, is how to control
investment and consumption, in order  to achieve maximal expected
utility. At this point we have defined all notions necessary to
cast this question in precise mathematical terms.
\begin{prob}
 Let $U$ be a utility random field, $\EN$ a
cumulative  endowment process, and $\mu$ an admissible measure on
$[0,T]$ as defined in subsection \ref{FinMar1}. For an initial
capital $x>0$, we are to characterize the value function
\begin{equation} \lprint{PP} {\mathfrak
U}(x)\triangleq\sup_{c\in\AA^{\mu}(x+\EN)}
 \uc{c}\ \ \ \ \ { \mbox{\sf (Primal problem)}}.\end{equation}
\end{prob}
\begin{rem}{\rm  \lprint{nonintU} When the above $\, \mu \otimes
\PP-$integral fails to exist,
we set its value to be $-\infty$. This is equivalent
to the approach taken in \cite{KarShr98} where the authors
consider only consumption processes such that the negative part
$U^-(t,c(t))$ is $\, \mu \otimes \PP-$integrable. }\end{rem}

To avoid trivial situations
we adopt the following \begin{sas}\lprint{StandingAssumption}
There exists $x>0$ such that $\, {\mathfrak U}(x)<\infty$.\end{sas}
\begin{rem}{\rm
Due to the boundedness of $\EN_T$, the Standing Assumption
\ref{StandingAssumption} will hold under any conditions that
will guarantee finiteness of the value function ${\mathfrak U}$,
when $\EN_T\equiv 0$. One such a condition is given by $ 0\leq
U(t,x)\leq \kappa (1+x^{\alpha}),\ \ \forall\, x>0,\ t\in[0,T]$,
for some constants $\kappa>0$ and $\alpha\in (0,1)$. For details,
see Remark 3.9, p. 274 in \cite{KarShr98}, and compare with
\cite{KarLehShrXu91} and \cite{Xu90}. }\end{rem}

Together with the Primal problem we set up the Dual Problem with
  value function    \begin{equation}\lprint{DP}
{\mathfrak V}(y)\triangleq\inf_{\QQ\in\DD} J(y,\QQ),\
\text{where}\ J(y,\QQ)\triangleq
\left(\vq{y}{\QQ}+y\scl{\QQ}{\EN_T}\right), \ y>0\ \ \ \ \
\text{\sf (Dual problem)}.\end{equation}  It will be shown below
that the Dual problem is in fact well-posed, i.e., the integral in
its definition always exists in $\bar{\R}$. The main result of
this paper is then as follows:
\begin{thm}
\lprint{MainTheorem} Let $\,\EN= (\EN_t)_{t\in [0,T]}\,$ be a
cumulative endowment process, and $\mu$ an admissible measure.
Furthermore, let $U$ be a utility random field, let $V$ be its
conjugate, and let ${\mathfrak U}$ and ${\mathfrak V}$ be the
value functions of the Primal and the Dual Problem, respectively.
Under the Standing Assumption \ref{StandingAssumption}, the
following assertions hold:

\begin{itemize}
\item[(i)] $|{\mathfrak U}(x)|<\infty$ for all $x>0$ and $|{\mathfrak V}(y)|<\infty$ for all $y>0$,
i.e., the value functions are finite throughout their domain.
\item[(ii)] The value functions ${\mathfrak U}$ and ${\mathfrak V}$ are continuously differentiable,
${\mathfrak U}$ is strictly concave and ${\mathfrak V}$ is
strictly convex.
\item[(iii)] ${\mathfrak U}(x)=\inf_{y>0} [{\mathfrak V}(y)+xy]$, and
${\mathfrak V}(y)=\sup_{x>0} [{\mathfrak U}(x)-xy]$ for $x,y>0$,
i.e. ${\mathfrak U}$ and ${\mathfrak V}$ are convex conjugates of
each other.
\item[(iv)] The derivatives ${\mathfrak U}'$ and ${\mathfrak V}'$ of the value functions satisfy:
 \begin{eqnarray*} &&\lim_{y\to 0} -{\mathfrak V}'(y)=\lim_{x\to 0}
{\mathfrak U}'(x)\in [\inf_{\QQ\in\DD}
\scl{\QQ}{\EN_T},\sup_{\QQ\in\DD} \scl{\QQ}{\EN_T}],
 \\
&&\lim_{y\to \infty} {\mathfrak V}'(y)=\lim_{x\to \infty}
{\mathfrak U}'(x)=0.\end{eqnarray*}
\item[(v)] Both Primal
and Dual Problem have solutions $\hc^x\in\AA^{\mu}(x+\EN)$ and
$\hat{\QQ}^y\in\DD$, respectively, for all $x,y>0$. For $x>0$ and
$y>0$ related by ${\mathfrak U}'(x)=y$, we have
\[ \hat{c}^x(t)=I(t,yY^{\hat{\QQ}^y_t}),\ \ \ 0\leq t\leq T,\]
where $\hat{\QQ}^y$ is a solution to the Dual problem
corresponding to $y$. Furthermore,  $\hc^x$ is the unique optimal
consumption-rate process , and $\hat{\QQ}^y$ is determined
uniquely as far as the process $Y^{\hat{\QQ}^y}$ and the action of
$\hat{\QQ}^y$ on $\EN_T$ are concerned.
\item[(vi)] The derivative ${\mathfrak V}'(y)$ satisfies
\[ {\mathfrak V}'(y)=\scl{\hat{\QQ}^y}{\EN_T}-\eit{  Y^{\hat{\QQ}^y}_t
I(t,yY^{\hat{\QQ}^y}_t )},\] where $\hat{\QQ}^y$ is the solution
to the Dual problem corresponding to $y$.
\end{itemize}
\lprint{maintheorem} \end{thm}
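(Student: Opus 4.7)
The plan is to adapt the convex-duality template of \cite{KraSch99} and \cite{CviSchWan99} to the present setting with intertemporal consumption, random endowment, and the enlarged dual domain $\DD$. The natural entry point is the abstract reformulation afforded by Proposition \ref{OU}: a density $c(\cdot)$ belongs to $\AA^{\mu}(x+\EN)$ precisely when $\eit{\yq_t c(t)} \leq x+\scl{\QQ}{\EN_T}$ for every $\QQ\in\DD$ (using $dC_t=c(t)\,\mu(\dd{t})$ and Proposition \ref{pryq}(a) to identify $\yq$ with $L^\QQ$ under $\mu\otimes\PP$). This places the problem in a bipolar duality with $\DD$ on the dual side. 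For part (i), the Fenchel inequality $U(t,c)-yc\leq V(t,y)$, applied at $y=y_0\yq_t$ and integrated, yields the weak-duality relation $\mathfrak U(x)\leq \mathfrak V(y_0)+xy_0$. Combined with the Standing Assumption and the bounds on $V$ from the majorant/minorant $\ov,\uv$ of Proposition \ref{conjuv}, this forces $\mathfrak V(y_0)<\infty$ for some $y_0>0$; monotonicity and convexity of $\mathfrak V$ then extend finiteness to all of $(0,\infty)$, and reapplying weak duality delivers $\mathfrak U(x)<\infty$ for every $x>0$.

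The heart of the argument is the existence of a dual minimizer. Fix $y>0$ and pick a minimizing sequence $\{\QQ^n\}\subseteq\DD$ for $J(y,\cdot)$. The weak-$*$ compactness of $\DD$ (Proposition \ref{charge}(iii)) provides a cluster point $\hat\QQ^y\in\DD$. A Koml\'os-style argument, combined with the convexity of $J(y,\cdot)$ and the fork-convexity of $\yd$ (Theorem \ref{yeyd}), lets us arrange, after passing to forward convex combinations (which preserves the minimizing property), that $Y^{\QQ^n}$ converges both in the Fatou sense and $(\mu\otimes\PP)$-a.e.; Proposition \ref{Fat2} identifies the common limit as $Y^{\hat\QQ^y}$. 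Weak-$*$ continuity of $\QQ\mapsto\scl{\QQ}{\EN_T}$ handles the endowment term, while Fatou's lemma applied to $V(t,yY^{\QQ^n}_t)$ gives the lower semi-continuity $J(y,\hat\QQ^y)\leq\liminf_n J(y,\QQ^n)$. The integrable lower bound demanded by Fatou's lemma is secured by the minorant $\uv$ together with the reasonable-asymptotic-elasticity characterization of Proposition \ref{ae1}, which controls the blow-up of $V(t,\cdot)$ near zero in terms of $V(t,\cdot)/y$.

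Given $\hat\QQ^y$, set $\hat c^x(t)=I(t,yY^{\hat\QQ^y}_t)$ with $x=-\mathfrak V'(y)$. Perturbing $\hat\QQ^y$ along line segments within $\DD$ and exploiting convexity of $J(y,\cdot)$ produces the budget saturation $\eit{Y^{\hat\QQ^y}_t\hat c^x(t)}=x+\scl{\hat\QQ^y}{\EN_T}$ and the inequality $\eit{Y^{\hat\QQ^y}_t c(t)}\leq x+\scl{\hat\QQ^y}{\EN_T}$ for every admissible $c$; Proposition \ref{OU} then certifies $\hat c^x\in\AA^{\mu}(x+\EN)$. Integrating the Fenchel equality $U(t,\hat c^x(t))=V(t,yY^{\hat\QQ^y}_t)+yY^{\hat\QQ^y}_t\hat c^x(t)$ against $\mu\otimes\PP$ sandwiches $\mathfrak U(x)$ between itself and $\mathfrak V(y)+xy$, which simultaneously proves primal attainment, the conjugacy relation (iii), and the formula in (v). Strict concavity and convexity of the value functions transfer from those of $U(t,\cdot)$ and $V(t,\cdot)$, so (ii) follows by standard conjugate-duality calculus; (iv) follows from the same calculus combined with the Inada conditions on $V$ and the uniform boundedness of $\scl{\QQ}{\EN_T}$ over $\DD$ (Proposition \ref{charge}(ii)), which forces the left limit of $\mathfrak U'$ at zero to lie in the stated $\EN_T$-interval; (vi) is an envelope-theorem computation using the budget equation; and the uniqueness in (v) comes from strict concavity of $U$ for $\hat c^x$, and from strict convexity of $V$ together with linearity of $\QQ\mapsto\scl{\QQ}{\EN_T}$ for the uniquely determined quantities attached to $\hat\QQ^y$.

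The main obstacle is the Fatou step in dual attainment, for two interlocking reasons. First, weak-$*$ convergence of $\QQ^n$ in $\linfd$ by itself says nothing about the pointwise-in-time behaviour of the associated $Y^{\QQ^n}_t$; the bridge to $(\mu\otimes\PP)$-a.e. convergence rests on the regularization (\ref{yqdef}), additivity of the regular part (Proposition \ref{charge}(vii)), Proposition \ref{Fat2}, and ultimately the Filtered Bipolar Theorem \ref{fbt}. Second, without reasonable asymptotic elasticity a minimizing sequence could drive $Y^{\QQ^n}_t$ towards zero on sets of positive $\mu\otimes\PP$-measure, preventing $V(t,yY^{\QQ^n}_t)$ from admitting any integrable minorant; Proposition \ref{ae1} is the precise instrument that rules out such pathologies and makes Fatou's lemma effective.
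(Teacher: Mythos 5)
There is a genuine gap in your proof of part (i), and it propagates through the rest of the argument. You claim that the weak-duality inequality $\mathfrak U(x)\leq \mathfrak V(y_0)+xy_0$, combined with the Standing Assumption $\mathfrak U(x)<\infty$, ``forces $\mathfrak V(y_0)<\infty$.'' This is the wrong direction: weak duality gives a \emph{lower} bound on $\mathfrak V$, namely $\mathfrak V(y_0)\geq\mathfrak U(x)-xy_0$, and says nothing about whether $\mathfrak V(y_0)$ is finite. The paper closes this gap in Remark A.4 by first establishing the full conjugacy relation $\mathfrak V=\mathfrak U^*$ via the Minimax Theorem (Lemma \ref{conj}), and then invoking the preservation of properness under convex conjugation (Rockafellar, Theorem 12.2); without that step one cannot deduce $\mathfrak V(y_0)<\infty$ from properness of $\mathfrak U$. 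Your attempt to bypass the Minimax step and derive conjugacy later from the Fenchel equality therefore runs into a circularity: the Fenchel-equality closing of the duality gap already presupposes that $\mathfrak V(y)$ is finite (and even differentiable), which you have not yet secured.

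Two further, secondary points. First, even granting $\mathfrak V(y_0)<\infty$ for one $y_0$, ``monotonicity and convexity of $\mathfrak V$'' alone do \emph{not} yield finiteness on all of $(0,\infty)$: decrease gives finiteness only for $y>y_0$, and a convex decreasing function can perfectly well equal $+\infty$ on $(0,y_0)$. Finiteness for small $y$ requires the reasonable asymptotic elasticity assumption via Proposition \ref{ae1}; this is exactly what the paper's Lemma \ref{finv} does. Second, you assign the elasticity assumption a role it does not play in the Fatou / lower-semicontinuity step for the dual objective. In Lemma \ref{UIV} the uniform integrability of the family $\{V^-(\cdot,y\yq_\cdot)\}_{\QQ\in\DD}$ comes from the de la Vall\'ee-Poussin criterion applied through a minorant $\uv$ of $V$; no elasticity is needed there, and $Y^{\QQ^n}_t\to0$ on a set of positive measure causes no difficulty because it only drives $V^+$ up, which Fatou's lemma tolerates. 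Elasticity enters in Lemma \ref{finv} and Proposition \ref{vprime}, where one must dominate $V(t,\rho y\,\cdot)$ by a multiple of $V(t,y\,\cdot)$ and control $\yq_t I(t,\cdot)$ near zero. Aside from these points, your outline of the dual-attainment step (Koml\'os $+$ Fatou $+$ weak-$*$ compactness $+$ Proposition \ref{Fat2}), the construction $\hat c^x(t)=I(t,yY^{\hat\QQ^y}_t)$, and the perturbation argument for budget saturation and admissibility all follow the paper's route faithfully.
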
 \begin{exam} Let $U_1$ be a utility
random field and $U_2$ a utility function. Consider the  problem
of maximizing expected utility from consumption and terminal
we\-alth
\begin{equation}\lprint{ct}
{\mathfrak U}(x):=\sup \left(\EE\left[ \int_0^T U_1(t,c(t))
{\mathrm dt} + U_2(X_T)\right]\right),\end{equation} where the
supremum is taken over all admissible investment-consumption
strategies. This problem can be regarded as a special case of our
Primal problem. Indeed, if we view the terminal wealth as being
consumed instantaneously, we can translate (\ref{ct}) into
\[ {\mathfrak U}(x)=\sup_{c\in\AA^{\mu}(x+\EN)} \uc{c}\] where
$\mu=\frac{1}{2T}\lambda+ \frac{1}{2}\delta_{\{T\}}$ ($\lambda$
denotes the Lebesgue measure on $[0,T]$) and
\[ U(t,x):=\left\{ \begin{array}{cl} 2TU_1(t,\frac{x}{2T})& t<T\\
2U_2(\frac{x}{2}) & t=T
\end{array}\right\},\]
if $U_1$ and $U_2$ satisfy the requirements of Example \ref{ex2}.
In this case $C_T-C_{T-}=\frac{1}{2} c(T)$ plays the role of
terminal wealth.\lprint{ex3}
\end{exam}


\section{Examples}\lprint{Examples}

\subsection{The It\^ o-process model}

We specialize the specifications of our model as follows:

Let $(\Omega, \FF,$ $\prf{\FF_t}, \PP)$ be a stochastic base
supporting a $d$-dimensional Brownian motion $W=\prf{W_t}$, and we
assume that $\FFF\triangleq\prf{\FF_t}$ is the augmentation of the
filtration generated by $W$. The market co\" efficients are given
by a bounded real-valued {\bf interest rate} process $r$, a
bounded {\bf appreciation rate} process $b$ taking values in
$\R^d$ and a $(d\times d)$-matrix valued {\bf volatility} process
$\sigma$. We assume that $r$, $b$ and $\sigma$ are progressively
measurable and $\sigma(t)$ is a symmetric regular matrix for each
$t$,  with all eigenvalues uniformly bounded from above and away
from zero, almost surely.

The dynamics of the money market (numeraire asset) and the stock
market is given by
\begin{equation}\lprint{model} \left\{
\begin{array}{rclc} dB_t&=&B_t r(t) dt,& B_0=1\\
dS_t&=&S_t'\left[ b(t)dt+\sigma(t)dW_t\right],&
S_0=s_0\end{array}\right.
\end{equation} where $s_0$ is a given vector in $\R^d_{++}$.
We define the {\bf market price of risk} by
\[ \theta(t)=\sigma^{-1}(t)\left[ b(t)-r(t)\ind{d}\right],\] with
$\ind{d}$ denoting the $d$-dimensional vector
$\ind{d}\triangleq(1,1,\ldots,1)'$.

 We note that the equations in (\ref{model})
specify a complete market model which, however, becomes incomplete
by introducing a cone $\KK$ of portfolio constraints, and in this
case we have (see \cite{KarLehShrXu91}, p. 712; \cite{CviKar92},
p. 777; \cite{ElkQue95}, p. 50) that the set $\YY^{\MM}$ of
(\ref{yeyd}) satisfies
\[
\YY^{\MM}\subseteq\sets{Z_{\nu}(\cdot)}{\nu\in\KKB,\, \text{such
that $Z_{\nu}(\cdot)$ is positive martingale} }.
\] Here $\KKB$ is the set of all progressively
measurable processes $\nu:[0,T]\times\Omega\to\R^{d}$, such that
\[ \int_0^T \norm{\nu(t)}^2\,dt<\infty \ \ \ \ \text{and~~ \ \ \
$\nu(t)' p\geq 0,\ \forall\,p\in\KK,\,t\in [0,T]$}\] hold almost
surely (i.e., $\nu$ takes values in the barrier cone of $-\KK$),
and
\[ Z_{\nu}(\cdot)\triangleq\exp\left( -\int_0^{\cdot} (\theta(t)+\sigma^{-1}(t)\nu(t))'\,
dW_t-\frac{1}{2}\int_0^{\cdot}
\norm{\theta(t)+\sigma^{-1}(t)\nu(t)}^2\,dt\right).\]

Let us recall also the $\sigma(\linfd,\linf)$-closure $\DD$ of the
set $\MM$ in $\linfd$, as well as the enlargement $\yd$ of
$\YY^{\MM}$ as in (\ref{yeyd}). In the following proposition we
characterize the subset $\yd_{{\rm  max}}$ of the dual domain
$\yd$, consisting of processes that are strictly positive on the
support $\mathrm{supp}\,\mu$ and {\em maximal} - i.e., not
dominated by any other process in $\yd$. We remind the reader that
$\mathrm{supp}\,\mu$ is defined to be $[0,T]$ if $\mu$ charges
$\set{T}$, and $[0,T)$ otherwise.

\smallskip
\begin{prop} \lprint{Brownian} The elements of $\yd_{{\rm  max}}$ are local martingales of
the form $\PP[ Y_t=Z_{\nu}(t),\ \forall\,
t\in\mathrm{supp}\,\mu]=1$ for some $\nu\in\KKB$.\end{prop}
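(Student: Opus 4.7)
The plan is to exploit the Brownian structure of the filtration together with the positive-supermartingale form of $Y$. Since $\FFF$ is generated by $W$, every RCLL martingale is continuous, and hence so is the positive RCLL supermartingale $Y$ on $\mathrm{supp}\,\mu$ (where $Y>0$). Applying the multiplicative decomposition, I write $Y=N\cdot D$ on $\mathrm{supp}\,\mu$, where $N$ is a strictly positive continuous local martingale with $N_0=Y_0$ and $D$ is a continuous, predictable, nonincreasing process with $D_0=1$. The Brownian martingale representation theorem then yields
\[
N_t\ =\ Y_0\,\exp\!\left(\int_0^t \phi(s)'\,dW_s\,-\,\tfrac12\int_0^t \|\phi(s)\|^2\,ds\right)
\]
for some predictable $\R^d$-valued $\phi$ with $\int_0^T \|\phi\|^2\,ds<\infty$ a.s. Defining $\nu(t)\triangleq -\sigma(t)\phi(t)-(b(t)-r(t)\ind{d})$, so that $\phi=-(\theta+\sigma^{-1}\nu)$, one obtains $N=Y_0\cdot Z_\nu$.

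Next I would verify that $\nu\in\KKB$. Square-integrability of $\nu$ is immediate from the boundedness of $\sigma^{-1}$, $b$, $r$ and the square-integrability of $\phi$. For the barrier-cone condition $\nu(t)'p\ge 0$ for every $p\in\KK$, I use that $YX$ is a supermartingale for every $X\in\XX$ (since $\yd\subseteq\YY$, by Theorem \ref{yeyd}). Applying Ito's product rule to $YX=NDX$ for $X=x+\int_0^\cdot H_u'\,dS_u$ with predictable $H$ valued in $\KK$, and using the explicit forms of $dN$ and (the discounted) $dS$, the finite-variation part of $YX$ reduces to a term proportional to $-ND\,\nu(t)'H(t)\,dt$, plus a nonpositive contribution from $dD\le 0$. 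Supermartingale-ness for every admissible $H$ then forces $\nu(t)'p\ge 0$ for all $p\in\KK$, $dt\otimes d\PP$-a.e.

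The key step is then to argue, via maximality, that $D\equiv 1$ on $\mathrm{supp}\,\mu$ and $Y_0=1$. For this, I would first show that $Z_\nu\in\yd$ by a localization-and-Fatou-limit argument: pick stopping times $T_n\uparrow T$ such that each $Z_\nu^{T_n}$ is a uniformly integrable martingale (e.g.\ stopping when $\int_0^\cdot\|\theta+\sigma^{-1}\nu\|^2\,ds$ first hits $n$). The cone condition on $\nu$ guarantees that the probability measure $\QQ^{(n)}$ with $d\QQ^{(n)}/d\PP=Z_\nu(T_n)$ turns each $X\in\XX$ into a supermartingale, hence $\QQ^{(n)}\in\MM$ and the corresponding density process $Y^{\QQ^{(n)}}=Z_\nu^{T_n}$ lies in $\ym\subseteq\yd$. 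As $n\to\infty$, $Z_\nu^{T_n}$ Fatou-converges to $Z_\nu$, and Fatou-closedness of $\yd$ (Theorem \ref{yeyd}) gives $Z_\nu\in\yd$. Since $Y=Y_0\cdot Z_\nu\cdot D\le Z_\nu$ on $\mathrm{supp}\,\mu$, with strict inequality on a set of positive $(\mu\otimes\PP)$-measure whenever $Y_0<1$ or $D\not\equiv 1$, maximality of $Y$ forces $Y_0=1$ and $D\equiv 1$ on $\mathrm{supp}\,\mu$. Thus $\PP[Y_t=Z_\nu(t),\ \forall\,t\in\mathrm{supp}\,\mu]=1$, and $Y$ is a local martingale on $\mathrm{supp}\,\mu$.

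I expect the main obstacle to be the localization step: ensuring that each $Z_\nu^{T_n}$ is a genuine uniformly integrable martingale (so that $\QQ^{(n)}$ is a bona fide element of $\MM$) requires carefully controlling the coefficients entering the stochastic exponential and invoking the cone condition on $\nu$. A secondary but more routine subtlety is the passage from the multiplicative decomposition of $Y$ on $\mathrm{supp}\,\mu$ to the pointwise identification $\phi=-(\theta+\sigma^{-1}\nu)$, and the verification that the resulting $\nu$ is predictable and square-integrable; these follow from the regularity of $\phi$, the invertibility of $\sigma$, and the boundedness of $b$ and $r$.
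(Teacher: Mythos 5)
Your overall architecture matches the paper's: multiplicative decomposition of the maximal element into a continuous local martingale times a nonincreasing predictable process, Brownian martingale representation to identify the martingale factor as $Z_\nu$, a verification that $\nu\in\KKB$, and a localization/Fatou argument combined with maximality to conclude that the nonincreasing factor is trivial on $\mathrm{supp}\,\mu$. The localization step that you flag as your ``main obstacle'' is in fact routine, and your treatment of it is essentially the one the paper uses.

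The genuine gap is in the barrier-cone step, which is precisely the heart of the paper's proof and which you compress into one sentence. You write that the finite-variation part of $YX$ is ``a term proportional to $-ND\,\nu'H\,dt$ plus a nonpositive contribution from $dD\le 0$,'' and conclude that supermartingale-ness for every admissible $H$ ``forces $\nu(t)'p\ge 0$.'' That conclusion does not follow: the two contributions are coupled through $X$, since the $dD$-term is $M_t X_t\,dD_t$. On a set where $\nu'H<0$, the strictly positive drift $-M D_{-}H'\nu\,dt$ can be absorbed by the nonpositive term $M X\,dD$ whenever $X$ is large or $D$ decreases fast enough, so the total finite-variation part can still be nonincreasing. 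To rule this out one must drive $X$ down. The paper does this by assuming, for contradiction, that the cone condition fails on a predictable set $A$ of positive measure, choosing a unit-norm $\hat H\in\KK$ with $D_{-}\nu'\hat H\le -\eps$ on $A$, and running the wealth $X^{x,H^x}$ from small initial capital $x$ with $H^x=\hat H\ind{\sint{0,S^x}}$, $S^x$ the first hitting time of the origin. Decomposing $D$ into an absolutely continuous part (density $g$) and a remainder supported on a $(\lambda\otimes\PP)$-null set, the supermartingale inequality localizes to $X^{x,H^x}_t g(t)+\eps\le 0$ on $A\cap\sint{0,S^x}$. A time-change converting $X^{0,\hat H}$ into a Brownian motion, together with the bound $\int_0^T g\,du\le D_T\le 1$, then produces a quantitative contradiction with the fact that the time-changed process reaches $-x$ with positive probability as $x\downarrow 0$. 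None of this is visible in your write-up, and it cannot be replaced by the pointwise drift argument you propose.

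A minor secondary point: in a Brownian filtration all \emph{local martingales} are continuous, but your inference that $Y$ itself is therefore continuous is unjustified; the nonincreasing predictable factor $D$ can jump (e.g.\ deterministically). The paper keeps $D$ merely RCLL and predictable and works with $D_{t-}$ where needed. This does not derail your argument on its own, but it is a sign of insufficient care in exactly the place where the real difficulty lies.
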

\begin{proof} For simplicity, and without loss of
generality, we shall assume in this proof that the market co\"
efficient processes $r$ and $b$ are identically equal to zero, that
the volatility matrix $\sigma$ is the identity matrix, and
that $\mathrm{supp}\,\mu=[0,T]$.

 Let
$Y^{\rm  max}\in\yd_{{\rm  max}}$. The multiplicative
decomposition theorem for positive special semimartingales (see
\cite{Jac79}, Propositions 6.19 and 6.20) implies that there is
continuous local martingale $M$ with $M_0=1$, and a nonincreasing
predictable RCLL  process $D$ with $D_0=1$ and $D_T>0$ a.s., such
that $Y^{\rm  max}_t=M_tD_t$. By the martingale representation
theorem for the Brownian filtration (see \cite{KarShr91}, Theorem
3.4.15 and  Problem 3.4.16), there is a $d$-dimensional
$\FFF$-progressively measurable process $\nu$ with $\int_0^T
\norm{\nu(s)}^2\,ds<\infty$ a.s. such that
\[ M_t=\exp\left(-\int_0^t
\nu(s)'\,dW(s)-\frac{1}{2} \int_0^t \norm{\nu(s)}^2\,ds\right),\ \
0\leq t\leq T.
\] For any admissible trading strategy $H$  and $x>0$ such that
\[ X^{x,H}_t \triangleq\, x+\int_0^t H(s)'\,dW(s)\geq 0,\
\text{ \ $\forall\,t\in [0,T]$} \] holds almost surely, the
process $YX^{x,H}$ is a supermartingale by Theorem \ref{yeyd}.

By It\^ o's formula (e.g. \cite{Pro90}, Section II.7) we have\  $
d(Y_t\xpi_t)=\xpi_t\, dY_t+Y_{t-}\, d\xpi_t+\, d[\xpi,Y]_t $\ and
$ dY_t=M_t\, dD_t+D_{t-}\, dM_t+\, d[M,D]_t.$ Since $M$ is
continuous and $D$ is predictable and of finite variation,
$[M,D]_t\equiv M_0D_0$, so $\lprint{pi3} \, dY_t=M_t\,
dD_t-D_{t-}M_t\nu(t) \, dW_t,$ because $dM_t=-M_t\nu(t)'\, dW_t$.
Furthermore, $ d\xpi_t=H_t'\, dW_t$, so
$d[\xpi,Y]_t=-D_{t-}M_tH_t'\nu(t)\, dt.$ It follows that
\begin{equation}\lprint{setmeas}
d(Y_t\xpi_t)=L_t+M_t\left[\xpi_t\,dD_t-D_{t-}H_t'\nu(t)\,dt\right],\end{equation}
where $L$ is a local martingale.

Now we prove that $\nu\in\KKB$. To do that, let us assume {\em per
contra} that $\nu$ fails to satisfy the relation:
\begin{equation}\lprint{nupe}\nu(t)' p\geq 0 \ \text{ for all $p\in\KK$,
\ $\lambda\otimes\PP$-a.s.}\end{equation} Then, we can find a
constant $\eps>0$, a predictable set $A$ such that
$(\lambda\otimes\PP)(A)> 0$, and a bounded predictable process
$\hat{H}$ taking values in $\KK$, such that $\hat{H}=0$ off $A$
and \begin{equation} \lprint{setpo} D_{t-}\nu(t)' \hat{H}_t\leq
-\eps\ \text{\ \ on}\
 A.\end{equation} We can also assume that $\norm{\hat{H}_t}=1$ on $A$,
$(\lambda\otimes\PP)$-a.s. For any $x>0$, we define $S^x$ to be
the first hitting time of the origin for the continuous process
$X^{x,\hat{H}}$. Also, for $x>0$ we define
$H^x_t\triangleq\hat{H}_t\ind{\sint{0,S^x}}(t)$, so that
$X^{x,H^x}_t\geq 0$ for all $t\in [0,T]$, a.s. Now we have all the
ingredients to define a family of signed measures
$\{\varphi_x\}_{x>0}$, given by
\begin{equation} \lprint{negmea}
\varphi_x(B)\triangleq\EE\left(\int_0^T\ind{B}(t)X^{x,H^x}_t\,dD_t+\eps
\int_0^T\ind{B}(t)\,dt\right),\end{equation} on the
$\FFF$-predictable  subsets of $[0,T]\otimes\Omega$. By the
supermartingale property of $YX^{x,H^x}$, relations
(\ref{setmeas}) and  (\ref{setpo}), and the strict positivity of
the process $M$, we have that  $\varphi_x(B)\leq 0$, for any $x>0$
and any $\FFF$-predictable set $B\subseteq A\cap \sint{0,S^x}$.
Due to the fact that $H^x$ is zero off $A$, $A\cap \sint{0,S^x}$
is still of positive $(\mu\otimes\PP)$-measure. By Theorem 2.1 of
\cite{DelSch95}), there exists an $\FFF$-predictable process
$g:[0,T]\times\Omega\to\R$ and an $\FFF$-predictable set
$N\subseteq [0,T]\times\Omega$ such that
\begin{equation}\lprint{SchAbs} D_t=\int_0^t g(u)\,du+\int_0^t \ind{N}(s)\,dD_u\ \ \
\text{and}\ \ \int_0^t\ind{N}(u)\,du=0\ \ \text{for all $t\in
[0,T]$}\end{equation} hold almost surely,  and $\int_0^T g(u)\,
du\leq D_T\leq 1$, a.s. From the definition (\ref{negmea}) of
$\varphi_x$ and the decomposition (\ref{SchAbs}), for any $x>0$
and any predictable $B\subseteq A\cap \sint{0,S^x}\setminus N$, we
have \ea { \lprint{fri} 0&\geq& \varphi_x(B)=
\EE\left(\int_0^T\left(X^{x,H^x}_tg(t)+\eps\right)\ind{B}(t)\,dt\right).}
The equation (\ref{SchAbs}) states that $(\lambda\otimes\PP)(N)=0$
for all $x>0$, so (\ref{fri}) implies that
$X^{x,H^x}_tg(t)+\eps\leq 0$ holds $(\lambda\otimes\PP)$-a.e. on
$A\cap \sint{0,S^x}$, for any $x>0$. We observe that the
right-continuous inverse $Q^{-1}$ of the process $Q$ given by
\begin{equation}\lprint{var} Q_t\triangleq\int_0^t
\ind{A}(s)\,ds=\int_0^t
\norm{\hat{H}_s}^2\,ds=[X^{0,\hat{H}},X^{0,\hat{H}}]_t,\ \ \ 0\leq
t \leq T \end{equation} is  a random time-change which turns the
process $X^{0,\hat{H}}$ into a Brownian motion ${\xi}_s\triangleq
X^{0,\hat{H}}_{Q^{-1}_s}$ on the stochastic interval ${\mathbb
S}\triangleq\sinto{0,Q_T}$ (see Theorem 4.6, p. 174 and Problem
4.7, p. 175 in \cite{KarShr91}).  Let $f(s)$ be the composite
process $g(Q^{-1}_s)$, and let $R_x=Q_{S^x}$ be the hitting time
of $-x$ by the Brownian motion $\xi$. Thus, for any $x>0$ and any
$\FFF$-predictable set $B\subseteq{\mathbb S}\cap\sint{0,R^x}$, we
have
\begin{equation} \lprint{absint}
1\geq -\int_0^T \ind{A\cap\sint{0,S^x}}(u)g(u)\,du\geq
-\int_0^{Q_T} \ind{B}(v) f(v)\, dv\geq
\eps\int_0^{Q_T}\ind{B}(v)\frac{1}{x+{\xi}_v}\,dv,\ \text{a.s.}
\end{equation} The relation (\ref{absint}) implies that
$x+B_s\geq \eps$, $(\lambda\otimes\PP)$-a.e. on ${\mathbb
S}\cap\sint{0,R^x}$. This is in contradiction with the fact that
$\PP(x+{\xi}_{R_x}=0)>0$ and, for small enough $x$, $\PP(R_x\in
{\mathbb S})>0$.

Therefore, the relation (\ref{nupe}) holds, and we know that the
process $M$   dominates $Y^{max}$. By truncation, $M$ can be
obtained as the Fatou-limit of a sequence of martingales in
$\YY^{\MM}$, so by Theorem \ref{fbt}, $M\in \YY$. Theorem
\ref{yeyd} states that $M$ is dominated by an element of $\yd$.
Since $M$ is a local martingale with $M_0=1$, and all elements
$Y\in\yd$ are supermartingales with $Y_0\leq 1$, we can find a
sequence $\{T_n\}_{n\in\N}$ of stopping times that reduces $M$,
and use it to conclude that $M\in\yd_{{\rm max}}\subseteq\yd$ and
$Y^{\rm max}=M$.
\end{proof}

Because of the fact that the optimal solution of the dual problem
must be positive on the $\mathrm{supp}\,\mu$, we have the
following:

\begin{cor} \lprint{BrownCor} In the setting  of an It\^ o-process
market, the primal problem admits a unique solution,
\begin{equation}\lprint{49} c(t)=I(t,y Z_{\nu}(t)D_t), \ \ 0\leq t\leq
T,\end{equation} for some constant $y>0$, and some predictable
process $\nu$ that takes values in the barrier cone of $-\KK$ and
satisfies $\int_0^T \norm{\nu(s)}^2\,ds<\infty$ a.s., and some
positive, nonincreasing and $\FF$-predictable process $D$ with
$D_0\leq 1$. Both processes $Z_{\nu}$ and $Z_{\nu}D$ are in $\yd$.
\end{cor}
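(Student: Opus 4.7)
The plan is to combine the abstract existence statement from Theorem \ref{MainTheorem}(v) with the structural analysis already carried out in the proof of Proposition \ref{Brownian}. By part (v) of Theorem \ref{MainTheorem}, the primal problem has a unique solution of the form $\hc^x(t) = I(t, y Y^{\hat{\QQ}^y}_t)$, where $y = \mathfrak{U}'(x)$ and $\hat{\QQ}^y \in \DD$ is a dual optimizer; uniqueness of $y$ follows from the strict concavity of $\mathfrak{U}$.

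The core structural step is to identify $Y^{\hat{\QQ}^y}$. First I would argue that $Y^{\hat{\QQ}^y}$ is strictly positive on $\mathrm{supp}\,\mu$, almost surely. Otherwise, the Inada condition $I(t,0+)=\infty$ would force $\hc^x(t) = +\infty$ on a set of positive $\mu\otimes\PP$-measure, so the cumulative consumption process would become infinite in finite time and destroy admissibility (recall $\EN_T$ is bounded and $\int H\,dS$ is bounded below). Consequently $Y^{\hat{\QQ}^y}$ is a strictly positive supermartingale on the time-set $\mathrm{supp}\,\mu$, and the multiplicative decomposition theorem for positive special semimartingales (exactly as invoked at the start of the proof of Proposition \ref{Brownian}) yields $Y^{\hat{\QQ}^y}_t = M_t D_t$, with $M$ a continuous positive local martingale, $M_0 = 1$, and $D$ a positive, nonincreasing, $\FFF$-predictable RCLL process with $D_0 \leq 1$. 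Applying the Brownian martingale representation theorem to $M$ produces a progressively measurable $\nu$ with $\int_0^T \|\nu(s)\|^2\,ds < \infty$ a.s. such that $M = Z_\nu$.

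It then remains to verify that $\nu\in\KKB$ and that both $Z_\nu$ and $Z_\nu D$ lie in $\yd$. The first claim is obtained by replaying verbatim the contradiction argument at the heart of Proposition \ref{Brownian}: the It\^o computation leading to (\ref{setmeas}) uses only that $Y^{\hat{\QQ}^y} \xpi$ is a supermartingale for each admissible $H$ taking values in $\KK$ and each $x>0$, which holds since $Y^{\hat{\QQ}^y}\in \yd \subseteq \YY$ by Theorem \ref{yeyd}; the signed-measure construction with $\varphi_x$, followed by the Brownian time-change, then forces $\nu(t)' p \geq 0$ for every $p\in\KK$. For the second, $Z_\nu D = Y^{\hat{\QQ}^y}\in\yd$ by construction, while $Z_\nu$ is a positive local martingale dominating $Y^{\hat{\QQ}^y}$; truncating along a reducing sequence of stopping times yields martingales in $\ym$ whose Fatou-limit is $Z_\nu$, so $Z_\nu\in\YY$, and then $Z_\nu\in\yd$ by Fatou-closedness of $\yd$ (Theorem \ref{yeyd}).

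The step I expect to require the most care is the boundary behaviour at $t=T$ when $\mathrm{supp}\,\mu = [0,T)$, since in that case $Y^{\hat{\QQ}^y}$ may vanish at $T$ and the multiplicative decomposition must be applied on a left-neighbourhood of $T$ rather than the closed interval. This is handled by localizing the decomposition to $[0,T)$ (or by stopping just before $T$ and passing to the limit), noting that the representation formula (\ref{49}) is required only $\mu\otimes\PP$-almost everywhere and $\mu$ does not charge $\{T\}$ in this case.
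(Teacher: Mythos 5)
Your proposal is correct and follows essentially the same route the paper has in mind: the paper presents the corollary with only the one-line remark that the dual optimizer must be strictly positive on $\mathrm{supp}\,\mu$, leaving the reader to observe that the multiplicative-decomposition and Girsanov/time-change contradiction argument in the proof of Proposition \ref{Brownian} never actually used maximality of $Y$ --- only $Y\in\YY$ and strict positivity --- and therefore applies verbatim to $Y^{\hat{\QQ}^y}$. Your write-up makes that transfer explicit, supplies the positivity argument via the Inada condition and admissibility, and handles the $t=T$ boundary case, all consistent with the paper's intent.
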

\begin{rem}
When the market is complete, or, more generally, when the terminal
value of the endowment process is ``attainable'' (i.e., $x+\EN_T=X_T$
for some $X \in {\mathcal X}$ as in (2.2), then the dual
objective function $\QQ\mapsto J(y,\QQ)$ of (\ref{DP}) is monotone
in $\yq$ and thus the optimal solution takes the form \
$c(t)=I(t,y Z_{\nu}(t)), \ \ 0\leq t\leq T$, with $D\equiv 1$ in (\ref{49}).
\end{rem}

\subsection{Optimal Consumption of a Random Endowment}

In this example we consider a situation in which the agent must
optimally distribute an unknown future endowment without any
possibility of hedging the uncertainty in a financial market. This
problem was studied by Lakner and Slud in \cite{LakSlu91} in a
point-process setting. We shall consider the following version of it:
\begin{prob}\lprint{lakner} Let $(\Omega,\FF,\prf{\FF},\PP)$ be a
filtered probability space satisfying the usual hypotheses, and
let $\eps(\cdot)$ be a nonnegative progressively measurable
process such that $\,\EN_T=\int_0^T \eps(t)\, dt\,$ is uniformly
bounded from above and away from the origin. With $U$, a given
utility function, the question is to find a progressively
measurable, nonnegative consumption-rate process $c(\cdot)$ satisfying \
$\int_0^T c(t)\,dt<\infty$ a.s. - so as to maximize the expected
utility $ \EE\int_0^T U(c(t))\, dt,$ subject to the constraint
\begin{equation}\lprint{constr1} \int_0^T c(t)\, dt\leq
\int_0^T \eps(t)\,dt\ \ \text{a.s.}\end{equation}
\end{prob}

The following theorem was proved in \cite{LakSlu91}. As usual,
$I(\cdot)$ will denote the inverse marginal utility, i.e.
$I(y)=(U')^{-1}(y)$, for $0<y<\infty$. We include a proof for the
reader's convenience.
\begin{thm} Suppose there exists a positive $\FFF$-martingale $Y$
such that \begin{equation} \lprint{constr}\int_0^T I(Y_t)\,dt\ =\
\int_0^T\eps(t)\,dt\, ,\ \ \ \ \ \text{a.s.}\end{equation} Then  an optimal
consumption process is given by
\begin{equation}\lprint{hatc} \hat{c}(t)=I(Y_t),\ \ 0\leq t\leq T.\end{equation} \lprint{Lak1}
\vskip-0.8cm
\end{thm}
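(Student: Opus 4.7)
The plan is to exploit the concavity of $U$ in the standard ``pointwise Fenchel'' way, and then use the $\mathbb{F}$-martingale property of $Y$ together with the constraint (\ref{constr1}) to dispose of the linear correction term.

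First I would check that the candidate $\hat{c}(t)=I(Y_t)$ is indeed admissible. Progressive measurability is immediate from that of $Y$ and continuity of $I$; the constraint (\ref{constr1}) holds with equality by the hypothesis (\ref{constr}); and $\int_0^T I(Y_t)\,dt = \int_0^T \eps(t)\,dt < \infty$ a.s.\ by assumption on $\eps$. Thus $\hat{c}$ is feasible.

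Next, for the optimality, I would start from the fundamental inequality coming from concavity of $U$ and the identity $U'(I(y))=y$, namely
\begin{equation*}
U(c) \ \le\ U(I(y))+y\bigl(c-I(y)\bigr),\ \ \text{for all $c,y>0$.}
\end{equation*}
Applying this pointwise with $c=c(t)$ and $y=Y_t$ for an arbitrary admissible $c(\cdot)$, then integrating on $[0,T]$ and taking expectations, gives
\begin{equation*}
\EE\int_0^T U(c(t))\,dt\ \le\ \EE\int_0^T U(I(Y_t))\,dt\ +\ \EE\int_0^T Y_t\bigl(c(t)-I(Y_t)\bigr)\,dt.
\end{equation*}
The conclusion will follow once I show that the last term is $\le 0$.

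The key step here is a Fubini-martingale interchange. Because $c(\cdot)$ and $I(Y_\cdot)$ are progressively measurable and nonnegative, Fubini's theorem and the martingale identity $\EE[Y_T\mid\FF_t]=Y_t$ give
\begin{equation*}
\EE\int_0^T Y_t\,c(t)\,dt\ =\ \int_0^T \EE\bigl[Y_t c(t)\bigr]\,dt\ =\ \int_0^T \EE\bigl[Y_T c(t)\bigr]\,dt\ =\ \EE\left[Y_T\int_0^T c(t)\,dt\right],
\end{equation*}
and similarly with $I(Y_\cdot)$ in place of $c$, yielding $\EE\int_0^T Y_t I(Y_t)\,dt=\EE[Y_T\int_0^T I(Y_t)\,dt]=\EE[Y_T\int_0^T\eps(t)\,dt]$. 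Subtracting and using the constraint (\ref{constr1}) together with positivity of $Y_T$ gives
\begin{equation*}
\EE\int_0^T Y_t\bigl(c(t)-I(Y_t)\bigr)\,dt\ =\ \EE\left[Y_T\left(\int_0^T c(t)\,dt-\int_0^T\eps(t)\,dt\right)\right]\ \le\ 0,
\end{equation*}
and equality holds for $c=\hat c$ because (\ref{constr}) makes the inner bracket vanish.

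The only genuine obstacle is technical: justifying the Fubini swap and making sure $\EE\int_0^T U(c(t))\,dt$ is well-defined (in $[-\infty,\infty)$). For the admissible $c$'s this is handled exactly as in Remark \ref{nonintU}: we assign the value $-\infty$ whenever $U^-(c(t))$ is not integrable, in which case the inequality is trivial; otherwise positivity of $Y$ and $c$ makes Fubini straightforward. A minor check is that $\EE\int_0^T U(I(Y_t))\,dt$ is finite, which follows from the boundedness of $\int_0^T\eps(t)\,dt$ away from $0$ and $\infty$ together with standard bounds on $U\circ I$.
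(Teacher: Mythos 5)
Your argument is correct and is essentially the paper's proof: both start from the pointwise Fenchel inequality $U(c)\le U(I(y))+y(c-I(y))$ and then use the martingale property of $Y$, via Fubini and the tower property, to pass the constraint $\int_0^T c(t)\,dt\le\int_0^T\eps(t)\,dt$ through $Y_T$. The only cosmetic difference is that the paper packages the Fubini--martingale interchange by introducing the equivalent probability measure $\tilde{\PP}\triangleq\frac{1}{y}Y_T\cdot\PP$ and writing $\EE\int_0^T Y_t I(Y_t)\,dt = y\,\tilde{\EE}\int_0^T I(Y_t)\,dt$, etc., whereas you carry out the identical computation $\EE\int_0^T Y_t\,c(t)\,dt=\EE\bigl[Y_T\int_0^T c(t)\,dt\bigr]$ directly; your version is, if anything, slightly more transparent about where the positivity of $Y_T$ and the constraint actually enter.
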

\begin{proof}
>From the inequality $U(I(y))\geq U(c)+yI(y)-yc$, valid for $y>0$,
and $c>0$, we obtain
\[ U(I(Y_t))\geq U(c(t))+Y_t I(Y_t)-Y_t c(t),\] for every
positive, adapted process $\{ c(t),\,t\in[0,T]\}$. Therefore,
\[ \EE\int_0^T U(\hat{c}(t))\,dt\ \geq \EE\int_0^T U(c(t))\,
dt\,+\, \EE\int_0^T Y_tI(Y_t)\,dt\,-\, \EE\int_0^T Y_t c(t)\,dt,\]
and the optimality of the process $\hat{c}$ in (\ref{hatc}),
amongst those that satisfy (\ref{constr1}), will follow once we
have shown that this latter constraint implies
\begin{equation} \lprint{four}\EE\int_0^T Y_t I(Y_t)\,dt\ \geq\ \EE\int_0^T Y_t
c(t)\,dt.\end{equation} To do that, it suffices to introduce the
probability measure
$\tilde{\PP}(A)\triangleq\frac{1}{y}\EE[Y_T\ind{A}]$ for
$A\in\FF_T$, where $y=\EE [Y_0]\in (0,\infty)$. This measure is
equivalent to $\PP$, and thus the martingale property of $Y$, (\ref{constr1}) and (\ref{constr})
lead to
\[ \EE\int_0^T Y_t I(Y_t)\,dt\ = \ y\,\tilde{\EE}\int_0^T I(Y_t)\,dt\
= \ y\, \tilde{\EE}\int_0^T \eps(t)\,dt\ = \ \EE\int_0^T
Y_t\, \eps(t)\,dt \geq = \ \EE\int_0^T
Y_t\, c(t)\,dt,\] which is (\ref{four}).
\end{proof}

We prove the following existence result, which is a partial
converse of Theorem \ref{Lak1}:

\begin{prop} When the utility function $U(\cdot)$ satisfies the ``reasonable asymptotic
 elasticity'' condition of Definition \ref{DefUtility} (4), the optimization Problem \ref{lakner} has a unique
solution which is of the form $\hat{c}(t)=I(Y_t),\ 0\leq t \leq
T$,  for some positive, RCLL supermartingale $Y$; this process
satisfies
\begin{equation}\int_0^T
I(Y_t)\,dt=\int_0^T\eps(t)\,dt\,,\ \
\text{a.s.}\lprint{equal}\end{equation}
\end{prop}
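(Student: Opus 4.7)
The plan is to cast Problem~\ref{lakner} as a special instance of the general framework studied in Theorem~\ref{MainTheorem}. Concretely, I would take the portfolio-constraint cone $\KK\triangleq\{0\}$ (no stock trading), the admissible measure $\mu\triangleq\lambda/T$ (normalized Lebesgue on $[0,T]$, whose support is all of $[0,T]$), the cumulative endowment $\EN_t\triangleq\int_0^t\eps(s)\,ds$, a deterministic utility random field $U(t,x)\triangleq U(x)$, and initial wealth $x=0$. The last choice is legitimate because $\EN_T$ is bounded away from zero, so the feasible set $\AA^{\mu}(\EN)$ is nonempty (it contains $\eps$ itself) and Standing Assumption~\ref{StandingAssumption} follows from boundedness of $\EN_T$ combined with the majorant bound of Proposition~\ref{conjuv}; the proof of Theorem~\ref{MainTheorem} carries over to $x=0$ without change under these hypotheses. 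Under $\KK=\{0\}$, the only admissible portfolio process is $H\equiv 0$, so the admissibility condition (\ref{defadm}) collapses to the pathwise constraint $C_T\leq\EN_T$ a.s., which is exactly the Lakner--Slud constraint (\ref{constr1}).

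Theorem~\ref{MainTheorem}(v) now delivers a unique optimal consumption-rate process $\hat c\in\AA^{\mu}(\EN)$ of the form $\hat c(t)=I(t,y\,Y^{\hat{\QQ}^y}_t)=I(y\,Y^{\hat{\QQ}^y}_t)$, for some $y>0$ and some $\hat{\QQ}^y\in\DD$. Setting $Y_t\triangleq y\,Y^{\hat{\QQ}^y}_t$ exhibits $Y$ as a nonnegative, $\FFF$-adapted, RCLL supermartingale satisfying $\hat c(t)=I(Y_t)$. Strict positivity of $Y$ throughout $[0,T]$ is forced by $\mathrm{supp}\,\mu=[0,T]$ together with the Inada condition $I(0+)=+\infty$: a zero value of $Y$ on a set of positive $(\mu\otimes\PP)$-measure would drive $\hat c$ to infinity there, in contradiction with $\int_0^T\hat c(t)\,dt<\infty$ a.s. Uniqueness of $\hat c$ is also part of the conclusion of Theorem~\ref{MainTheorem}(v).

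The identity (\ref{equal}) is then a consequence of strict monotonicity of $U$. If the random surplus $f\triangleq\int_0^T(\eps(t)-I(Y_t))\,dt\geq 0$ were strictly positive on a set of positive probability, one would construct an adapted, nonnegative perturbation $g(\cdot)\not\equiv 0$ with $\int_0^T g(t)\,dt\leq f$ a.s.; the admissible consumption $\hat c+g$ would then yield strictly greater expected utility by strict concavity and monotonicity of $U$, contradicting the optimality of $\hat c$. A concrete candidate is $g(t)\triangleq\delta\,\ind{[0,\tau]}(t)$ with $\delta>0$ small, where the predictable stopping time $\tau\triangleq\inf\{t\in[0,T]:\int_0^t(\eps(s)-\hat c(s)-\delta)\,ds<0\}\wedge T$ halts the extra consumption the moment the running budget threatens to be violated.

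The main technical obstacle is the rigorous construction of the perturbation $g(\cdot)$ in the preceding paragraph: in a general filtration an $\FF_T$-measurable surplus $f$ cannot be redistributed across $[0,T]$ in an adapted way by a naive ``consume the Doob martingale of $f$'' recipe, so one must combine a small constant consumption rate with a well-chosen stopping time (as above), and then verify on the event $\{\tau<T\}$ that $\int_\tau^T(\eps-\hat c)\,ds\geq 0$ -- which requires an extra averaging argument conditioning on $\{\PP(f>\alpha)>0\}$ for suitable $\alpha>0$. Once this is in place, existence, the representation $\hat c=I(Y)$, positivity of $Y$, and uniqueness all follow directly from Theorem~\ref{MainTheorem}.
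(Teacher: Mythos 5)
Your overall strategy---embed Problem \ref{lakner} in the general framework, invoke Theorem \ref{MainTheorem}(v) for existence, uniqueness, and the representation $\hat c(t)=I(Y_t)$, and argue strict positivity of $Y$ via the Inada condition---matches the paper's first half almost exactly (the paper realizes ``no trading'' by setting $S_t\equiv 1$ and $B_t\equiv 1$ rather than by choosing $\KK=\{0\}$, but these are interchangeable here). Both you and the paper implicitly run the machinery at $x=0$; you flag this, the paper does not.

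The genuine gap is in your derivation of the equality (\ref{equal}). Your plan is a primal perturbation: if the surplus $f=\int_0^T(\eps-\hat c)\,ds$ is positive with positive probability, add an adapted nonnegative increment $g$ with $\int_0^T g\,ds\leq f$ a.s.\ and $g\not\equiv 0$, to contradict optimality. As you yourself observe, the difficulty is that $f$ is $\FF_T$-measurable and must be ``spread backward'' adaptedly, and your proposed stopping time $\tau$ does not accomplish this: on $\{\tau<T\}$ you have $\int_0^{\tau}(\eps-\hat c-\delta)\,ds=0$ by continuity, hence $f-\delta\tau=\int_{\tau}^T(\eps-\hat c)\,ds$, and there is no reason for this to be nonnegative, since the constraint on $\hat c$ is a \emph{terminal} budget constraint only and the optimal $\hat c$ may front-load consumption (borrow against future endowment). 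The ``extra averaging argument'' you gesture at is not spelled out, and it is not clear how conditioning on $\{\PP(f>\alpha)>0\}$ would repair this structural obstruction.

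The paper avoids the perturbation entirely by arguing through the dual. It first uses Proposition \ref{OU}, applied with $\QQ$ ranging over all probability measures equivalent to $\PP$ (all of which belong to $\MM$ here), to obtain the \emph{pathwise} inequality $\int_0^T\hat c\,dt\leq\int_0^T\eps\,dt$ a.s. It then uses Theorem \ref{MainTheorem}(v) and (vi), in the form $-{\mathfrak V}'(y)=x=0$ together with the formula for ${\mathfrak V}'(y)$, to obtain the \emph{reverse} inequality in $\hq^r$-expectation, namely $\EE\bigl[Y^{\hq}_T\int_0^T\hat c\,dt\bigr]\geq\EE\bigl[Y^{\hq}_T\int_0^T\eps\,dt\bigr]$; since $Y^{\hq}_T=d\hq^r/d\PP>0$ a.s., an a.s.\ inequality plus equality in equivalent-measure expectation forces a.s.\ equality. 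You would do well to replace your perturbation paragraph with this dual argument; the rest of your write-up stands.
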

\begin{proof}
We note first that Problem \ref{lakner} is a special case of our
Primal problem with a one-dimensional {\em ``stock price''}
process $S_t\equiv 1$ and  trivial bond-price process $B_t\equiv
1$. In this case {\em all} measures equivalent to $\PP$ are
equivalent supermartingale measures, and by Theorem \ref{yeyd} any
RCLL-supermartingale $Y$ with $Y_0\leq 1$ is in $\YY$. By the Main
Theorem \ref{MainTheorem}, the unique optimal consumption-rate
process is given by $ \hat{c}(t)=I(y Y^{\hq}_t),\ \ 0\leq t\leq
T,$ for some $y>0$ and some $\hat{\QQ}\in\DD$. To finish the proof
we define $Y_t=yY^{\hq}_t$, and note that   Proposition \ref{OU}
implies that
\begin{equation}\int_0^T \hc(t)\,dt\ \ \leq\ \  \int_0^T \eps(t)\,dt \,,\
\ \  \text{a.s.}\lprint{domin}\end{equation} because every measure
equivalent to $\PP$ is in $\MM$. From the Main Theorem
\ref{maintheorem} (v) and (vi), it follows that, for the optimal
solution $\hq\in\DD$ of the dual problem, we have
\begin{equation} \lprint{ff}
\EE\int_0^T Y^{\hq}_T \hc(t)\, dt \,=\,
\sclb{\hq}{\int_0^T \hc(t)\,dt} \, \geq \,
\sclb{\hq}{\int_0^T\eps(t)\,dt}\,=\,
\EE\int_0^T Y^{\hq}_T \eps(t)\, dt\,
.\end{equation} The random variable
$\,Y^{\hat{\QQ}}_T = L^{\hat{\QQ}}_T = d(\hat{\QQ})^r/d{\PP}\,$ is strictly positive, so the equation
(\ref{equal}) follows from (\ref{domin}) and (\ref{ff}).
\end{proof}


\appendix
\section{Proof of the main theorem \ref{maintheorem}}\lprint{ProofA}

In this part we state and prove a number of results leading to the
proof of our Main Theorem \ref{maintheorem}. To simplify the
notation we do not relabel the indices when passing to a
subsequence.

\subsection{Existence in the Dual Problem} We study the dual
problem first. In this subsection we point out some properties of
the dual objective function and establish the existence of
$\,\hat{\QQ} \in {\mathcal D}\,$ which is optimal in the dual
problem of (3.3). The
negative part $\max\{0,-V\}$ of the random field $V$ will be
denoted by $V^-(\cdot)$. Our first result establishes a
lower-semicontinuity property for the nonlinear part of the dual
objective function. We remind the reader that $V$ is the convex
cunjugate of $U$ introduced in (\ref{defv}).
\begin{lem} \lprint{UIV} For $y>0$, the family
of random processes $\{V^-(\cdot,y\yq_{\cdot})\,:\, \QQ\in\DD\}$
is uniformly integrable with respect to the product measure
$(\mu\otimes\PP)$ on $[0,T]\times\Omega$. Furthermore, the lower-
semicontinuity relation
\begin{eqnarray}\lprint{Vusc}
\vq{y}{\QQ}\,\leq\, \liminf_n \,\vq{y}{\QQ^{(n)}}
\end{eqnarray}
holds for all sequences $\{\QQ^{(n)}\}_{n\in\N}\subseteq\DD$ such
that $\yqn$ converges to a RCLL  supermartingale $\yq$,
$(\mu\otimes\PP)$-a.e.
\end{lem}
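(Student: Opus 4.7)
By Proposition~\ref{conjuv}\,(3), $V(t,y) \ge \uv(y)$ uniformly in $t, \omega$, where $\uv$ is the Legendre conjugate of a deterministic minorant utility function $\uu$ (Proposition~\ref{conjuv}\,(1) and Definition~\ref{defmin}). Hence one has the pointwise bound
\[
V^-(t, y) \le \uv^-(y),\qquad t, \omega, y > 0.
\]
The Inada condition $\uu'(0+) = \infty$ together with the monotonicity of $\uu'$ gives, for $0 < x < a$, the estimate $|\uu(x) - \uu(a)| = \int_x^a \uu'(s)\,ds \le \uu'(x)(a-x)$, so $\uu(x)/\uu'(x)\to 0$ as $x \downarrow 0$ (take $a \to 0$ after $x \to 0$). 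Via the Fenchel identity $\uv(y) = \uu(J(y)) - yJ(y)$ with $J := (\uu')^{-1}$, this yields $\uv(y)/y \to 0$, and hence $\uv^-(y)/y \to 0$ as $y \to \infty$. Combined with the continuity of $\uv^-$ on $(0,\infty)$, this produces, for every $\eps > 0$, a constant $C_\eps < \infty$ such that
\[
V^-(t,y)\le \eps y + C_\eps,\qquad t,\omega, y > 0.
\]
Applying this at $yY^\QQ_t$, the supermartingale inequality $\EE[Y^\QQ_t]\le Y^\QQ_0\le 1$ and Chebyshev's estimate $(\PP\otimes\mu)(Y^\QQ_t > K')\le 1/K'$ combine with the inclusion $\{V^-(t,yY^\QQ_t) > K\} \subseteq \{Y^\QQ_t > K/(2\eps y)\}$ (valid whenever $K > 2C_\eps$) to give
\[
\sup_{\QQ\in\DD}\,\EE\!\int_0^T V^-(t,yY^\QQ_t)\,\ind{V^-(t,yY^\QQ_t) > K}\,\mu(dt)\le \eps y + \frac{2\eps y\, C_\eps}{K}.
\]
Choosing $\eps = \delta/(2y)$ and then $K$ large renders the right-hand side at most an arbitrary $\delta > 0$, which is uniform integrability.

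\textbf{Part 2 (lower semicontinuity).} Given $\{\QQ^{(n)}\}\subseteq\DD$ with $Y^{\QQ^{(n)}}\to Y^\QQ$ $(\mu\otimes\PP)$-a.e., the continuity of $V(t,\cdot)$ (Proposition~\ref{conjuv}\,(2)) yields $V(t,yY^{\QQ^{(n)}}_t)\to V(t,yY^\QQ_t)$ $(\mu\otimes\PP)$-a.e. Decompose $V = V^+ - V^-$. Fatou's lemma on the nonnegative part gives
\[
\EE\!\int_0^T V^+(t,yY^\QQ_t)\,\mu(dt)\le \liminf_n \EE\!\int_0^T V^+(t,yY^{\QQ^{(n)}}_t)\,\mu(dt),
\]
while Part~1 and Vitali's convergence theorem deliver
\[
\lim_n \EE\!\int_0^T V^-(t,yY^{\QQ^{(n)}}_t)\,\mu(dt) = \EE\!\int_0^T V^-(t,yY^\QQ_t)\,\mu(dt).
\]
Subtracting these two relations gives \eqref{Vusc}.

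\textbf{Main obstacle.} The decisive ingredient is the sub-linear growth $\uv^-(y) = o(y)$ of the conjugate of the deterministic minorant, which lets the (only $L^1$-bounded) supermartingale control on $\{yY^\QQ_t\}$ be converted into honest uniform integrability for $\{V^-(\cdot,yY^\QQ_\cdot)\}$. The subtlety is that bounding $V^-$ crudely by $yY^\QQ_t + \mathrm{const}$ via Fenchel--Young alone is insufficient, because $\{yY^\QQ_t : \QQ \in \DD\}$ need not itself be uniformly integrable on $\mu\otimes\PP$; the sub-linearity, which rests on the elementary but non-obvious fact $U(x)/U'(x) \to 0$ as $x \downarrow 0$ under the Inada condition $U'(0+) = \infty$, is precisely what patches this gap.
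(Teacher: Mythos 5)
Your proof is correct, and Part 2 (Fatou on $V^+$ plus Vitali on $V^-$, then subtract) is exactly the argument of the paper. Part 1 is where you deviate: the paper constructs the right-continuous inverse $\varphi$ of $\uv^-$, shows $\varphi(x)/x\to\infty$ by L'H\^opital (using $(\uv^-)'(y)=-\uv'(y)=(\uu')^{-1}(y)\to0$ from the Inada condition), bounds $\sup_{\QQ}\eit{\varphi(V^-(t,y\yq_t))}\le\varphi(0)+y$, and invokes the de~la~Vall\'ee~Poussin criterion. You instead extract the same underlying sub-linearity $\uv^-(y)=o(y)$ directly, and convert the resulting pointwise estimate $V^-(t,y)\le\eps y+C_\eps$ into uniform integrability by hand through Markov's inequality on $\{Y^\QQ_t>K'\}$. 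Both arguments are powered by the same fact — that the Inada condition $\uu'(0+)=\infty$ forces $\uv^-$ to grow strictly sub-linearly — but you verify UI from its definition rather than routing through de~la~Vall\'ee~Poussin; this is more elementary and self-contained, at the cost of the little bookkeeping with $\eps$, $C_\eps$, $K$. One small imprecision: you should note that $\uv^-$ is bounded on any bounded interval including a neighbourhood of $0$ (it is, since $\uv(0+)=\uu(\infty)>-\infty$), which is what actually furnishes the constant $C_\eps$ once you have $\uv^-(y)\le\eps y$ for $y\ge y_\eps$; ``continuity on $(0,\infty)$'' alone does not quite say this. Similarly, the paper separately handles the degenerate case $\varphi\equiv\infty$ (i.e., $\uv^-$ bounded), which in your formulation is absorbed automatically, as then $\uv^-(y)/y\to0$ trivially.
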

\begin{proof}
Let $\uv(\cdot)$ be a minorant of $V(\cdot,\cdot)$, as introduced
in Definition \ref{defmin}. We define $\varphi :\R_+\to\R_+$ to be
the right-continuous inverse of $\uv^-(\cdot)$, i.e. $ \varphi
(x)\triangleq\inf\set{y\geq 0\,:\, \uv^- (y)< x},\ \text{for
$x\geq0$}.$ Suppose first that $\varphi(x)$ is finite for all
$x\geq 0$. Then, by L'H\^ opital's rule,
\[ \lim_{x\to\infty} \frac{\varphi (x)}{x}=
\lim_{x\to\infty} \varphi '(x) =\lim_{y\to\infty}
\frac{1}{(\uv^-)^{'}(y)}=\infty.\] The family $\set{\varphi(V^-
(\cdot,y\yq_{\cdot}))\,:\,\QQ\in\DD}$ is bounded in
$L^1(\mu\otimes\PP)$, because
\begin{eqnarray*} \eit{\varphi (V^-(t,y\yq_t))}\ \leq\
\eit{\varphi (\uv^-(y\yq_t))}\ \leq\  \varphi (0)+\eit{y\yq_t}\
\leq\  \varphi (0)+y.
\end{eqnarray*} Thus, by the thorem of de la Vall\' e Poussin (see \cite{Shi96}, Lemma II.6.3.
p. 190), the family of random variables
 $\set{\varphi (V^- (\cdot,y\yq_t))\,:\,\QQ\in\DD}$ is   uniformly
integrable. If $\varphi(x)=\infty$, for some $x>0$, then
$\uv^-(\cdot)$ is a bounded function and uniform integrability
follows readily.

Let $\{\QQ^{(n)}\}_{n\in\N}\subseteq\DD$ be a sequence such that
$\{\yqn\}_{n\in\N}$ converges to a RCLL-supermartingale $\yq$,
$(\mu\otimes\PP)$-a.e. By uniform integrability we have that
\begin{equation}\lprint{s1}
\eit{V^- (t,y\yqn_t)}\longrightarrow\eit{V^- (t,y\yq_t)},\
\text{as $n\to\infty$}.
\end{equation}
As for the positive parts, Fatou's lemma gives that
\begin{equation}\lprint{s2}
 \liminf_n \eit{V^+ (t,y\yqn_t)}\ \geq\  \eit{ V^+(t,y\yq_t)}.\end{equation}
 The claim now follows from (\ref{s1}) and (\ref{s2}).
\end{proof}

The following result establishes the existence of a solution to
the dual problem.
\begin{prop} \lprint{exsol} For each $y>0$ such that ${\mathfrak V}(y)<\infty$, there is
$\hq\in\DD$
 such that \[{\mathfrak V}(y)\ = \ J(y,\hq)\  \ =\vq{y}{\hq} +y\scl{\hq}{\EN_T}.\]
\end{prop}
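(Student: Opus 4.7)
The plan is a standard convex-duality compactness argument, adapted to the finitely-additive setting. Fix $y>0$ with $\mathfrak{V}(y)<\infty$ and pick a minimizing sequence $\{\QQ^{(n)}\}_{n\in\N}\subseteq\DD$ with $J(y,\QQ^{(n)})\to\mathfrak{V}(y)$. The bulk of the work is to find a suitable limit point $\hat{\QQ}\in\DD$, along which both pieces of $J(y,\cdot)=\EE\int_0^T V(t,yY^{\QQ}_t)\mu(dt)+y\scl{\QQ}{\EN_T}$ behave favorably (lower-semicontinuously).

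First, I would address the linear term. The set $\DD$ is weak-$*$ compact by Proposition \ref{charge}(iii), and since $\EN_T\in\linf$, the map $\QQ\mapsto\scl{\QQ}{\EN_T}$ is weak-$*$ continuous. Extracting a weak-$*$ cluster point $\hat{\QQ}\in\DD$ thus handles the linear term.

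The delicate step is the $V$-part, and here I would use Koml\'os's theorem together with Proposition \ref{Fat2}. The sequence $\{Y^{\QQ^{(n)}}\}$ is bounded in $L^1(\mu\otimes\PP)$ because each $Y^{\QQ^{(n)}}$ is a nonnegative supermartingale with $Y^{\QQ^{(n)}}_0\leq 1$, hence $\EE\int_0^T Y^{\QQ^{(n)}}_t\,\mu(dt)\leq 1$. Koml\'os's theorem then supplies forward convex combinations $\tilde{\QQ}^{(n)}=\sum_{k\geq n}\lambda_{n,k}\QQ^{(k)}\in\DD$ such that $Y^{\tilde{\QQ}^{(n)}}=\sum_{k\geq n}\lambda_{n,k}Y^{\QQ^{(k)}}$ (using additivity of the regular part, Proposition \ref{charge}(vii), together with $Y^{\QQ}=L^{\QQ}$, $(\mu\otimes\PP)$-a.e., from Proposition \ref{pryq}(a)) converges $(\mu\otimes\PP)$-a.e.\ to some limit $\tilde{Y}$. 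Convexity of $V(t,\cdot)$ and linearity of $\QQ\mapsto\scl{\QQ}{\EN_T}$ guarantee that $\{\tilde{\QQ}^{(n)}\}$ is still a minimizing sequence.

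Next, to pin down $\tilde{Y}$, I would also extract a Fatou-convergent subsequence (possibly after further convex combinations, invoking the Fatou-compactness available via Lemma \ref{linf} and the Fatou-closedness of $\yd$ from Theorem \ref{yeyd}), producing a Fatou-limit $Y$ that agrees with $\tilde{Y}$ $(\mu\otimes\PP)$-a.e. Letting $\hat{\QQ}\in\DD$ be a weak-$*$ cluster point of $\{\tilde{\QQ}^{(n)}\}$, Proposition \ref{Fat2} identifies $\tilde{Y}=Y=Y^{\hat{\QQ}}$ $(\mu\otimes\PP)$-a.e. Finally, Lemma \ref{UIV} provides the lower-semicontinuity
\[
\meit{V(t,yY^{\hat{\QQ}}_t)}\leq\liminf_n\meit{V(t,yY^{\tilde{\QQ}^{(n)}}_t)},
\]
and combining with $\scl{\tilde{\QQ}^{(n)}}{\EN_T}\to\scl{\hat{\QQ}}{\EN_T}$ along the chosen subsequence yields $J(y,\hat{\QQ})\leq\liminf_n J(y,\tilde{\QQ}^{(n)})=\mathfrak{V}(y)$, so $\hat{\QQ}$ is optimal.

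The main obstacle is the simultaneous reconciliation of three different modes of convergence on $\{\QQ^{(n)}\}$: weak-$*$ convergence of the finitely-additive measures (needed for the endowment term), $(\mu\otimes\PP)$-a.e.\ convergence of the associated supermartingales (needed to exploit Fatou's lemma on $V^+$), and Fatou convergence of those supermartingales (needed to invoke Proposition \ref{Fat2} and identify the limit with $Y^{\hat{\QQ}}$). Aligning all three along a single (sub)sequence of convex combinations, while making sure the minimizing property is preserved at each step, is the technical heart of the proof.
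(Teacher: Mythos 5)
Your overall strategy coincides with the paper's: pass to convex combinations of a minimizing sequence to obtain both Fatou- and $(\mu\otimes\PP)$-a.e.\ convergence of the supermartingales $Y^{\QQ^{(n)}}$, identify the limit as $Y^{\hat{\QQ}}$ for a weak-$*$ cluster point $\hat{\QQ}$ via Proposition~\ref{Fat2}, and conclude by the lower-semicontinuity of Lemma~\ref{UIV}. Most of the individual steps are correctly placed.

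There is, however, a genuine gap in the step where you ``extract a Fatou-convergent subsequence (possibly after further convex combinations, invoking the Fatou-compactness available via Lemma~\ref{linf} and the Fatou-closedness of $\yd$ from Theorem~\ref{yeyd})''. Neither cited result provides a compactness principle. Lemma~\ref{linf} is a statement about the structure of a sequence \emph{already known} to Fatou-converge (it says the Fatou-limit agrees with the pointwise $\liminf$ off a countable set); Theorem~\ref{yeyd} asserts closedness of $\yd$ under Fatou-limits, which is useless until one has produced such a limit. Koml\'os's theorem alone gives $(\mu\otimes\PP)$-a.e.\ convergence of convex combinations, but that does not by itself yield an RCLL supermartingale to which the sequence Fatou-converges — and Proposition~\ref{Fat2} explicitly requires \emph{both} modes of convergence before it can identify the limit with $Y^{\QQ^*}$. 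The missing ingredient is a dedicated Fatou-compactness result for positive supermartingales bounded at time zero; the paper invokes Lemma~5.2 in \cite{FolKra97} for exactly this purpose (first producing a Fatou-convergent sequence of convex combinations, then applying Koml\'os for the $(\mu\otimes\PP)$-a.e.\ convergence). Without such a result your argument stalls at the identification step.

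A secondary, minor point: because $\linfd$ is not metrizable, the phrase ``$\scl{\tilde{\QQ}^{(n)}}{\EN_T}\to\scl{\hat{\QQ}}{\EN_T}$ along the chosen subsequence'' requires care. The cleanest route — and the one the paper takes — is to first extract a subsequence of the original minimizing sequence along which the bounded real sequence $\scl{\QQ^{(n)}}{\EN_T}$ converges; convexity then preserves this convergence under forward convex combinations, and the limit automatically equals $\scl{\hat{\QQ}}{\EN_T}$ for any weak-$*$ cluster point $\hat{\QQ}$.
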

\begin{proof}
We fix $y>0$ and let $\qnn$ be a minimizing sequence for
$J(y,\cdot)$. We first assume that the sequence
$\{\scl{\qn}{\EN_T}\}_{n\in\N}$ converges in $\R$. This can be
justified by extracting a subsequence if necessary. By Lemma 5.2
in \cite{FolKra97} we can find a sequence of convex combinations
$\{\yqn_{\cdot}\}_{n\in\N}$ and a RCLL-supermartingale $Y$ such
that $\{\yqn_{\cdot}\}_{n\in\N}$ converges towards $Y$ in the
Fatou sense. Because of boundedness in $\lone(\mu\otimes\PP)$,
thanks to Koml\' os's theorem we can pass to a further sequence of
convex combinations to achieve convergence $(\mu\otimes\PP)$-a.e.
By Proposition \ref{Fat2}, the   limit is still $Y$.
 Because of the
convexity of $V(t,\cdot)$ and the convergence of the sequence
$\{\scl{\qn}{\EN_T}\}_{n\in\N}$, passing to convex combinations
preserves the property of being a minimizing sequence. By
Proposition \ref{Fat2}, the limit $Y$ is of the form $Y^{\hq}$ for
some (and then every) cluster point $\hq$ of $\qnn$; the existence
of such a cluster point is guaranteed by Alaoglu's theorem.
Invoking Lemma $\ref{UIV}$ establishes the claim of the
proposition.
\end{proof}

\subsection{Conjugacy and finiteness of ${\mathfrak U}(\cdot)$ and ${\mathfrak V}(\cdot)$}

The next step is to establish a conjugacy relation between
${\mathfrak U}(\cdot)$ and ${\mathfrak V}(\cdot)$. The most
important tool in this endeavor is the Minimax Theorem.
\begin{lem} \lprint{conj} The function ${\mathfrak V}(\cdot)$ is the convex conjugate
of ${\mathfrak U}(\cdot)$, i.e.
\[{\mathfrak V}(y)=\sup_{x>0} [{\mathfrak U}(x)-xy]\ \ \text{ for $y>0$.}\]
\end{lem}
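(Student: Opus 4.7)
The plan is to prove the conjugacy in two stages: first a weak-duality inequality via Young's inequality, then strong duality through a Sion-type minimax argument applied to an appropriate Lagrangian.

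For weak duality, I would fix $x,y>0$, $c \in \AA^{\mu}(x+\EN)$, and $\QQ \in \DD$, and apply the pointwise Fenchel inequality $U(t,c(t)) \leq V(t,yY^{\QQ}_t) + yY^{\QQ}_t c(t)$. Integrating against $\mu\otimes\PP$ and invoking the budget characterization of Proposition \ref{OU} to bound $\EE\int_0^T Y^{\QQ}_t c(t)\,\mu(dt) \leq x + \scl{\QQ}{\EN_T}$, one obtains $\EE\int_0^T U(t,c(t))\,\mu(dt) \leq J(y,\QQ) + xy$. Taking $\sup_c$ followed by $\inf_\QQ$ then yields $\mathfrak{U}(x) - xy \leq \mathfrak{V}(y)$ for all positive $x$, which is one direction of the conjugacy.

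For the reverse direction I would introduce the Lagrangian
\[
\Phi(c,\QQ) \,:=\, \EE\int_0^T \bigl[U(t,c(t)) - y Y^{\QQ}_t c(t)\bigr]\,\mu(dt) + y\scl{\QQ}{\EN_T}
\]
on $L^0_+([0,T]\times\Omega, \mu\otimes\PP) \times \DD$. Pointwise maximization in $c$ (via the definition of $V$) gives $\sup_c \Phi(c,\QQ) = J(y,\QQ)$, so $\inf_\QQ \sup_c \Phi = \mathfrak{V}(y)$. On the other hand, $\inf_\QQ \Phi(c,\QQ) = \EE\int_0^T U(t,c(t))\,\mu(dt) - y\,x(c)$, where $x(c) := \sup_\QQ[\EE\int_0^T Y^\QQ_t c(t)\,\mu(dt) - \scl{\QQ}{\EN_T}]$; invoking Proposition \ref{OU} (which identifies $\AA^{\mu}(x+\EN)$ with $\{c : x(c) \leq x\}$) produces $\sup_c \inf_\QQ \Phi = \sup_{x>0}[\mathfrak{U}(x)-xy]$. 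The desired identity thus reduces to the minimax equality $\sup_c\inf_\QQ\Phi = \inf_\QQ\sup_c \Phi$. To justify the interchange I would apply a Sion-type theorem, exploiting: weak* compactness of $\DD$ (Proposition \ref{charge}(iii)); concavity of $\Phi(\cdot,\QQ)$ (from concavity of $U$); affinity of $\Phi(c,\cdot)$; and weak*-lower-semicontinuity of $\Phi(c,\cdot)$ on $\DD$, which follows from weak*-continuity of $\QQ\mapsto\scl{\QQ}{\EN_T}$ (since $\EN_T\in\linf$) combined with Proposition \ref{Fat2} and the uniform-integrability argument of Lemma \ref{UIV}. Since $L^0_+$ is not weak* compact, I would run the minimax on the truncated domains $\{c : c\leq n\}$ and pass to $n\to\infty$ by monotone convergence on both sides.

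The main obstacle will be the rigorous verification of the minimax hypotheses -- especially the weak*-semicontinuity in $\QQ$, which requires translating weak* convergence in $\linfd$ to Fatou/a.e. convergence of the associated supermartingales via Proposition \ref{Fat2} -- together with the passage to the limit as the truncation on $c$ is removed, where one must verify $\inf_\QQ\sup_{c\leq n}\Phi \nearrow \mathfrak{V}(y)$ and the corresponding convergence on the primal side. Finiteness of $\mathfrak{V}(y)$ for all $y>0$ then follows from Standing Assumption \ref{StandingAssumption} together with the established conjugacy, finishing the lemma.
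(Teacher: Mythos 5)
Your overall scaffolding matches the paper's: weak duality via the Fenchel inequality and Proposition~\ref{OU}, then a minimax argument on truncated consumption sets $\set{c\leq n}$, then a limiting passage as $n\to\infty$. However, the minimax step as you have set it up contains a genuine gap. You propose to apply a Sion-type theorem using weak* compactness of $\DD$ together with weak*-\emph{lower}-semicontinuity of $\QQ\mapsto\Phi(c,\QQ)$. That semicontinuity is false. The term $\QQ\mapsto -y\,\EE\int_0^T Y^{\QQ}_t c(t)\,\mu(dt)$ would need to be weak*-lsc, i.e.\ $\QQ\mapsto\EE\int_0^T Y^{\QQ}_t c(t)\,\mu(dt)$ would need to be weak*-usc; but the map $\QQ\mapsto Y^{\QQ}$ passes through the regular-part operator of Proposition~\ref{charge}(iv), and the regular part is discontinuous for the weak* topology. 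Concretely, if $\QQ^{(n)}\in\MM$ converges weak* to a $\QQ$ with a nontrivial singular component, then with $c\equiv 1$ one has $\EE\int_0^T Y^{\QQ^{(n)}}_t\mu(dt)=\scl{\QQ^{(n)}}{1}=1$ for all $n$, whereas $\EE\int_0^T Y^{\QQ}_t\mu(dt)$ can be strictly smaller because of the mass lost to the singular part of $\QQ$. Proposition~\ref{Fat2} does not repair this: it identifies the limit only along a sequence of \emph{convex combinations} (Koml\'os), and a pointwise semicontinuity statement in the (non-metrizable) weak* topology must hold along arbitrary nets, not along one privileged sequence built after passing to convex combinations. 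Lemma~\ref{UIV} gives uniform integrability of $V^-$, not of the linear term, so it is also of no help here.

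The fix is to put the compactness on the other side. The truncated sets $\SS_n=\sets{c}{0\leq c\leq n}$ are weak*-compact in $\linf(\mu\otimes\PP)$ by Alaoglu; $c\mapsto\Phi(c,\QQ)$ is concave there (by concavity of $U(t,\cdot)$) and upper semicontinuous (standard for concave integral functionals on weak*-compact sets, since $Y^{\QQ}\in\lone(\mu\otimes\PP)$); and $\QQ\mapsto\Phi(c,\QQ)$ is \emph{affine} on $\DD$ (Proposition~\ref{charge}(vii) makes $\QQ\mapsto Y^\QQ$ affine), which is all a Kneser--Fan/Sion-type theorem requires on the non-compact side --- \emph{no} topology or semicontinuity in $\QQ$ is needed. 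This is exactly what the paper does (citing Strasser, Theorem 45.8). Finally, the passage $n\to\infty$ is not a plain monotone-convergence step on the dual side: one must show $\lim_n\inf_\QQ\sup_{c\in\SS_n}\Phi\geq\mathfrak{V}(y)$, which the paper carries out via a minimizing sequence, Koml\'os, Proposition~\ref{Fat2}, and the uniform-integrability bound of Lemma~\ref{UIV}. You should replace ``monotone convergence on both sides'' with this argument.
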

\begin{proof}
For fixed $y\in (0,\infty)$ and $n\in\N$, let $\SS_n$ denote the set of all
nonnegative, progressively measurable processes
$c:[0,T]\times\Omega\to [0,n]$. The sets
$\SS_n$ can be viewed as a closed subsets of  balls in
$\linf(\mu\otimes\PP)$. Thanks to the concavity of $U(t,\cdot)$,
the compactness of $\SS_n$ (by Alaoglu's theorem; see
\cite{Woj96}, Theorem 2.A.9), and the convexity of $\DD$, we can
use the Minimax Theorem (see \cite{Str85}, Theorem 45.8 and its
corollaries) to obtain
\begin{eqnarray}\lprint{MT0}\nonumber
\sup_{c\in\SS_n}\inf_{\QQ\in\DD} \left( \EE\int_0^T \left(U(t,
c(t))-y\yq_t c(t)\right)
\,\mu(\dd{t})+y\scl{\QQ}{\EN_T}\right)=\\
\inf_{\QQ\in\DD}\sup_{c\in\SS_n} \left( \EE\int_0^T \left(U(t,
c(t))-y\yq_t c(t)\right)\,\mu(\dd{t}) +y\scl{\QQ}{\EN_T}\right),
\end{eqnarray} for any $n\in\N$, $y>0$.
Proposition \ref{OU} guarantees that $\cup_{x>0}
\AA^{\mu}(x+\EN)=\cup_{x>0}(\AA^{\mu})'(x+\EN)$ where \[
(\AA^{\mu})'(x+\EN)\triangleq\set{c\in \AA^{\mu}(x+\EN):\, \,
\sup_{\QQ\in\DD} \left(\EE\int_0^T c(t)
\yq_t\,\mu(dt)-\scl{\QQ}{\EN_T}\right)=x}.\] Thus, by pointwise
approximation of elements of $\ \cup_{x>0}(\AA^{\mu})'(x+\EN)\ $
by elements of $\cup_{n\in\N}\SS_n$, we obtain
\begin{eqnarray}\nonumber&&
\lim_{n\rightarrow\infty}\sup_{c\in\SS_n} \inf_{\QQ\in\DD} \left(
\EE\int_0^T \left(U(t, c(t))-y\yq_tc(t)\right)\,\mu(\dd{t})
+y\scl{\QQ}{\EN_T}\right)=\\ &=&\sup_{x>0}
\sup_{c\in(\AA^{\mu})'(x+\EN)} \EE\left[\int_0^T \left(U(t,
c(t))-xy\right)\,\mu(\dd{t})\right]= \sup_{x>0}[{\mathfrak
U}(x)-xy]. \lprint{cs}
\end{eqnarray}
We define $V^{(n)}(t,y)\triangleq\sup_{0<x\leq n}[U(t,x)-xy]$, and
the pointwise maximization yields
\begin{eqnarray} \nonumber && \inf_{\QQ\in\DD}\sup_{c\in\SS_n}
\left( \EE\int_0^T \left(U(t, c(t))-y\yq_tc(t)\right)\,\mu(\dd{t})
+y\scl{\QQ}{\EN_T}\right)\\&=& \inf_{\QQ\in\DD} \left( \EE\int_0^T
V^{(n)}(t,y\yq_t)\,\mu(\dd{t}) +y\scl{\QQ}{\EN_T}\right)\triangleq
{\mathfrak V}^{(n)}(y) \lprint{MT3}\end{eqnarray} From
(\ref{MT0}), (\ref{cs}) and (\ref{MT3}) we conclude that  $\lim_n
{\mathfrak V}^{(n)}(y)=\sup_{x>0} [{\mathfrak U}(x)-xy]$. To prove
the claim of the lemma it is enough to show that
$\lim_{n\rightarrow \infty} {\mathfrak V}^{(n)}(y)\geq {\mathfrak
V}(y)$, since ${\mathfrak V}^{(n)}(y)\leq {\mathfrak V}(y)$ holds
for all $y>0$, $n\in\N$. For a fixed $y>0$, let $\qnn\subseteq\DD$
be a sequence such that
\[\lim_{n\rightarrow\infty}\left( \EE\int_0^T
V^{(n)}(t,yY^{\qn}_t)\,\mu(\dd{t}) +y\scl{\qn}{\EN_T}\right)
=\lim_{n\rightarrow\infty}{\mathfrak V}^{(n)}(y).\] Using the
construction from Lemma \ref{UIV} we can assume that
$\scl{\qn}{\EN_T}\to\scl{\QQ^*}{\EN_T}$ and that $Y^{\qn}\to
Y^{\QQ^*}$ as $n\to\infty$, both in the $(\mu\otimes\PP)$-a.e. and
in the Fatou sense, where $\QQ^*$ is a cluster point of $\qnn$.

 Let
$\ou(\cdot)$ be a majorant of $U$, and $\ov(\cdot)$ its conjugate.
Then it is easy to see that \[V^{(n)}(t,y)\leq
\ov^{(n)}(y):=\sup_{0<x\leq n} [\ou(x)-xy],\ \text{for all $t\in
[0,T]$}\] and $\ov^{(n)}(y)=\ov(y)$ for $y\geq
\overline{I}(1)\geq\overline{I}(n)$ where $\overline{I}(y):=
(\ou'(\cdot))^{-1}(y)$. The argument from Lemma \ref{UIV} takes
care of  the uniform integrability of the sequence of processes
$\{V^{(n)}(\cdot,Y^{\qn}_{\cdot})^-\}_{n\in\N}$ as well as of the
following chain of inequalities
\begin{eqnarray*}&&\lim_{n\rightarrow\infty}\left( \EE \int_0^T
V^{(n)}(t,Y^{\qn}_t)\,\mu(\dd{t}) +y\scl{\qn}{\EN_T}\right) \
\geq\  \left( \EE \int_0^T V(t,Y^{\QQ^*}_t)\,\mu(\dd{t})
+y\scl{\QQ^*}{\EN_T}\right)\geq {\mathfrak V}(y),
\end{eqnarray*}
 settling the claim of the lemma.
\end{proof}
\begin{rem}{\rm { \lprint{finvy} It is a consequence of the decrease of
${\mathfrak V}(\cdot)$ and
 the preservation of
properness in the conjugacy relation (see \cite{Roc70}, Theorem
12.2, p. 104 )\lprint{Rocref} that the Standing Assumption
\ref{StandingAssumption} implies the existence of $y_0>0$ such
that ${\mathfrak V}(y)<\infty$ for $y>y_0$. Furthermore, the
strict convexity of $V(t,\cdot)$ allows us to denote by $\hqy$ the
unique (as far as its action on $\EN_T$ and the corresponding
supermartingale $\yqy$ are concerned) minimizer of the dual
problem for $y$ such that ${\mathfrak V}(y)<\infty$. } }\end{rem}

\begin{lem} \lprint{finv} ${\mathfrak V}(y)\in (-\infty,\infty)\ \text{for all
$y>0$.}$ \end{lem}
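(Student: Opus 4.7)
I would prove the two bounds $\mathfrak{V}(y) > -\infty$ and $\mathfrak{V}(y) < \infty$ separately, for each $y > 0$.

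\textbf{Lower bound.} For the lower bound I argue directly from the definition. Fix $y > 0$ and $\QQ \in \DD$. By Proposition \ref{conjuv} there is a deterministic minorant $\uv$ of $V$, which is finite, convex, and decreasing on $(0, \infty)$. The supermartingale property $\EE[Y^{\QQ}_t] \leq Y^{\QQ}_0 \leq 1$ together with Fubini-Tonelli gives $\EE\int_0^T y Y^{\QQ}_t \,\mu(dt) \leq y$. Viewing $\mu \otimes \PP$ as a probability on $[0,T]\times\Omega$, Jensen's inequality applied to the convex function $\uv$, combined with its monotonicity, then yields
\[
\EE\int_0^T V(t, y Y^{\QQ}_t)\,\mu(dt) \;\geq\; \EE\int_0^T \uv(y Y^{\QQ}_t)\,\mu(dt) \;\geq\; \uv\!\left(\EE\int_0^T y Y^{\QQ}_t\,\mu(dt)\right) \;\geq\; \uv(y).
\]
Since $y\scl{\QQ}{\EN_T} \geq 0$ and $\uv(y)$ is a finite real number, we conclude $J(y,\QQ) \geq \uv(y)$ uniformly in $\QQ \in \DD$, so that $\mathfrak{V}(y) \geq \uv(y) > -\infty$.

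\textbf{Upper bound.} Remark \ref{finvy} already yields some $y^{*} > 0$ with $\mathfrak{V}(y^{*}) < \infty$ and finiteness throughout $(y^{*}, \infty)$, and Proposition \ref{exsol} provides a dual minimizer $\hat{\QQ}^{*} \in \DD$ at $y^{*}$. What remains is the range $y \in (0, y^{*}]$, which I handle by plugging $\hat{\QQ}^{*}$ into the dual functional and showing $J(y,\hat{\QQ}^{*}) < \infty$. The endowment term is trivially bounded by $y\|\EN_T\|_\infty$, and the negative part of the $V$-integral is controlled by Lemma \ref{UIV}, so the task reduces to bounding $\EE\int V^{+}(t, y Y^{\hat{\QQ}^{*}}_t)\,\mu(dt)$. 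Choose a threshold $\tilde y_0 > 0$ small enough that the asymptotic-elasticity estimate $\Gamma_3$ of Proposition \ref{ae1} applies with some $\gamma \in (0,1)$ for all $z \leq \tilde y_0$, and that $V(t,z) > 0$ for all $t \in [0,T]$ and $z \leq \tilde y_0$ (the latter holding because condition (3) of Definition \ref{DefUtility} forces $V(t,z) \to \uu(\infty) > 0$ uniformly in $t$ as $z \downarrow 0$). I then split the space $[0,T]\times\Omega$ according to whether $y^{*} Y^{\hat{\QQ}^{*}}_t \leq \tilde y_0$ or not. On the ``small-$Y$'' set, $V \geq 0$, and $\Gamma_3$ with $\rho = y/y^{*} \in (0,1]$ gives the pointwise inequality $V(t, y Y^{\hat{\QQ}^{*}}_t) \leq (y/y^{*})^{-\gamma/(1-\gamma)} V(t, y^{*} Y^{\hat{\QQ}^{*}}_t)$; its $\mu\otimes\PP$-integral is finite because $\mathfrak{V}(y^{*}) < \infty$ combined with integrability of $V^{-}(\cdot, y^{*} Y^{\hat{\QQ}^{*}}_\cdot)$ (Lemma \ref{UIV}) forces $\EE\int V^{+}(t, y^{*} Y^{\hat{\QQ}^{*}}_t)\,\mu(dt) < \infty$. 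On the complementary set, $y Y^{\hat{\QQ}^{*}}_t \geq y\tilde y_0/y^{*}$, so by monotonicity of $V(t,\cdot)$ and Proposition \ref{conjuv}, $V(t, y Y^{\hat{\QQ}^{*}}_t) \leq \ov(y\tilde y_0/y^{*})$, a finite deterministic constant. Adding the two estimates yields $J(y,\hat{\QQ}^{*}) < \infty$, hence $\mathfrak{V}(y) < \infty$.

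\textbf{Anticipated main obstacle.} The delicate step is the transfer of integrability of $V^{+}$ from the argument $y^{*}$ down to the smaller argument $y$ on the ``small-$Y$'' set: this is exactly where the reasonable-asymptotic-elasticity condition (in the guise of $\Gamma_3$) intervenes, together with the subtle point that the scaling inequality $V(t,\rho z) \leq \rho^{-\gamma/(1-\gamma)} V(t,z)$ is meaningful only where $V(t,z) > 0$, which is why the threshold $\tilde y_0$ must be chosen not just for $\Gamma_3$ but also so that $V$ is genuinely positive on the small-$Y$ region.
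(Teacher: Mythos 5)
Your proof is correct and follows essentially the same approach as the paper's: both establish the lower bound via a deterministic minorant of $V$, the supermartingale estimate $\EE\int_0^T y\yq_t\,\mu(\dd t)\leq y$, and Jensen's inequality, and both establish the upper bound by plugging the dual minimizer at a base point $y^{*}$ (where finiteness holds by Remark~\ref{finvy} and Proposition~\ref{exsol}) into the dual functional at a smaller argument and then splitting the $V$-integral into a small-$Y$ region controlled by the $\Gamma_3$ scaling estimate of Proposition~\ref{ae1} and its complement controlled by the uniform boundedness of $V(t,\cdot)$ away from zero. Your explicit remark that the threshold must also guarantee positivity of $V$ is a correct clarification of a point the paper leaves implicit in the definition of $\Gamma_3$, but it does not alter the structure of the argument.
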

 \begin{proof}
 Let $\uu(\cdot)$ be a minorant of $U(\cdot,\cdot)$.
 $\uu(\cdot)$ is a utility function and the convex
conjugate $\uv(\cdot)$ of $\uu(\cdot)$ satisfies $\uv(y)\leq
V(t,y)$ for all $t$.
 Let $\rho=\norm{\EN_T}_{\linf}$. By the convexity of $\uv(\cdot)$
and Jensen's inequality, we have
 \begin{eqnarray}\nonumber\lprint{Jens}
 {\mathfrak V}(y)&=& \inf_{\QQ\in\DD} \left(\eit{V(t,y\yq_t)}+y\scl{\QQ}{\EN_T}\right)
 \ \geq\ \inf_{\QQ\in\DD} \eit{\uv(y\yq_t)}
 \\ &\geq&  \inf_{\QQ\in\DD} \uv\left(\eit{y\yq_t}\right)\  \geq\  \uv(y)>-\infty.
 \end{eqnarray}
 To prove that ${\mathfrak V}(y)$ is finite,  we first choose $y>0$
 such that ${\mathfrak V}(y)<\infty$ -- its existence is guaranteed by Remark \ref{finvy}.
For some $\gamma\in\Gamma_3\cap [{\mathrm AE}[U],1)$ a.s,  and
some $0<\rho<1$,  Proposition \ref{ae1} implies that there exists
$y_0>0$ such that
\[ V(t,\rho y)\leq C\, V(t,y),\ \text{for $y\leq y_0$}\] where
$C=\rho^{-\gamma/(1-\gamma)}$. By Proposition \ref{exsol} there is
$\hq^{y}\in\DD$ such that
 ${\mathfrak V}(y)=\eit{V(t,yY^{\hq^{y}}_t)}$, so
 \begin{eqnarray*}
 {\mathfrak V}(\rho y)&\leq& \eit{V(t,\rho y \yqy_t)}\\
 &=& \eit{V(t,\rho y \yqy_t)\inds{\rho y \yqy_t>y_0}} +
 \eit{V(t,\rho y \yqy_t)\inds{\rho y \yqy_t\leq y_0}}\\
 &\leq& \sup_t V(t,y_0) + C\eit{V(t,y \yqy_t)\inds{\rho y
 \yqy_t\leq y_0}}<\infty.
 \end{eqnarray*} We conclude that ${\mathfrak V}(y)<\infty$ for all
 $y>0$, due to the decrease of ${\mathfrak V}(\cdot)$.
 \end{proof}

 Having established the existence and essential uniqueness of the
 solution, and the finiteness of the value function for the dual
 problem, we can apply ideas from the calculus of variations to
 obtain the following:
\begin{lem} \lprint{difv} For each $y>0$ and
each $\QQ\in\DD$ we have
\[ \eit{(Y^{\QQ}_t-\yqy_t)I(t,y\yqy_t)}+\scl{\hqy-\QQ}{\EN_T}\ \leq\  0,\]
where $\hq^y$ is the optimal solution to the dual problem of (3.3) (as in
Proposition \ref{exsol} and Remark 10).
\end{lem}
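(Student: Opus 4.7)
The inequality is a first-order optimality condition at $\hqy$, to be derived by perturbing $\hqy$ in the direction of $\QQ$ inside the convex set $\DD$. For $\eps\in[0,1]$, set $\QQ_\eps\triangleq(1-\eps)\hqy+\eps\QQ\in\DD$. The crucial observation is that $Y^{\QQ_\eps}$ depends affinely on $\eps$ up to a $(\mu\otimes\PP)$-null set: applying Proposition \ref{charge}(vii) to each restriction $\QQ|_{\FF_t}$ yields $L^{\QQ_\eps}_t=(1-\eps)L^{\hqy}_t+\eps L^{\QQ}_t$, and Proposition \ref{pryq}(a) together with the diffuseness of $\mu$ on $[0,T)$ gives
\[
Y^{\QQ_\eps}_t \,=\, (1-\eps)Y^{\hqy}_t+\eps Y^{\QQ}_t \qquad (\mu\otimes\PP)\text{-a.e.}
\]
The endowment pairing is affine in $\eps$ by bilinearity.

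By optimality of $\hqy$ (Proposition \ref{exsol} and the remark that follows it), $J(y,\QQ_\eps)\geq J(y,\hqy)$ for every $\eps\in(0,1]$; dividing by $\eps$ gives
\[
\EE\int_0^T g_\eps(t)\,\mu(dt)+y\scl{\QQ-\hqy}{\EN_T}\geq 0, \qquad g_\eps(t)\triangleq\tfrac{1}{\eps}\bigl[V(t,yY^{\QQ_\eps}_t)-V(t,yY^{\hqy}_t)\bigr].
\]
By convexity of $V(t,\cdot)$, the map $\eps\mapsto g_\eps(t)$ is non-decreasing on $(0,1]$; hence $g_\eps(t)\downarrow g_0(t)$ as $\eps\downarrow 0$, where
\[
g_0(t) \,=\, y\,\partial_2 V(t,yY^{\hqy}_t)\bigl(Y^{\QQ}_t-Y^{\hqy}_t\bigr) \,=\, -y\,I(t,yY^{\hqy}_t)\bigl(Y^{\QQ}_t-Y^{\hqy}_t\bigr),
\]
using the conjugacy identity $\partial_2 V(t,\cdot)=-I(t,\cdot)$.

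The remaining task is to interchange limit and expectation, which I expect to be the main technical hurdle: once $\lim_{\eps\downarrow 0}\EE\int_0^T g_\eps\,\mu(dt)=\EE\int_0^T g_0\,\mu(dt)$ is in hand, combining with the preceding display and dividing by $y$ yields the claim after rearrangement. I would argue via the monotone convergence theorem for decreasing sequences, using the convexity upper bound $g_\eps\leq g_1=V(t,yY^{\QQ}_t)-V(t,yY^{\hqy}_t)$ as the dominating first term. This $g_1$ lies in $L^1(\mu\otimes\PP)$ whenever $J(y,\QQ)<\infty$: its negative part is uniformly integrable by Lemma \ref{UIV}, while its positive part has finite integral because both dual values are finite. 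The residual case $J(y,\QQ)=\infty$, in which $g_1$ need not be integrable, is the delicate point; one handles it by applying the inequality along a sequence $\{\QQ^{(n)}\}\subset\DD$ of finite-objective approximants to $\QQ$ (e.g.\ convex combinations with some fixed $\QQ_0$ of finite objective) and passing to the limit with Fatou, leveraging the uniform lower bound $g_\eps\geq g_0$ inherited from convexity.
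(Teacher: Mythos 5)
Your argument is correct, and it follows the same basic strategy as the paper: perturb $\hqy$ along the convex combination $\QQ_\eps=(1-\eps)\hqy+\eps\QQ$, invoke first-order optimality of $\hqy$, and pass to the limit $\eps\downarrow 0$. Where the two proofs diverge is in that limit passage. The paper applies the convexity (gradient) inequality for $V(t,\cdot)$ \emph{at the interior point} $y\yqe_t$ to bound the increment $V(t,y\yqe_t)-V(t,y\yqy_t)$ by $\eps y(\yqy_t-\yq_t)I(t,y\yqe_t)$, divides by $\eps$, and then passes to the limit by (reverse) Fatou using the dominating process $\yqy_t\,I\bigl(t,y(1-\eps_0)\yqy_t\bigr)$, whose $(\mu\otimes\PP)$-integrability is established as in Lemma \ref{finv} via reasonable asymptotic elasticity; this bound depends only on $\yqy$, so the argument handles every $\QQ\in\DD$ at once. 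You instead use monotonicity of secant slopes and monotone convergence, with the dominating term $g_1=V(t,y\yq_t)-V(t,y\yqy_t)$, which forces a case split on whether $J(y,\QQ)<\infty$; your proposed remedy for the case $J(y,\QQ)=\infty$ (replace $\QQ$ by $(1-\delta)\hqy+\delta\QQ$, note that this has finite objective — again by asymptotic elasticity — and that the resulting inequality is just $\delta$ times the claim) does work, but in substance it reproduces the same asymptotic-elasticity estimate the paper uses directly inside Fatou. So the two routes buy essentially the same thing: yours has a cleaner core step (MCT rather than Fatou with a dominating function), at the cost of an extra reduction; the paper's is marginally more economical. One genuine improvement on your side: you make explicit the affine identity $Y^{\QQ_\eps}_t=(1-\eps)\yqy_t+\eps\yq_t$ $(\mu\otimes\PP)$-a.e., via Proposition \ref{charge}(vii) and Proposition \ref{pryq}(a) together with the diffuseness of $\mu$; the paper uses this silently when it factors out $\eps$, so your justification is a useful clarification.
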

\begin{proof} For $y>0$, $\eps\in (0,1)$ and
$\qe=(1-\eps)\hqy+\eps\QQ$, the optimality of $\hq^y$ implies
\begin{eqnarray*}
0&\leq&
\eit{\left(V(t,y\yqe_t)-V(t,y\yqy_t)\right)}+y\scl{\qe-\hqy}{\EN_T}\\
&\leq& \eit{y(\yqy_t-\yqe_t)I(t,y\yqe_t)}+y\scl{\qe-\hq}{\EN_T}\\
&=& \eps y \left(
\eit{(\yqy_t-\yq_t)I(t,y\yqe_t)}+\scl{\hqy-\QQ}{\EN_t}\right).
\end{eqnarray*}
Since
\[ \left( (\yq_t-\yqy_t)I(t,y\yqe_t)\right)^-\ \leq\  \yqy_t I(t,y\yqe_t)
\ \leq\  \yqy_t I(t,y(1-\eps)\yqy_t),\] we can follow the same
reasoning as in Lemma \ref{finv} to show that the last term is
dominated by an random process  on $\Omega\times [0,T]$ which is
$(\mu\otimes\PP)$-integrable. Now we can let $\eps\to 0$ and apply
Fatou's lemma, to obtain the stated inequality.
\end{proof}

\subsection{Differentiability of the value functions}
We turn our attention the the differentiability properties of the
value functions.
\begin{prop} \lprint{vprime} The dual value function ${\mathfrak V}(\cdot)$ is strictly convex and continuously
differentiable on $\R_+$; its derivative is given by
\[ {\mathfrak V}'(y)=\scl{\hqy}{\EN_T}-\eit{\yqy I(t,y\yqy)}.\]
\end{prop}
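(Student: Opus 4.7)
The plan is to compute the derivative by an envelope-type argument, exploiting the optimality of $\hqy$ to squeeze the right and left difference quotients against a common limit, and then obtain $C^1$-regularity and strict convexity as byproducts.

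First, I would establish the formula for ${\mathfrak V}'$ via a one-sided comparison. For any $y,y'>0$, plugging $\hqy\in\DD$ into the dual functional at $y'$ gives ${\mathfrak V}(y')-{\mathfrak V}(y)\leq J(y',\hqy)-J(y,\hqy)$. Since $y\mapsto J(y,\hqy)$ is smooth (the integrand $V(t,y\yqy_t)$ is smooth in $y$ with $\partial_y V(t,y\yqy_t)=-\yqy_t I(t,y\yqy_t)$, and the endowment term is linear), dividing by $y'-y$ and sending $y'\downarrow y$ yields
\[ {\mathfrak V}'(y^+)\ \leq\ \partial_y J(y,\hqy)\big|_{y}\ =\ \scl{\hqy}{\EN_T}-\eit{\yqy_t\,I(t,y\yqy_t)}, \]
while sending $y'\uparrow y$ reverses the inequality to give ${\mathfrak V}'(y^-)\geq \partial_y J(y,\hqy)$. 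Combined with the basic convex-analytic bound ${\mathfrak V}'(y^-)\leq{\mathfrak V}'(y^+)$, both one-sided derivatives coincide and equal the claimed expression. The exchange of differentiation and the $\mu\otimes\PP$-integral is justified by dominated convergence on the difference quotient: the monotonicity of $I(t,\cdot)$ controls $\yqy_t I(t,(y\pm\eps)\yqy_t)$, and integrability of this bound is obtained by the same calibration argument used in Lemma \ref{finv} (picking $\gamma\in\Gamma_3\cap[\mathrm{AE}[U],1)$ and using Proposition \ref{ae1} to dominate on $\{y\yqy_t\leq y_0\}$, with the complementary set handled by boundedness).

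Continuous differentiability of ${\mathfrak V}$ on $(0,\infty)$ is then automatic: a convex function differentiable everywhere on an open interval is $C^1$ there.

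For strict convexity, I would use a renormalized mixing construction exploiting Proposition \ref{charge}(vii). Fix $y_1<y_2$, $\lambda\in(0,1)$, set $y_\lambda=\lambda y_1+(1-\lambda)y_2$, and define
\[ \QQ_\lambda\ \triangleq\ \tfrac{\lambda y_1}{y_\lambda}\,\hat\QQ^{y_1}+\tfrac{(1-\lambda)y_2}{y_\lambda}\,\hat\QQ^{y_2}\ \in\ \DD. \]
By additivity of the regular-part operator and Proposition \ref{pryq}(a), we have $y_\lambda Y^{\QQ_\lambda}=\lambda y_1 Y^{\hat\QQ^{y_1}}+(1-\lambda)y_2 Y^{\hat\QQ^{y_2}}$ holding $(\mu\otimes\PP)$-a.e., together with the exact identity $y_\lambda\scl{\QQ_\lambda}{\EN_T}=\lambda y_1\scl{\hat\QQ^{y_1}}{\EN_T}+(1-\lambda)y_2\scl{\hat\QQ^{y_2}}{\EN_T}$. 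Strict convexity of $V(t,\cdot)$ then gives
\[ {\mathfrak V}(y_\lambda)\ \leq\ J(y_\lambda,\QQ_\lambda)\ <\ \lambda {\mathfrak V}(y_1)+(1-\lambda){\mathfrak V}(y_2), \]
provided $y_1 Y^{\hat\QQ^{y_1}}$ and $y_2 Y^{\hat\QQ^{y_2}}$ differ on a set of positive $(\mu\otimes\PP)$-measure. This nondegeneracy is the one real obstacle: if the two processes coincided a.e., substituting into Lemma \ref{difv} (applied at both $y_1$ and $y_2$) would force $\scl{\hat\QQ^{y_1}}{\EN_T}=\scl{\hat\QQ^{y_2}}{\EN_T}$, whence $J(y_1,\hat\QQ^{y_1})$ and $J(y_2,\hat\QQ^{y_2})$ become inconsistent with the already-derived formula ${\mathfrak V}'(y)=\scl{\hqy}{\EN_T}-\eit{\yqy_t I(t,y\yqy_t)}$, since the right-hand side would then be identical at $y_1$ and $y_2$ while forcing $I(t,y_1\yqy_t)=I(t,y_2\yqy_t)$, contradicting strict monotonicity of $I(t,\cdot)$. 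Ruling out this degeneracy cleanly is the step I expect to require the most care; once it is dispatched, strict convexity of ${\mathfrak V}$ follows, completing the proof.
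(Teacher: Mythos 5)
Your envelope/squeeze argument for the derivative formula is exactly the paper's: fix $y$, let $h(z)=J(z,\hqy)$, note $h\geq{\mathfrak V}$ with equality at $z=y$, so $\Delta^-h(y)\leq\Delta^-{\mathfrak V}(y)\leq\Delta^+{\mathfrak V}(y)\leq\Delta^+h(y)$, and then compute both one-sided derivatives of $h$. The paper performs the interchange of limit and integral by monotone convergence on one side and by dominated convergence calibrated via $\Gamma_4$ of Proposition \ref{ae1} (as in Lemma \ref{finv}) on the other, which is the mechanism you invoke; and continuous differentiability from convexity plus everywhere-differentiability is the same closing remark.

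Where you diverge is strict convexity, which the paper dispatches in a single sentence. Your renormalized mixing $\QQ_\lambda=\frac{\lambda y_1}{y_\lambda}\hq^{y_1}+\frac{(1-\lambda)y_2}{y_\lambda}\hq^{y_2}$, with $y_\lambda Y^{\QQ_\lambda}=\lambda y_1 Y^{\hq^{y_1}}+(1-\lambda)y_2Y^{\hq^{y_2}}$ holding $(\mu\otimes\PP)$-a.e.\ by Propositions \ref{charge}(vii) and \ref{pryq}(a), is the right construction for convexity, and you are right that the delicate point is excluding the degeneracy $y_1Y^{\hq^{y_1}}=y_2Y^{\hq^{y_2}}$ a.e. However, the exclusion argument you sketch does not go through. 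Applying Lemma \ref{difv} at $y_1$ with $\QQ=\hq^{y_2}$, and at $y_2$ with $\QQ=\hq^{y_1}$, using $Y^{\hq^{y_2}}=\frac{y_1}{y_2}Y^{\hq^{y_1}}$ and $I(t,y_2Y^{\hq^{y_2}}_t)=I(t,y_1Y^{\hq^{y_1}}_t)$ a.e., one obtains two opposing inequalities that collapse to the single identity $\scl{\hq^{y_1}}{\EN_T}-\scl{\hq^{y_2}}{\EN_T}=\frac{y_2-y_1}{y_2}\,\eit{Y^{\hq^{y_1}}_tI(t,y_1Y^{\hq^{y_1}}_t)}$, not the equality $\scl{\hq^{y_1}}{\EN_T}=\scl{\hq^{y_2}}{\EN_T}$ you assert. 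Substituting into the derivative formula then gives ${\mathfrak V}'(y_1)={\mathfrak V}'(y_2)$, which is affinity on $[y_1,y_2]$ but not a contradiction. Likewise, the claimed clash with ``strict monotonicity of $I(t,\cdot)$'' is vacuous, since $I$ is being evaluated at the same argument on both sides by hypothesis. So, while the main body of your proof coincides with the paper's, the nondegeneracy step contains a genuine gap and needs a different argument.
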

\begin{proof}
The fact that ${\mathfrak V}(\cdot)$ is strictly convex follows
from the strict convexity of $V(t,\cdot)$. Therefore, to show that
${\mathfrak V}(\cdot)$ is continuously differentiable, it is
enough (by convexity) to show that its derivative exists everywhere
on $(0,\infty)$. We start by fixing $y>0$, and defining the function
\[ h(z)\triangleq\eit{V(t,z\yqy_t)}+z\scl{\hqy}{\EN_T},\ \  z>0\]
This function is convex and, by definition of the optimal
solution $\hqy$ of the dual problem, we have $h(z)\geq {\mathfrak
V}(z)$ for all $z>0$ and $h(y)={\mathfrak V}(y)$. Again by
convexity, we obtain
\[ \Delta^- h(y)\leq \Delta^- {\mathfrak V}(y)\leq\Delta^+
{\mathfrak V}(y)\leq\Delta^+ h(y),\] where $\Delta^{+}$ and
$\Delta^-$ denote  right- and left-derivatives, respectively. Now
\begin{eqnarray*}
\Delta^+ h(y)&=& \lim_{\eps\to 0} \frac{h(y+\eps)-h(y)}{\eps} \ =\
\lim_{\eps\to 0} \frac{1}{\eps}\, \eit{
V(t,(y+\eps)\yqy_t)-V(t,y\yqy_t)}+ \scl{\hqy}{\EN_T}\\&\leq&
\liminf_{\eps\to 0} \, \left( -\frac{1}{\eps} \right) \, \eit{\eps \yqy_t
I(t,(y+\eps)\yqy_t)}+\scl{\hqy}{\EN_T}\\ &=& -\eit{\yqy_t
I(t,y\yqy_t)}+\scl{\hqy}{\EN_T}
\end{eqnarray*} by the Monotone Convergence Theorem. Similarly, we get
\[ \Delta^- h(y)\ \geq\  \limsup_{\eps\to 0 } \EE\left[ -\int_0^T \yqy_t
I(t,(y-\eps)\yqy_t)\,\mu(dt)\right]+<\hqy,\EN_t>.\] Let $y_0$  be
the constant from $\Gamma_4$, Lemma \ref{ae1}, corresponding to
some $\mathrm{AE}[U]\leq\gamma<1$ a.s. Then
\begin{eqnarray*}| \yqy_t I(t,(y-\eps)\yqy_t)|&\leq& |\yqy_t
I(t,(y-\eps)\yqy_t)| \,\inds{\yqy_t\leq y_0/y}\  +\  |\yqy_t
I(t,(y-\eps)\yqy_t)| \,\inds{\yqy_t> y_0/y}.\end{eqnarray*} We fix
$\eps_0$ and observe that for $\eps<\eps_0$, by Lemma \ref{ae1},
the second part is dominated by \begin{equation}\lprint{star}
\frac{1}{y-\eps_0}\frac{\gamma}{1-\gamma} V(t,(y-\eps_0)\yqy_t) \
\leq\  \frac{1}{y-\eps_0}\frac{\gamma}{1-\gamma}
CV(t,y\yqy_t),\end{equation} for some constant $C$. This last
expression is in $L^1(\mu\otimes\PP)$, by finiteness of
${\mathfrak V}(\cdot)$. On the other hand, the first part in
(\ref{star}) is dominated by $K_1(\frac{y-\eps_0}{y}y_m)\yqy_t$,
which is in $L^1(\mu\otimes\PP)$ by the supermartingale property
of $\yqy$. Having prepared the ground for the Dominated
Convergence Theorem, we can let $\eps\to 0$ and obtain
\[ \Delta^- h(y)\ \geq\  <\hqy,\EN_T>-\eit{\yqy_t I(t,y\yqy_t)},\]
completing the proof of the proposition.
\end{proof}

\begin{lem} The dual value function ${\mathfrak V}(\cdot)$ has the following asymptotic
behavior: \begin{enumerate}
\item[(i)] ${\mathfrak V}'(0+)=-\infty$,
\item[(ii)] ${\mathfrak V}'(\infty)\in[\inf_{\QQ\in\DD} <\QQ,\EN_T>,
\sup_{\QQ\in\DD} <\QQ,\EN_T>]$.
\end{enumerate}
\end{lem}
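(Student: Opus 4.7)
The plan is to combine the explicit formula
$$\mathfrak{V}'(y) = \scl{\hqy}{\EN_T} - \eit{\yqy_t \, I(t, y\yqy_t)}$$
from Proposition \ref{vprime} with the variational inequality of Lemma \ref{difv} and the supermartingale structure of elements of $\yd$. Since $\EN_T\in\linf_+(\PP)$, the first term is uniformly bounded in absolute value by $\norm{\EN_T}_{\linf}$, so the asymptotic behavior of $\mathfrak{V}'$ is governed entirely by the ``optimal consumption'' integral $\Phi(y):=\eit{\yqy_t\, I(t, y\yqy_t)}$.

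For part (i), fix any $\QQ_0 \in \MM$; the corresponding $Y^{\QQ_0}$ is a strictly positive martingale, so $Y^{\QQ_0}>0$ $\,(\mu\otimes\PP)$-a.e. Applying Lemma \ref{difv} with $\QQ = \QQ_0$ yields
$$\Phi(y)\ \geq\ \eit{Y^{\QQ_0}_t \, I(t, y\yqy_t)}\, -\, 2\norm{\EN_T}_{\linf}.$$
Because each $\yqy$ is a nonnegative supermartingale with $\yqy_0\leq 1$, we have $\EE[\yqy_t]\leq 1$, hence $\eit{y\yqy_t}\leq y$, so $y\yqy\to 0$ in $L^1(\mu\otimes\PP)$ as $y\to 0^+$. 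For any sequence $y_n\to 0^+$, a subsequence $y_{n_k}$ yields $y_{n_k} Y^{\hat{\QQ}^{y_{n_k}}}\to 0$ $\,(\mu\otimes\PP)$-a.e.; the Inada condition $\pd U(t, 0^+) = \infty$ of Definition \ref{DefUtility} (equivalently, $I(t,0^+)=\infty$) combined with Fatou's lemma on the nonnegative integrand $Y^{\QQ_0}_t\, I(t, y_{n_k} Y^{\hat{\QQ}^{y_{n_k}}}_t)$ forces $\Phi(y_{n_k})\to +\infty$. Hence $\mathfrak{V}'(y_{n_k})\to -\infty$, and monotonicity of $\mathfrak{V}'$ (a consequence of convexity of $\mathfrak{V}$ from Proposition \ref{vprime}) extends this to $\mathfrak{V}'(0^+)=-\infty$.

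For part (ii), the upper bound is immediate from Lemma \ref{difv}: applied to any $\QQ\in\DD$, it gives $-\mathfrak{V}'(y) \geq -\scl{\QQ}{\EN_T}$, so $\mathfrak{V}'(y) \leq \inf_{\QQ\in\DD}\scl{\QQ}{\EN_T}\leq \sup_{\QQ\in\DD}\scl{\QQ}{\EN_T}$ for all $y>0$. For the matching lower bound $\mathfrak{V}'(\infty) \geq \inf_{\QQ\in\DD}\scl{\QQ}{\EN_T}$, I would invoke the convex-analytic identity $\mathfrak{V}'(\infty) = \lim_{y\to\infty}\mathfrak{V}(y)/y$ (valid for any convex differentiable function) together with the separation
$$\mathfrak{V}(y)\ \geq\ \inf_{\QQ\in\DD}\vq{y}{\QQ}\, +\, y\inf_{\QQ\in\DD}\scl{\QQ}{\EN_T}.$$
The minorant bound $V(t,\cdot)\geq\uv(\cdot)$ of Proposition \ref{conjuv}, applied under Jensen's inequality for the convex decreasing $\uv$ and the supermartingale bound $\eit{yY^\QQ_t}\leq y$, produces $\vq{y}{\QQ}\geq\uv(y)$ uniformly in $\QQ\in\DD$. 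Since $\uv'(y) = -(\uu')^{-1}(y)\to 0$ as $y\to\infty$ by Inada, we have $\uv(y)/y\to 0$; dividing the displayed inequality by $y$ and passing to the limit then yields the desired lower bound.

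The principal obstacle is the lower bound in (ii): a direct attack on $\Phi(y)$ as $y\to\infty$ is awkward because $\hqy$ varies with $y$ without an obvious limit, and achieving $L^1(\mu\otimes\PP)$-control of the product $\yqy\cdot I(t, y\yqy)$ appears to demand delicate use of the reasonable-asymptotic-elasticity bounds of Proposition \ref{ae1}. The separation-plus-asymptotic-slope argument sidesteps this entirely by converting the pointwise question about $\Phi$ into a global-growth question about $\mathfrak{V}$, handled cleanly by Jensen's inequality and the minorant of Proposition \ref{conjuv}.
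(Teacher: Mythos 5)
Your proof is correct, and both halves depart genuinely from the paper's argument.

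For (i), the paper splits into cases according to whether $\uv(0+)$ is finite, uses the Jensen bound $\mathfrak{V}(y)\geq\uv(y)$ in the infinite case, and in the finite case estimates $-\mathfrak{V}'(y)\geq\frac{\mathfrak{V}(0+)-\mathfrak{V}(y)}{y}\geq\frac{-D-\rho y}{y}+\eit{\yq_t\oi(y\yq_t)}$ and appeals to monotone convergence as $y\downarrow 0$; this leaves a competition between $-D/y\to-\infty$ and the integral $\to\infty$ that the paper does not spell out. Your route sidesteps that tension completely: the variational inequality of Lemma~\ref{difv} against a fixed reference $\QQ_0\in\MM$ produces the lower bound $\Phi(y)\geq\eit{Y^{\QQ_0}_tI(t,y\yqy_t)}-2\norm{\EN_T}_{\linf}$ with only a bounded error, and then $\EE[y\yqy_t]\leq y$ gives $L^1$-convergence to zero, a.e.\ convergence on a subsequence, and Fatou with the Inada condition (uniform via $I(t,y)\geq K_2^{-1}(y)$ of Definition~\ref{DefUtility}(2)) forces the integral to $\infty$; monotonicity of $\mathfrak{V}'$ finishes. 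This is cleaner and avoids any indeterminate cancellation.

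For (ii), the paper works entirely through $\mathfrak{V}'(\infty)=\lim_{y\to\infty}\mathfrak{V}(y)/y$, sandwiches $\mathfrak{V}(y)/y$ between $L_y+\inf_\QQ\scl{\QQ}{\EN_T}$ and $L_y+\sup_\QQ\scl{\QQ}{\EN_T}$ with $L_y=\frac{1}{y}\inf_\QQ\vq{y}{\QQ}$, and proves $L=\lim L_y=0$ from two sides (the no-endowment dual value function $\mathfrak{V}_0$ for $L\leq 0$, a linear majorant $-V(t,y)\leq C(\eps)+\eps y$ for $L\geq-\eps$). Your lower bound is in the same spirit as the paper's proof that $L\geq 0$ (Jensen against the minorant $\uv$, plus $\uv(y)/y\to\uv'(\infty)=0$), but you only need a one-sided $\liminf$ estimate rather than the full limit $L=0$. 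Your upper bound, by contrast, is an entirely different argument: applying Lemma~\ref{difv} and dropping the nonnegative term $\eit{Y^{\QQ}_tI(t,y\yqy_t)}$ gives the pointwise inequality $\mathfrak{V}'(y)\leq\scl{\QQ}{\EN_T}$ for every $\QQ\in\DD$ and every $y>0$, hence $\mathfrak{V}'(y)\leq\inf_{\QQ\in\DD}\scl{\QQ}{\EN_T}$ uniformly in $y$. This is sharper than what the paper proves (it pins $\mathfrak{V}'(\infty)$ to the left endpoint $\inf_{\QQ\in\DD}\scl{\QQ}{\EN_T}$ rather than merely locating it in the interval) and does not require the $L=0$ machinery at all. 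The trade-off is that your approach leans more heavily on the first-order optimality condition (Lemma~\ref{difv}) and the differentiability already established in Proposition~\ref{vprime}, whereas the paper's slope argument is more self-contained and tracks the linear growth of $\mathfrak{V}$ directly.
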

\begin{proof}\nopagebreak \ \nopagebreak
\begin{enumerate}
\item[(i)]  Suppose first there is a minorant $\uv(\cdot)$ of $V(\cdot,\cdot)$ such that
$\uv(0+)=\infty$.  Letting $y\to 0$ in (\ref{Jens}),  we get
${\mathfrak V}(0+)=\infty$ and, by convexity, ${\mathfrak
V}'(0+)=-\infty$.

In the case when $\uv(0+)<\infty$ for each minorant $\uv(\cdot)$
of $V(\cdot,\cdot)$, we can easily construct a majorant
$\ov(\cdot)$ such that  $\ov(0+)<\infty$, using the properties of
finctions $K_1$ and $K_2$ from Definition \ref{DefUtility}. We
pick such a majorant $\ov(\cdot)$, a minorant $\uv(\cdot)$, set
$\oi(\cdot)=-\ov'(\cdot)$, $D=\ov(0+)-\uv(0+)$, and choose
$\QQ\in\DD$. Then, with $\rho=\norm{\EN_T}_{\linf}$,
\begin{eqnarray*}
-{\mathfrak V}'(y)&\geq& \frac{{\mathfrak V}(0+)-{\mathfrak
V}(y)}{y}\ \geq\ \frac{1}{y}\left[
(\uv(0+)-\ov(0+))+\ov(0+)-{\mathfrak V}(y)\right]\\ &\geq&
\frac{-D-\rho y}{y}+\frac{\ov(0+)-\eit{\ov(y\yq_t)}}{y}\\&\geq&
\frac{-D-\rho y}{y}+\eit{\yq_t\oi(y\yq_t)}\longrightarrow\infty,\
\ \text{as $y\to\infty$,}
\end{eqnarray*}
by the Monotone convergence theorem.
\item[(ii)]
By l'H\^ opital's rule we have
\begin{eqnarray*}
{\mathfrak V}'(\infty)  &=&  \lim_{y\to\infty}\frac{{\mathfrak
V}(y)}{y}\ =\ \lim_{y\to\infty}\frac{\inf_{\QQ\in\DD} \left(
\eit{V(t,y\yq_t)}+ y<\QQ, \EN_T>\right) }{y}\\ & \in & \left[L +
\inf_{\QQ\in\DD}<\QQ, \EN_T>,\  L+ \sup_{\QQ\in\DD}<\QQ,
\EN_T>\right],
\end{eqnarray*}
where  $L\triangleq\lim_{y\to\infty}\frac{1}{y}\inf_{\QQ\in\DD}
\eit{V(t,y\yq_t)}$. From the Definition \ref{DefUtility} of the
utility function
 we
read $\partial_2 V(t,y)\leq -(K_1)^{-1}(y)\to 0$ when
$y\to\infty$, so for an $\eps>0$ we can find a constant $C(\eps)$
such that  $-V(t,y)\leq C(\eps)+\eps y$ for all $t\in[0,T]$ and
all $y>0$. To finish the proof, we denote by ${\mathfrak
V}_0(\cdot)$ the (strictly convex, decreasing) value function of
the dual optimization problem (\ref{DP}) when $\EN_T\equiv 0$.
Then the decrease of ${\mathfrak V}_0(\cdot)$ and L'H\^ opital's
rule imply
\begin{eqnarray*}
0&\leq&-{\mathfrak V}_0'(\infty)=\lim_{y\to\infty}
\frac{-{\mathfrak V}_0(y)}{y}=\lim_{y\to\infty} \sup_{\QQ\in\DD}
\frac{1}{y}\eit{-V(t,y\yq_t)}=-L\\&\leq&
\lim_{y\to\infty}\sup_{\QQ\in\DD}\frac{1}{y}\eit{(C(\eps)+\eps
y\yq_t)}\ \leq\ \lim_{y\to\infty} \EE\int_0^T
(\frac{C(\eps)}{y}+\eps)\,\mu(dt)=\eps .
\end{eqnarray*}
Consequently, $L=0$, and the claim follows.
\end{enumerate}\vskip-0.6cm
\end{proof}

\subsection{Proof of the Main Theorem \ref{MainTheorem}}
In this subsection we combine the preceding lemmas and
propositions, to complete the proof of Theorem \ref{MainTheorem}.
 \begin{itemize}
\item[(i)] By the concavity of $U(t,\cdot)$ and the Standing Assumption
\ref{StandingAssumption}, we   deduce that ${\mathfrak
U}(x)<\infty$ for any $x>0$. For $x>0$ we define $c(t)\triangleq
x,\ \forall\,t\in [0,T]$. Then $c\in\AA^{\mu}(x+\EN)$, because the
constant consumption-rate process $c(\cdot)\equiv x$  can be
financed by the trivial portfolio $H\equiv 0$ and initial wealth
only. Since
\begin{eqnarray*} \hskip0.9cm{\mathfrak U}(x) \,\geq\, \eit{U(t,c(t))}=\eit{U(t,x)}\,
\geq\, \eit{\uu(x)}=\uu(x)\,>\,-\infty,\end{eqnarray*} we conclude
that $|{\mathfrak U}(x)|<\infty$ for all $x>0$. The assertion that
$|{\mathfrak V}(y)|<\infty$ for all $y>0$ is the content of Lemma
\ref{finv}.
\item[(ii)] ${\mathfrak V}(\cdot)$ is continuously differentiable by Proposition
\ref{vprime}. From the conjugacy relation in Lemma \ref{conj} and
the properties of convex conjugation (see Theorem 26.5 in
\cite{Roc70}), we deduce the continuous differentiability of
${\mathfrak U}(\cdot)$.
\item[(iii)] Follows from Lemma
\ref{conj} and the properties of convex conjugation (see Theorem
12.2. in \cite{Roc70}).
\item[(iv)] The assertion is a direct consequence of Lemma
\ref{vprime} and the properties of convex conjugation (see Theorem
26.5. in \cite{Roc70} ).
\item[(vi)] Follows from Lemma \ref{vprime}.
\item[(v)] The dual problem has an essentially unique solution
$\hq^y\in\DD$ for any $y>0$, by Proposition \ref{exsol} and Remark
\ref{finvy}. To establish the result for the primal problem, we
pick $x>0$, a solution $\hq^y$ of the dual problem corresponding
to $y={\mathfrak U}'(x)$ and define $\hc^x(t)\triangleq
I(t,y\yqy_t),$ for all $t\in [0,T]$. Then the relation
$-{\mathfrak V}'(y)=({\mathfrak U}'(\cdot))^{-1}(y)$, $y>0$ (see
\cite{Roc70}, Theorem 26.6) and  Proposition \ref{vprime} give \[
\eit{\hc^x(t)\yqy_t}=-{\mathfrak
V}'(y)+\scl{\hq^y}{\EN_T}=x+\scl{\hq^y}{\EN_T},\] so for any
$\QQ\in\DD$, by Proposition \ref{difv}, \ea{
\eit{\hc^x(t)\yq_t}&\leq&\eit{\hc^x(t)\yqy_t} +
\scl{\QQ}{\EN_T}-\scl{\hq^y}{\EN_T} = x+\scl{\QQ}{\EN_T}.} Thus
$\hc^x(\cdot)\in\AA(x+\EN)$ by the characterization of admissible
consumption processes in Proposition \ref{OU}.

Having established the admissibility of $\hc^x(\cdot)$, we note
that \ea{ \eit{U(t,\hc^x(t))}&=&\eit{V(t,y\yqy_t)} + \eit{y\yqy_t
I(t,y\yqy_t)} \\ &=& {\mathfrak V}(y)-y{\mathfrak
V}'(y)={\mathfrak U}(x),} by the conjugacy relation (iii), the
expression of the derivative of the dual value function (v), and
the definition of $y$. This closes the duality gap and proves the
optimality of $\hc^x(\cdot)$.
\end{itemize}


\def\cprime{$'$} \def\cprime{$'$}
\providecommand{\bysame}{\leavevmode\hbox
to3em{\hrulefill}\thinspace}

\end{document}